\renewcommand*{\@fnsymbol}[1]{\ifcase#1\or*\else\@arabic{\numexpr#1-1\relax}\fi}
\newacronym{scf}{SCF}{Survey of Consumer Finances}
\newacronym{dfa}{DFA}{Distributional Financial Accounts}
\newacronym{dina}{DINA}{Distributional National Accounts}
\newacronym{sde}{SDE}{stochastic differential equation}
\newacronym{pdf}{PDF}{probability density function}
\newacronym{pde}{PDE}{partial differential equation}
\newacronym{cdf}{CDF}{cumulative distribution function}
\newacronym{ccdf}{CCDF}{complementary cumulative distribution function}
\newacronym{hjb}{HJB}{Hamilton--Jacobi--Bellman}
\newacronym{aic}{AIC}{Akaike information criterion}
\newacronym{ols}{OLS}{ordinary least squares}
\newacronym{psid}{PSID}{Panel Study of Income Dynamics}
\newacronym{irs}{IRS}{Internal Revenue Service}
\newacronym{sipp}{SIPP}{Survey of Income and Program Participation}
\newacronym{nvss}{NVSS}{National Vital Statistics System}
\newacronym{gic}{GIC}{growth incidence curve}
\definecolor{Charcoal}{RGB}{47, 72, 88}
\definecolor{QueenBlue}{RGB}{51, 101, 138}
\definecolor{SlateGray}{RGB}{91, 130, 152}
\definecolor{DarkSkyBlue}{RGB}{134, 187, 216}
\definecolor{HoneyYellow}{RGB}{246, 174, 45}
\definecolor{SafetyBlazeOrange}{RGB}{242, 100, 25}
\definecolor{DarkOliveGreen}{RGB}{71, 104, 44}
\definecolor{Mahogany}{RGB}{192, 64, 0}
\definecolor{Byzantium}{RGB}{130, 2, 99}
\definecolor{MaximumGreen}{RGB}{103, 148, 54}
\definecolor{RubyRed}{RGB}{163, 22, 33}
\definecolor{Alabaster}{RGB}{241, 242, 235}
\definecolor{SpanishGray}{RGB}{150, 154, 151}
\DeclareMathOperator{\ash}{asinh}
\DeclareMathOperator{\var}{Var}
\DeclareMathOperator{\expc}{\mathds{E}}
\DeclareMathOperator{\prob}{\mathds{P}}
\renewcommand{\vec}[1]{\bm{#1}}
\newtheorem{prop}{Proposition}
\newtheorem{assump}{Assumption}
\newtheorem*{thm}{Theorem}
\begin{document}

\title{\vspace*{-1.5em}Uncovering the Dynamics \\of the Wealth Distribution}
\author{Thomas Blanchet\thanks{Paris School of Economics. An online simulator for the United States that uses the formulas developed in this paper is available at \url{https://thomasblanchet.github.io/wealth-tax/}. All the data and replication codes are available at \url{https://github.com/thomasblanchet/uncovering-wealth-dynamics}. I thank Jess Benhabib, François Bourguignon, Laurent Bach, Frank Cowell, Xavier d'Haultfœuille, Thomas Piketty, Muriel Roger, Emmanuel Saez, Gabriel Zucman, as well as numerous seminar and conference participants for helpful discussions and comments.
}}

\date{This version: \today \\ \vspace{1em}%
%
}

\maketitle
\vspace*{-1em}

\begin{abstract}
\noindent I introduce a new way of decomposing the evolution of the wealth distribution using a simple continuous time stochastic model, which separates the effects of mobility, savings, labor income, rates of return, demography, inheritance, and assortative mating. Based on two results from stochastic calculus, I show that this decomposition is nonparametrically identified and can be estimated based solely on repeated cross-sections of the data. I estimate it in the United States since 1962 using historical data on income, wealth, and demography. I find that the main drivers of the rise of the top 1\% wealth share since the 1980s have been, in decreasing level of importance, higher savings at the top, higher rates of return on wealth (essentially in the form of capital gains), and higher labor income inequality. I then use the model to study the effects of wealth taxation. I derive simple formulas for how the tax base reacts to the net-of-tax rate in the long run, which nest insights from several existing models, and can be calibrated using estimable elasticities. In the benchmark calibration, the revenue-maximizing wealth tax rate at the top is high (around 12\%), but the revenue collected from the tax is much lower than in the static case.
\end{abstract}

\newpage


\section{Introduction}


Wealth inequality has sharply increased in the United States. By combining income tax returns with macroeconomic balance sheets, \citet{saez_rise_2020} find that the share of wealth owned by the top 1\% has increased by more than 10~pp. since the late 1970s.\footnote{\citet{saez_rise_2020} is a revision of \citet{saez_wealth_2016}, which accounts for heterogeneous returns, as well as the more important role of private business wealth at the top found by \citet{smith_top_2022}. See also \citet{saez_trends_2020}.} \citet{kuhn_income_2020} find a similar trend using survey data (Figure~\ref{fig:wealth-shares-top1}).\footnote{Other estimates, notably \citet{smith_top_2022}, also find a similar increase of the top 1\% share, although they also find a somewhat more muted increase than \citet{saez_rise_2020} for the top $0.01\%$ and narrower top groups.}

\begin{figure}[ht]
    \centering
    \begin{minipage}{0.7\linewidth}
        \includegraphics[width=\linewidth]{}
        \centering\footnotesize\textit{Sources:} \citet{saez_rise_2020}, \citet{kuhn_income_2020}.
    \end{minipage}
    \caption{Top 1\% Wealth Share in the United States}
    \label{fig:wealth-shares-top1}
\end{figure}

But despite this growing amount of data documenting the historical evolution of the distribution of wealth, our understanding of the drivers of this trend remains elusive. Is it solely a consequence of the rise of labor income inequality? Or is it also the result of higher rates of return? What about capital gains? Does it have anything to do with the decline of the estate tax? Does it reflect changes to the distribution of saving rates? These are basic questions, and if we had long-run longitudinal data on income and wealth, they would be fairly easy to answer. After all, they simply ask how observable phenomena affect the process of wealth accumulation in direct, mechanical ways. But because longitudinal wealth data is scarce and limited, in practice, answering them has remained a challenge. That we don't have a straightforward understanding of such proximate causes of wealth inequality is a problem in and of itself, but it also impedes our ability to answer several related questions. It makes it more difficult to adequately calibrate economic models of the wealth distribution, which are used to investigate the deeper, underlying causes of rising inequality. It limits our understanding of the effect of widely discussed policies, such as wealth and inheritance taxation.

\paragraph{Contribution}

This paper addresses this issue by introducing a new way to decompose the evolution of wealth inequality in terms of individual-level factors, and in spite of the lack of panel data. The decomposition accounts for demography, inheritance, assortative mating, labor income, rates of return, consumption, and, importantly, mobility. It does so in a way that only requires repeated cross-sections, and therefore can be applied to historical data. The approach is tractable and allows for a transparent, visual identification of the key parameters.

I estimate this decomposition in the United States since the 1960s, and in doing so, I establish the main direct drivers of the rise of wealth inequality over that period. To that end, I use historical microdata on the distribution of income and wealth \citep{saez_rise_2020}, which I combine with a large set of data from censuses, surveys, demographic tables, and macroeconomic accounts. For example, I construct measures of assortative mating as well as age-specific marriage and divorce rates since the 1960s to estimate their effect on the wealth distribution. I also microsimulate the entire demographic history of the United States since the mid-19th century to statistically reconstruct intergenerational linkages, and combine them with distributional parameters for intergenerational wealth transmission as a way to measure the effect of inheritance and of the estate tax.

Finally, I develop a way to incorporate wealth taxation within the framework of this paper, derive a simple formula for how the wealth distribution would react to any wealth tax in the long run, and use my empirical estimates to calibrate it for the United States. The framework of the paper presents several advantages here as well. A typical difficulty when analyzing wealth or capital taxation is that the standard models lead to a menagerie of ``corner solutions,'' and the associated policy recommendations are often extreme, fragile, and hard to interpret. Small changes in parameters (e.g., an intertemporal elasticity of substitution just above or below one) lead to diametrically opposite results (e.g., an optimal tax of 0\% or 100\%) \citep{straub_positive_2020}.\footnote{This is for example the case in the reanalysis of the model of \citet{judd_redistributive_1985} by \citet{straub_positive_2020} in the case with no government spending.} In light of this situation, \citet{saez_simpler_2018} have argued for a simpler framework --- one that centers the same equity-efficiency trade-off that governs the theory of optimal labor income taxation \citep{piketty_optimal_2013} and, in practice, dominates policy considerations. In this framework, the elasticity of capital supply with respect to the net-of-tax rate directly dictates the optimal level of capital taxation.\footnote{As long as this elasticity is finite, positive capital taxes are desirable. We can interpret earlier arguments that capital should never be taxed \citep{chamley_optimal_1986,judd_redistributive_1985} as the consequence of assuming an infinite elasticity \citep{piketty_theory_2013,saez_simpler_2018}.} But finding a way to model this elasticity in a realistic yet tractable way, without facing the same degeneracy issues as earlier models, remains difficult. Here, because the model is stochastic and features mobility, it organically leads to well-behaved steady-states and finite elasticities of wealth with respect to the net-of-tax rate, under a wide range of economic behaviors, and without the need to resort to \textit{ad hoc} modeling devices such as wealth in the utility function \citep[e.g.,][]{saez_simpler_2018}. Because the continuous-time formalism is highly tractable, I can obtain simple formulas that combine and extend insights from earlier studies of progressive wealth taxation \citep[e.g.][]{jakobsen_wealth_2020,saez_progressive_2019}.

\paragraph{Overview of the Methodology}

This paper uses a stochastic, continuous-time framework, and its methodology rests on two results from stochastic calculus. First, I use {\color{QueenBlue}\citeauthor{gyongy_mimicking_1986}'s~(\citeyear{gyongy_mimicking_1986})} theorem to prove that two factors drive the dynamics of the wealth distribution: the mean and the variance of savings, conditional on wealth. This remains true even in models that feature heterogeneous agents and complex shocks: {\color{QueenBlue}\citeauthor{gyongy_mimicking_1986}'s~(\citeyear{gyongy_mimicking_1986})} theorem shows that this complexity can be marginalized out, and therefore that the evolution of the wealth distribution can, under very general conditions, be summarized by a single \gls{sde}. This \gls{sde} captures the two forces that shape the distribution of wealth: the average savings of the different parts of the distribution (a.k.a. the \textit{drift}), and the mobility across the distribution (a.k.a. the \textit{diffusion}).

Having reduced the dynamic of wealth to this simple model, I apply the \citet{kolmogorov_uber_1931} forward equation, a standard result of stochastic calculus, which is known to be extremely useful in understanding how wealth distributions get their power-law shape \citep{gabaix_power_2009}. This equation relates how wealth evolves stochastically at the individual level with how the distribution of wealth evolves deterministically in the aggregate. Traditionally, this equation is used deductively: starting from a set of parameters for how individual wealth evolves, we apply it to determine how the distribution changes. This paper argues that it is possible to use that equation \textit{inductively}. That is, looking at how the distribution changes, we can infer the parameters (the mean and the variance of savings) that govern the underlying individual wealth accumulation process.

That we can infer the individual wealth accumulation process --- including a mobility parameter --- from cross-sectional data may \textit{prima facie} be counterintuitive. This paper provides a simple explanation for that fact. Basically, mobility is a force that spreads out observations and, therefore,  flattens the wealth density locally. Consequently, the changes it exerts on the distribution depend on how flat it is to begin with. In contrast, the local effect exerted by the drift (i.e., how much wealth grows at a given point \textit{on average}) is akin to a local translation and is independent of the shape of the density. This distinction is not a mere technicality: it is the central piece of the mechanism that allows steady-state distributions to emerge in the first place, and it has empirically measurable consequences. By looking at how the local evolution of the wealth density relates to its current shape, I can therefore estimate both parameters.

I estimate the full model, while also accounting for income and the role played by demography, assortative mating, and inheritance. I show that the model correctly reproduces the past evolution of wealth inequality. When I compare my findings to the (albeit scarce and limited) longitudinal survey data on wealth that exists in the United States, I find that my results are consistent. I use the model to decompose the evolution of wealth inequality and to generate simple counterfactuals showing how wealth inequality would have evolved if certain factors (e.g., labor income inequality, capital gains) had stayed at their 1980 levels.

Finally, I include wealth taxation into the model, and derive a simple formula for how the wealth distribution would react to an arbitrary wealth tax in the long run, which connects the field of capital taxation to the theories of the wealth distribution. This formula depends on three factors: mobility across the wealth distribution, tax avoidance, and savings responses. Mobility is built into the framework and calibrated using this paper's estimate. For tax avoidance and savings responses, I rely on reduced-form expressions that depend on simple, estimable behavioral elasticities, which I calibrate using quasi-experimental evidence from the literature \citep{brulhart_taxing_2016,seim_behavioral_2017,jakobsen_wealth_2020,zoutman_elasticity_2018,ring_wealth_2020,londono-velez_enforcing_2021}. A comprehensive tool to simulate the effect of arbitrary wealth taxes in the United States, using this paper's formulas, is available online at {\small \url{https://thomasblanchet.github.io/wealth-tax/}}.

\paragraph{Findings}

After applying the decomposition to the United States since 1962, I find that the main drivers of the rise of the top 1\% wealth share since the 1980s have been, in decreasing level of importance, higher savings at the top, higher rates of return on wealth (essentially in the form of capital gains), and higher labor income inequality. Other factors have played a minor role.

Notably, the wealth accumulation process that I estimate features a rather large heterogeneity of savings for a given level of wealth and, therefore, a large degree of mobility across the distribution. That amount of mobility is comparable to what I independently find in longitudinal survey data from the \gls{scf} and the \gls{psid}, and in a panel of the 400 richest Americans from Forbes. This finding cuts against the ``dynastic view'' of wealth accumulation, and suggests that existing models of the wealth distribution underestimate the extent of wealth mobility. This has several implications.

First, the large heterogeneity of savings implies that high levels of wealth inequality can be sustained even if the wealthy consume, on average, a sizeable fraction of their wealth every year. Second, high mobility offers a straightforward answer to a puzzle pointed out by \citet{gabaix_dynamics_2016}: that the usual stochastic models that can explain steady-state levels of inequality are unable to account for the speed at which inequality has increased in the United States.\footnote{\citet{gabaix_dynamics_2016} primarily focus on income inequality, but their formal findings apply to wealth as well.} But high mobility leads to faster dynamics and, therefore, can account for the dynamic of wealth we observe in practice.\footnote{This solution differs from the solutions proposed by \citet{gabaix_dynamics_2016}, which involve introducing temporary changes to the drift to accelerate dynamics at the beginning of a transition.}

The study of wealth taxation also yields new insights. First, the degree of mobility across the wealth distribution --- which explicitly appears in the formula I derive --- generates a mechanical effect that fixes many degeneracy issues from deterministic models. And it is a crucial determinant of the response of the wealth stock to a tax. Indeed, without mobility, an annual wealth tax would affect the same people repeatedly, and absent behavioral responses, the tax base would eventually go to zero. With mobility, new, previously untaxed wealth continuously enters the tax base, which avoids that phenomenon. \citet{saez_progressive_2019} study this effect, but within a very restrictive setting --- my formula extends and operationalizes their idea to more complex and realistic cases, and this affects some of the conclusions.\footnote{\citet{saez_progressive_2019} consider a wealth tax at a constant \textit{average} rate above a threshold. My formulas apply to arbitrary wealth taxes, including the more common case of a constant \textit{marginal} rate within a bracket.} In light of this result, the significant mobility that I find suggests that a wealth tax could raise more revenue, all other things being equal. Second, even in the simplest model, the reduced-form elasticity of wealth with respect to the net-of-tax rate is nonconstant --- much higher for low rates than higher ones. (This directly results from the fact that the complete formula depends on both the average and the marginal tax rate.) As a consequence, the revenue-maximizing rate for a linear wealth tax above \$50m is high (12\% in my benchmark calibration), yet the revenue raised from said tax in the long run is quite low, only about a quarter of what the tax would raise in the absence of response.\footnote{If the elasticity were constant, a 12\% revenue-maximizing rate would be associated with a revenue reduction of 60\% compared to the inelastic case, as opposed to 75\% here.} Third, my simulations suggest that the effects of an annual wealth tax differ in fundamental ways from those of an inheritance tax of seemingly comparable magnitude. I show in a simple model that this distinction is driven by the lifetime of generations: the longer people live, the less potent the estate tax compared to a wealth tax.

\section{Literature Review}

\subsection{Models of the Wealth Distribution}

The conventional way of studying the determinants of wealth inequality is to construct a model of an economy where people face various individual shocks. The prototypical model for this line of work comes from \citet{aiyagari_uninsured_1994} and \citet{huggett_risk-free_1993}, who study the distribution of wealth in \citet{bewley_permanent_1977} type models in which people face idiosyncratic uninsurable labor income shocks. In these models, people accumulate wealth for precautionary or consumption smoothing purposes. But these motives quickly vanish as wealth increases, so they cannot rationalize large wealth holdings. As a result, these models notoriously underestimate the real extent of wealth inequality.

The literature, then, has searched for ways to extend these models and match the data. One possibility involves additional saving motives for the rich, such as a taste for wealth \citep{carroll_why_1998} or bequests \citep{de_nardi_wealth_2004}. Another solution is to introduce idiosyncratic stochastic rates of returns in the form of entrepreneurial risk \citep{quadrini_entrepreneurship_2000,cagetti_entrepreneurship_2006,benhabib_distribution_2011}. Yet another option introduces heterogeneous shocks to the discount rate \citep{krusell_income_1998}.

Recent contributions have zoomed in on specific features and issues. Following the finding from \citet{bach_rich_2020} (in Sweden) and \citet{fagereng_heterogeneity_2020} (in Norway) that the wealthy enjoy higher returns, several papers have examined this mechanism in the United States. \citet{xavier_wealth_2021} finds that heterogeneous returns play a critical role in explaining the current levels of wealth inequality. \citet{cioffi_heterogeneous_2021} embeds a portfolio choice model in a wealth accumulation model. In this paper, because of heterogeneous portfolio composition, households have different exposures to aggregate risks. The wealthy invest more in high-risk, high-return assets like equity. As a result, \citet{cioffi_heterogeneous_2021} finds that abnormally high stock market returns have been a critical driver of the rise in wealth inequality in the United States. \citet{hubmer_comprehensive_2018}, on the other hand, use a different model and conclude that changes in tax progressivity, rather than changes in rates of return, explain the rise in wealth inequality.

In general, models of the wealth distribution seek to explain two facts: that wealth inequality is high (that is, higher than income inequality) and that its top tail is shaped like a power law. Existing models differ across many dimensions, but when it comes to explaining these facts, they usually share a common mechanism: namely, that wealth accumulates through a series of individual multiplicative random shocks, with frictions at the bottom. Assuming that the shocks have adequate properties, such models can generate both high wealth inequality and power-law tails. In discrete time, we can formulate this idea using the theory of \citet{kesten_random_1973} processes. Assume that the wealth $w_{it}$ of person $i$ evolves according to a linear recurrence equation with random coefficients: $w_{i,t+1}=a_{it} w_{it}+b_{it}$, where $a_{it}$ is a multiplicative shock (typically reflecting stochastic returns or preference shocks) and $b_{it}$ is a friction term (typically reflecting labor income). Then, under very general conditions, wealth admits a steady-state distribution with a power-law tail, whose fatness is determined by the properties of the multiplicative shock $a_{it}$ \citep{kesten_random_1973,grincevicius_one_1975,vervaat_stochastic_1979,goldie_implicit_1991,grey_regular_1994}. We can formulate a continuous-time version of this mechanism using stochastic calculus \citep{gabaix_power_2009}.\footnote{\citet{de_saporta_tail_2004} rigorously study the the continuous time version of \citet{kesten_random_1973} processes. In this paper I consider simpler --- and less literal --- continuous time analogs to \citet{kesten_random_1973} processes.} This formalism presents major advantages in terms of flexibility and tractability, notably because of the \citet{kolmogorov_uber_1931} equation, which establishes a straightforward relationship between the distribution of the individual random shocks and the evolution of the wealth distribution in the aggregate.

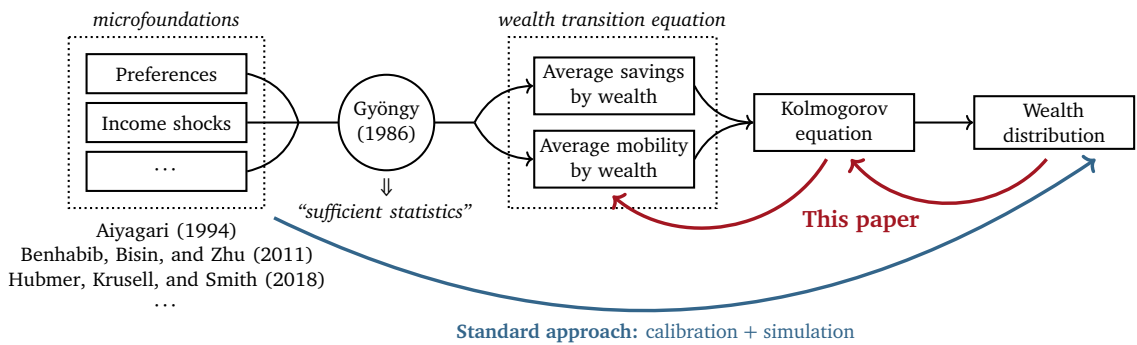
\begin{figure}[ht]
    \centering

\scalebox{0.65}{
\begin{tikzpicture}

\node[very thick,rectangle,draw=black,text width=3cm,minimum height=0.8cm,align=center] at (0, 0) {Preferences};
\node[very thick,rectangle,draw=black,text width=3cm,minimum height=0.8cm,align=center] at (0, -1) {Income shocks};
\node[very thick,rectangle,draw=black,text width=3cm,minimum height=0.8cm,align=center] at (0, -2) {$\cdots$};

\draw[very thick,dotted] (-2, 0.75) rectangle (2, -2.75);
\node[above,text width=3cm,align=center] at (0, 0.75) {\textit{microfoundations}};
\node[below,text width=8cm,align=center] at (0, -2.9) {\AtNextCite{\color{black}}\citet{aiyagari_uninsured_1994} \\ \AtNextCite{\color{black}}\citet{benhabib_distribution_2011} \\ \AtNextCite{\color{black}}\citet{hubmer_comprehensive_2018} \\ $\cdots$};

\node[very thick,right,rectangle,draw=black,text width=3cm,minimum height=0.8cm,align=center] at (16.5, -1) {Wealth \\ distribution};

\draw[line width=0.7mm, ->, color=QueenBlue] (2.2, -2.95) to[bend right=30] (19, -1.75); \node[color=QueenBlue] at (10, -5.3) {\textbf{Standard approach:} calibration + simulation};

\draw[very thick] (1.65, 0) to[bend left] (2.7, -1);
\draw[very thick] (1.65, -1) -- (6.3, -1);
\draw[very thick] (1.65, -2) to[bend right] (2.7, -1);

\draw[very thick,->] (6.3, -1) to[bend left] (7.5, -0.25);
\draw[very thick,->] (6.3, -1) to[bend right] (7.5, -1.75);

\node[very thick,right,rectangle,draw=black,text width=3cm,minimum height=0.8cm,align=center] at (7.5, -0.25) {Average savings by wealth};

\node[very thick,right,rectangle,draw=black,text width=3cm,minimum height=0.8cm,align=center] at (7.5, -1.75) {Average mobility by wealth};

\draw[very thick,dotted] (7, -2.75) rectangle (11.25, 0.75);
\node[above,text width=5cm,align=center] at (9.125, 0.75) {\textit{wealth transition equation}};

\node[very thick,right,circle,draw=black,align=center,fill=white] at (3.5, -1) {\AtNextCite{\color{black}}\citeauthor{gyongy_mimicking_1986} \\ (\AtNextCite{\color{black}}\citeyear{gyongy_mimicking_1986})};
\node[below,text width=5cm,align=center] at (4.5, -2) {$\Downarrow$ \\ \textit{``sufficient statistics''}};

\draw[very thick,->] (10.8, -0.25) to[bend right] (12, -1);
\draw[very thick] (10.8, -1.75) to[bend left] (12, -1);

\node[very thick,right,rectangle,draw=black,text width=3cm,minimum height=0.8cm,align=center] at (12, -1) {Kolmogorov \\ equation};

\draw[very thick,->] (15.3, -1) -- (16.5, -1);

\draw[line width=0.7mm, ->,color=RubyRed] (18, -1.75) to[bend left=50] (14, -1.75);
\draw[line width=0.7mm, ->,color=RubyRed] (13.5, -1.75) to[bend left=50] (9.125, -2.5);

\node[color=RubyRed, scale=1.2] at (14.2, -3) {\textbf{This paper}};




\end{tikzpicture}
}
    \caption{The Different Approaches for Studying the Wealth Distribution}
    \label{fig:models-chart}
\end{figure}

We can understand the relation of the current paper to this literature with the help of Figure~\ref{fig:models-chart}. The ``standard approach,'' represented by the blue arrow at the bottom, takes a \textit{deductive} path. Starting from a set of microfoundations, it solves and simulates the model to arrive at a wealth distribution. Assuming the model matches the data, we can use it to study policies and generate counterfactuals. Implicitly or explicitly, that approach can be broken down into several steps. First, microfoundations lead to a wealth transition equation that effectively determines the dynamics of the wealth distribution. For every level of wealth, two parameters characterize this equation: the average, and the variance of wealth growth (i.e., mobility). That these two parameters are sufficient to characterize the dynamics of the distribution can be rigorously proven using {\color{QueenBlue}\citeauthor{gyongy_mimicking_1986}'s~(\citeyear{gyongy_mimicking_1986})} theorem. Then, I can feed these parameters into the \citet{kolmogorov_uber_1931} equation, which describes how the wealth density evolves.

This paper takes the reverse path, as characterized by the two red arrows in Figure~\ref{fig:models-chart}. I start from the data on the evolution of the wealth distribution and then use the \citet{kolmogorov_uber_1931} equation to \textit{infer} the parameters of the wealth transition equation. Many different microfoundations can lead to the same wealth transition equation; therefore, in my approach, the complete model of the economy remains unknown. Yet knowing just the wealth transition equation already opens many possibilities. It is sufficient to study many mechanisms, counterfactuals, and policies. And when more complete models remain necessary, the approach makes it easier to discriminate between them.

We can, in particular, compare this paper to two recent studies that also use a primarily empirical approach within the framework of stochastic wealth accumulation models. \citet{benhabib_wealth_2019} fit a model of stochastic wealth accumulation to data for the United States. They start from an explicitly microfounded model of consumption and use the method of simulated moments to identify the parameters that best replicate observed data. These parameters determine the shape of the utility function and the process for the rate of return. There are four main differences between their approach and mine. First, they estimate consumption by estimating a two-parameter utility function, while I estimate a nonparametric profile of saving rates by wealth. Their approach is more structural and tightly parameterized; my approach is more flexible and reduced form. Second, they use direct information on wealth mobility in a given year, from the 2007--2009 panel of the \gls{scf}, to estimate their model; I validate my estimate of mobility against the \gls{scf}, but I do not use it directly. Third, their model is nonlinear and estimated using the method of moments; my approach works by fitting a univariate linear equation for each level of wealth, making the source of identification highly explicit and easy to check graphically. Fourth, their primary goal is to replicate the levels rather than the trends in wealth inequality, so their main model assumes that the economy is at its steady state. In a separate exercise, they additionally replicate trends. My approach directly reproduces both levels and trends by construction --- and in fact, I use the trends as an essential source of identification.

\citet{gomez_decomposing_2022} decomposes the evolution of top wealth shares in the United States while accounting for the role of demography and mobility. Like this paper, \citet{gomez_decomposing_2022} decomposes changes in the wealth distribution that are caused by the first moment of wealth growth rates (average savings) and by the second moment (mobility).\footnote{The decomposition in \citet{gomez_decomposing_2022} also accounts for higher moments of the distribution of wealth growth rates (i.e., skewness, kurtosis, etc.), but he finds that these have a negligible impact on the wealth distribution.} Like this paper, \citet{gomez_decomposing_2022} finds an important role for mobility. His methodology, however, requires longitudinal data, for which he uses the Forbes 400 ranking of the wealthiest Americans.\footnote{\citet{gomez_decomposing_2022} also applies his methodology without direct panel data but with a separate estimate of the variance of wealth growth. However, estimating this variance still requires some form of longitudinal data.} In contrast, I use historical data on income and wealth to directly infer this parameter, allowing me to analyze the rise of wealth inequality since 1962. In contrast, the Forbes 400 rankings started in 1982. My analysis uses the entire wealth distribution, not only the top of the tail. Finally, I also account for labor income, inheritance taxation, and assortative mating. \citet{gomez_decomposing_2022} and this paper provide complementary evidence of the dynamics of wealth accumulation and find a similar result: that mobility across the wealth distribution is sizeable and plays a crucial role in shaping the wealth distribution.

\subsection{Synthetic Saving Rates}

Another way of using wealth distribution data to study the drivers of wealth inequality is to construct \textit{synthetic saving rates} for the different parts of the wealth distribution \citep[e.g.,][]{saez_wealth_2016,kuhn_income_2020,garbinti_accounting_2021}. Synthetic savings are constructed as if each bracket of the wealth distribution was a single infinitely lived individual: for example, if the average wealth of the top 1\% is \$10m at the start of the year and \$11m at the end, then the top 1\% had \$1m in synthetic savings. As such, synthetic savings are a composite measure that combines the effects of average wealth growth with the effect of mobility. They do not distinguish between them.

Since my approach explicitly accounts for mobility, it lets me disentangle the two effects. In the spirit of \citet{gomez_decomposing_2022}, I show that synthetic savings are the sum of several terms: one that depends on average wealth growth, one that depends on mobility, and additional terms that account for other factors, such as demography. This decomposition makes it possible to model synthetic savings more realistically.

Notably, I find that the way synthetic saving rates combine the effect of mobility with other factors is endogenous to the wealth distribution itself. That is, synthetic savings will differ depending on whether inequality is high or low, even if people's actual behavior is the same. To eliminate this endogeneity, we would have to assume zero mobility in the wealth distribution.

\subsection{Wealth and Capital Taxation}

A well-known, early result in the theory of optimal taxation is that, in a standard neoclassical model, the optimal linear tax rate on capital is zero, even from the perspective of workers who own no capital \citep{chamley_optimal_1986,judd_redistributive_1985}.\footnote{Another famous rationale for not taxing capital comes from \citet{atkinson_design_1976}. In their model, there is no heterogeneity of wealth conditional on income. Therefore, assuming income can be taxed, taxing wealth provides no additional equity gains while also distorting intertemporal choices. This justification for not taxing capital would not apply to my model since wealth is heterogeneous conditional on income.} This result has been overturned in more sophisticated models which introduce, for example, uncertainty \citep{aiyagari_uninsured_1994}, incomplete markets \citep{farhi_capital_2010}, heterogeneous altruism \citep{farhi_estate_2013}, tax progressivity \citep{saez_optimal_2013} or capital accumulation by workers \citep{bassetto_redistribution_2006}. More recently, \citet{straub_positive_2020} have reanalyzed the models of \citet{chamley_optimal_1986} and \citet{judd_redistributive_1985}, and have overturned many of their conclusion: they show that significant capital taxes are in fact often desirable within these two models.

Overall, the main issue with the theory of capital taxation is that its models tend to behave erratically. Many of its results focus on edge cases and corner solutions, which are highly dependent on the specific primitives of the economy, can be hard to interpret, and imply unrealistic behavior. As a result, the theory has remained of limited use for guiding policy. For example, a common interpretation of the zero-tax result of \citet{chamley_optimal_1986,judd_redistributive_1985} is that their model implicitly assumes the capital stock to be infinitely elastic.\footnote{See \citet{auerbach_taxation_2001,piketty_theory_2013,saez_simpler_2018}, although this interpretation is not universally accepted, see \citet{straub_positive_2020}.} This assumption makes capital taxation infinitely costly, so the equity gain from taxation is never worth the efficiency cost. Such a setting arguably constitutes an extreme edge case. In more realistic models featuring a finite elasticity, the usual equity-efficiency trade-off would determine the desirability of capital taxation \citep{piketty_theory_2013,saez_simpler_2018}. In line with this view, \citet{saez_simpler_2018} ultimately argued that the long-run elasticity of the capital stock to the net-of-tax rate is a sufficient statistic for the optimal design of capital taxation and developed formulas for optimal tax rates, similar to those that exist for labor income \citep{saez_using_2001,piketty_optimal_2013}.

However, the value of that elasticity remains elusive. In the \textit{short run}, several empirical papers have used quasi-experimental settings to estimate it: \citet{seim_behavioral_2017} in Sweden, \citet{londono-velez_enforcing_2021} in Colombia, \citet{brulhart_taxing_2016} in Switzerland, \citet{zoutman_elasticity_2018} in the Netherlands and \citet{jakobsen_wealth_2020} in Denmark. With some exceptions, these elasticities tend to be small. This is consistent with the view that a government trying to raise revenue with a one-off, unexpected wealth tax could choose a very high marginal rate. But the policy-relevant parameter is the \textit{long-run} elasticity, which is far more uncertain and likely larger. Indeed, the short-run elasticity only captures tax avoidance or short-run saving responses. But over time, wealth taxes also keep people from accumulating wealth, either through mechanical (lower post-tax rates of return) or behavioral effects (lower savings). This process slowly erodes the tax base. But because it takes a long time to materialize, it is hard to cleanly identify it in the data.

This paper provides a useful framework for addressing this issue. We can start from the empirically estimated wealth transition equation, which reproduces the true evolution of the wealth distribution. Then we can modify that equation to incorporate a wealth tax, as well as an arbitrary set of behavioral responses. For this paper, I focus on two effects (tax avoidance and a decrease in savings), but the framework is flexible enough to accommodate many other extensions. Using the modified equation for the evolution of wealth, we can simulate counterfactual evolutions of the wealth distribution and therefore estimate how the tax base would react to any wealth tax. We can fully simulate the model to observe transitory dynamics. However, if we focus on the long run, we can obtain simple analytical formulas that characterize the alternative steady-state. In my model, the long-run elasticity of capital supply remains finite in all cases because of mobility, but setting the mobility parameter to zero recovers the infinite elasticity as in \citet{chamley_optimal_1986} and \citet{judd_redistributive_1985}.

We can connect the approach in this paper to two recent papers that try to assess the long-run effects of wealth taxes. \citet{jakobsen_wealth_2020} use their short-run elasticity estimates to calibrate a structural model of savings at the top. They indeed find a higher elasticity in the long run. \citet{saez_progressive_2019} consider the problem of taxing the very top of the wealth distribution (billionaires) using data from the Forbes rankings. These two papers provide models that shed different lights on the problem. \citet{jakobsen_wealth_2020} model wealth accumulation using a deterministic model of intertemporal choice. This model features standard preferences over a consumption path and a taste for end-of-life wealth (i.e., bequests). They use it to derive analytical expressions linking the long-run elasticity of wealth to the short-run elasticity and preference parameters. This model emphasizes the role of behavioral savings responses, but as it is deterministic, it does not account for the role of mobility. In contrast, \citet{saez_progressive_2019} only focus on mobility. Because they look at billionaire wealth, they assume that the role of consumption is negligible. In that model, mobility is the sole determinant of wealth elasticity in the long run. If it is low, then a wealth tax ends up taxing the same people again and again: as their wealth mechanically goes down, so does the tax base. Therefore, the elasticity of taxable wealth is high, and the ability to tax wealth is limited in the long run. However, if mobility is high, the tax base often gets renewed. Individuals are subjected to the tax for shorter periods, with new, previously untaxed wealth entering the tax base regularly: as a result, the elasticity is lower.

My model combines the mechanisms from both papers and accounts for tax avoidance. Combining the mechanisms as such substantively alters some of the conclusions. In \citet{jakobsen_wealth_2020}, there is a mechanical effect of wealth taxes, and because the model is deterministic, this mechanical effect becomes infinite in the long run. Their model can generate a finite elasticity of capital supply overall, because an infinite behavioral effect compensates for the infinite mechanical effect in the opposite direction. However, looking at these effects separately in the long run still runs into degeneracy problems.\footnote{\citet{jakobsen_wealth_2020} circumvent the issue by looking at a long but finite horizon (30 years).} In comparison, because I introduce mobility, my model remains well-behaved at the steady state even if I shut down behavioral responses. In \citet{saez_progressive_2019}, the wealth tax can only consist of a constant \textit{average} tax rate over a threshold. In comparison, my model allows for arbitrary tax schedules, including the more standard case of a constant \textit{marginal} rate above a threshold. Far above the tax threshold, the marginal and the average tax rate are similar, so their model behaves similarly to the mechanical component of my mine. Close to the threshold, however, the distinction between them matters. In particular, I find that under a purely mechanical model, even confiscatory wealth tax rates would raise a non-negligible revenue from people just above the tax threshold who recently entered the tax base. And so the Laffer curve never goes to zero. As a result, unlike \citet{saez_progressive_2019}, I conclude that the mechanical model remains insufficient to characterize revenue-maximizing tax rates: doing so requires a behavioral response.

\section{Theory}

Time is continuous, indexed by $t$. Individuals are indexed by $i$. Individual wealth $w_{it}$ evolves stochastically. I model this evolution using an Itô process, which can flexibly model most continuous-time stochastic processes. Such a process is locally characterized by two parameters: the \textit{drift} $\mu_{it}$, and the \textit{diffusion} $\sigma_{it}^2$. Over a small period of time $\dif t$, the change $\dif w_{it}$ in the value of the wealth $w_{it}$ of individual $i$ at time $t$ has mean $\mu_{it}\dif t$ and variance $\sigma_{it}^2 \dif t$. This process is represented in the form of a \gls{sde} and commonly written as follows:
\begin{equation*}
\dif w_{it} = \mu_{it} \dif t + \sigma_{it} \dif B_{it}
\end{equation*}
This section will explain how the parameters for drift and diffusion at the individual level can be connected to the aggregate evolution of the wealth distribution (while accounting for heterogeneity, demography, etc.), and therefore how they can be inferred from changes in the shape of the wealth distribution.

\subsection{Income and Consumption}

\subsubsection{Evolution of Individual Wealth}

Individual $i$ has idiosyncratic consumption, labor income, and rate of return on their wealth. Similarly to {\color{QueenBlue}\citeauthor{carroll_buffer-stock_1997}'s~(\citeyear{carroll_buffer-stock_1997})} interpretation of {\color{QueenBlue}\citeauthor{friedman_theory_1957}'s~(\citeyear{friedman_theory_1957})} permanent income hypothesis, labor income is the sum of a permanent (i.e., slow-moving) component $y_{it}$, and of transitory shocks with variance $\upsilon^2_{it}$ --- with $y_{it}$ and $\upsilon^2_{it}$ both being arbitrary bounded stochastic processes. Consumption and rates of return follow a similar model, with slow-moving components $c_{it}$ and $r_{it}$, and transitory shocks with variances $\gamma^2_{it}$ and $\phi^2_{it}$. I express all monetary quantities as a fraction of average national income, which grows at rate $g_t$. As a result, wealth $w_{it}$ evolves according to the \gls{sde}:\footnote{This formulation assumes that the transitory shocks are uncorrelated. We could account for correlated shocks by including additional covariance terms, as in Bienaymé's identity. To simplify the exposition, I focus on the uncorrelated case.}
\begin{equation}\label{eq:income-consumption}
\dif w_{it} = \underbrace{(y_{it} + (r_{it} - g_t)w_{it} - c_{it})}_{\mu_{it} \;\text{(drift)}} \dif t + \underbrace{\left(\upsilon^2_{it} + \phi^2_{it}w^2_{it} + \gamma^2_{it}\right)^{1/2}}_{\sigma_{it} \;\text{(diffusion)}} \dif B_{it}
\end{equation}
That is, individual wealth is a stochastic process with a drift $\mu_{it}$ equal to $y_{it} + (r_{it} - g_t)w_{it} - c_{it}$ and a diffusion $\sigma_{it}^2$ equal to $\upsilon^2_{it} + \phi^2_{it}w^2_{it} + \gamma^2_{it}$. Note that the economy's growth rate $g_t$ appears in the drift term due to the normalization of all the quantities by the economy's average income.

\subsubsection{Evolution of the Distribution of Wealth}

We can directly relate the evolution of individual wealth described by equation~(\ref{eq:income-consumption}) to the overall distribution of wealth. A standard result of stochastic calculus states that, if a large number of stochastic processes follow the same \gls{sde}, then the \gls{pdf} that describes the distribution of their value at a given time follows a \gls{pde} known as the \citet{kolmogorov_uber_1931} forward equation~\citep{gabaix_power_2009}. This result provides a direct way to connect the evolution of individual wealth (as in equation~(\ref{eq:income-consumption})) with the aggregate distribution of wealth (as is observed in historical data). However, I need to account for the possibility that the drift $\mu_{it}$ and the diffusion $\sigma_{it}^2$ vary across individuals $i$. To that end, I will apply a result from \citet{gyongy_mimicking_1986}, which states that it is possible to average out the heterogeneity of individual wealth processes and still retrieve the same wealth distribution in the aggregate. After applying that result, I use the \citet{kolmogorov_uber_1931} forward equation to connect equation~(\ref{eq:income-consumption}) to the evolution of the wealth distribution.

\paragraph{Reduction to a Single Equation using {\color{QueenBlue}\citeauthor{gyongy_mimicking_1986}'s~(\citeyear{gyongy_mimicking_1986})} Theorem}

In simple terms, {\color{QueenBlue}\citeauthor{gyongy_mimicking_1986}'s (\citeyear{gyongy_mimicking_1986})} theorem states the following. Consider a large number of stochastic processes $w_{it}$, each following a \gls{sde} with their own drift $\mu_{it}$ and diffusion $\sigma_{it}^2$. Then the \gls{pdf} describing the distribution of the value of these processes will behave exactly as if their drift $\mu_{it}$ and their diffusion $\sigma_{it}^2$ were replaced by the conditional expectations $\mu_t(w)=\mathds{E}[\mu_{it}|w]$ and $\sigma^2_t(w)=\mathds{E}[\sigma^2_{it}|w]$. This result, plus some basic rules of stochastic calculus, makes it possible to reduce the arbitrarily complex nature of individual wealth accumulation into a single \gls{sde} that characterizes the entire wealth distribution. I can state the following result.

\begin{prop}\label{prop:gyongy}
Let $y_t(w)$, $r_t(w)$ and $c_t(w)$ be the average labor income, rate of return and consumption, conditional on wealth $w$ at time $t$. Similarly, let $\upsilon^2_t(w)$, $\phi^2_t(w)$ and $\gamma^2_t(w)$ be the variance of labor income, rates of return and consumption, conditional on wealth $w$ at time $t$. Then, the stochastic process governed by the \gls{sde} with deterministic coefficients:
\begin{equation*}
\dif w_{it} = \underbrace{(y_t(w_{it}) + (r_t(w_{it}) - g_t)w_{it} - c_t(w_{it}))}_{\mu_t(w) \;\text{(drift)}} \dif t + \underbrace{\left(\upsilon^2_t(w_{it}) + \phi^2_t(w_{it})w^2_{it} + \gamma^2_t(w_{it})\right)^{1/2}}_{\sigma_t(w) \;\text{(diffusion, or mobility)}} \dif B_{it}
\end{equation*}
has the same marginal distribution as the process described by equation~(\ref{eq:income-consumption}).
\end{prop}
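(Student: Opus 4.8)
The plan is to obtain Proposition~\ref{prop:gyongy} as a direct application of {\color{QueenBlue}\citeauthor{gyongy_mimicking_1986}'s~(\citeyear{gyongy_mimicking_1986})} mimicking theorem, whose entire purpose is to replace the random, individual-specific coefficients of an Itô process by their conditional expectations given the current value, without altering the one-dimensional marginal laws. The only genuine work is to cast equation~(\ref{eq:income-consumption}) into the exact form the theorem requires, to check its hypotheses, and then to evaluate the relevant conditional expectations in closed form.

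First, I would set up a single probability space carrying all the idiosyncratic shocks together with a random draw of the individual index $i$ from the population, so that the scalar process $w_t$ (the wealth of a randomly selected individual) is a genuine Itô process $\dif w_t = \mu_t \dif t + \sigma_t \dif B_t$ whose marginal at each $t$ coincides with the cross-sectional wealth distribution. This reduction matters because Gyöngy's theorem is stated for \emph{one} Itô process with adapted coefficients, whereas equation~(\ref{eq:income-consumption}) describes an ensemble, each member carrying its own Brownian motion; randomizing over $i$ folds the ensemble into a single process while preserving precisely the marginal laws we care about.

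Second, I would verify the hypotheses of the theorem. Boundedness of the drift $\mu_t$ and of $\sigma_t^2 = \upsilon_t^2 + \phi_t^2 w_t^2 + \gamma_t^2$ follows from the standing assumption that $y_{it}, r_{it}, c_{it}, \upsilon_{it}^2, \phi_{it}^2, \gamma_{it}^2$ are bounded processes. The delicate point---and what I expect to be the main obstacle---is the uniform ellipticity (nondegeneracy) condition: Gyöngy's construction requires the diffusion coefficient to stay bounded away from zero, so I would argue that the transitory shock variances keep $\sigma_t^2$ uniformly positive (e.g., that $\gamma_t^2$ is bounded below by a positive constant), and I would also have to check that the conditional expectations define measurable functions of $w$, so that the mimicking SDE admits a weak solution. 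This measurability-and-ellipticity bookkeeping, rather than any calculation, is where care is needed.

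Third, with the theorem in hand, the mimicking process has drift $\expc[\mu_t \mid w_t = w]$ and squared diffusion $\expc[\sigma_t^2 \mid w_t = w]$, and I would compute these by linearity of conditional expectation. Conditioning on $w_t = w$ lets $w_t$ and the deterministic growth rate $g_t$ factor out, yielding
\begin{align*}
\expc[\mu_t \mid w_t = w] &= y_t(w) + (r_t(w) - g_t)w - c_t(w), \\
\expc[\sigma_t^2 \mid w_t = w] &= \upsilon_t^2(w) + \phi_t^2(w)w^2 + \gamma_t^2(w),
\end{align*}
which are exactly the coefficients appearing in the stated SDE. Since Gyöngy's conclusion is equality of the one-dimensional marginal distributions at every $t$, this establishes that the deterministic-coefficient process shares the marginal distribution of equation~(\ref{eq:income-consumption}), completing the proof.
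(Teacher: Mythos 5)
Your proposal is correct and follows essentially the same route as the paper: both invoke Gyöngy's mimicking theorem and then identify the mimicked drift and squared diffusion with the conditional expectations $\expc[\mu_{it}\mid w]$ and $\expc[\sigma^2_{it}\mid w]$, which reduce by linearity to the stated coefficients. The only difference in emphasis is that the paper spends a few extra lines of Itô calculus ($(\dif t)^2 = 0$, $\dif B_{it}^2 = \dif t$) to verify that these mimicked coefficients equal the conditional mean and variance of the wealth increment at order $\dif t$ (justifying the ``variance of wealth change'' reading of $\sigma^2_t(w)$), whereas you instead devote your care to the theorem's hypotheses --- folding the ensemble into a single process, ellipticity, and measurability --- which the paper takes for granted.
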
%
\begin{proof}%
See Appendix~\ref{sec:proof-gyongy}.
\end{proof}%
\noindent Proposition~\ref{prop:gyongy} reduces the dynamics of the wealth distribution to a single \gls{sde}, in which everyone with the same wealth $w$ faces the same drift $\mu_t(w)$ and the same diffusion $\sigma^2_t(w)$. In that equation, the diffusion $\sigma^2_t(w)$ becomes easily interpretable as a mobility parameter. If it were equal to zero, everyone with the same wealth would face the same wealth growth, so there would be no movement across the distribution. But when that parameter is not zero, the approach can account for the mobility across the wealth distribution, which is a sizeable phenomenon \citep{gomez_decomposing_2022}. From now on, I will therefore refer to $\sigma^2_t(w)$ as a mobility parameter.

\paragraph{The Kolmogorov Forward Equation}

Using the reduced \gls{sde} of Proposition~\ref{prop:gyongy}, I can now apply the Kolmogorov forward equation. The density $f_t$ which describes the distribution of wealth at time $t$ obeys the \gls{pde}:
\begin{equation}\label{eq:kf}
\partial_t f_t(w) = -\partial_w[\mu_t(w)f_t(w)] + \frac{1}{2}\partial^2_w[\sigma_t^2(w)f_t(w)]
\end{equation}
This equation describes the evolution of a quantity that is observable in the historical data (the density of wealth $f_t$) while connecting it to parameters that characterize wealth accumulation at the individual level: the drift $\mu_t(w)$, and the diffusion $\sigma_t^2(w)$. Thus, it directly connects individual economic behavior with the distribution of wealth.

\subsubsection{Interpretation of the Different Effects}

The previous section derived equation~(\ref{eq:kf}) using results from stochastic calculus. This section provides a more direct and intuitive understanding of the equation. The purpose is twofold. First, it provides an understanding of the central equation that does not require familiarity with stochastic calculus. Second, it gives a more transparent basis for how and why the decomposition introduced in this paper works.

\paragraph{Integration of the Kolmogorov Forward Equation}

For empirical purposes, it is useful to rewrite equation~(\ref{eq:kf}) in its integrated version, which involves the \gls{cdf} of wealth, $F_t(w)$. After re-arranging terms, we get:
\begin{equation}\label{eq:kf-int}
\underbrace{-\frac{\partial_t F_t(w)}{f_t(w)}}_{\substack{\text{local change in the} \\ \text{wealth distribution}}} \;\; = \underbrace{\mu_t(w)}_{\substack{\text{local effect of average}\\ \text{change in wealth}}} - \ \ \; \underbrace{{\frac{1}{2}}\partial_w \sigma^2_t(w)}_{\substack{\text{local effect of the} \\ \text{mobility gradient}}} \ \; - \quad \underbrace{{\frac{1}{2}}\sigma^2_t(w) \frac{\partial_w f_t(w)}{f_t(w)}}_{\substack{\text{local effect} \\ \text{of mobility}}}
\end{equation}
Each of these terms captures a different mechanism. I discuss them in turn below. To fix ideas, let us consider $w=\$1\text{bn}$ and that the population is normalized to one, so that $1 - F_t(w)$ is the number of billionaires, and $-{\partial_t F_t(w)}$ is the change in the number of billionaires.

\paragraph{Local Effect of Average Change in Wealth}

Assume that wealth growth is a deterministic function of wealth: everyone with the same wealth experiences the same wealth change, so there is no mobility ($\sigma_t \equiv 0$). Consider the case where wealth growth is positive at the top, so that the number of billionaires increases.

Over a short period, the number of people crossing the \$1bn threshold will be proportional to (i) $f_t(w)$, the number of people that were initially at the threshold, and (ii) $\mu_t(w)$, the pace at which their wealth increases. Therefore, we get $-{\partial_t F_t(w)} = \mu_t(w)f_t(w)$, which corresponds to equation~(\ref{eq:kf-int}) when $\sigma_t \equiv 0$.

This formula is known as a transport equation. If wealth growth is uniform ($\mu_t \equiv \mu$), then it translates the entire wealth distribution by a factor $\mu\dif t$ over a period $\dif t$. The general formulation makes it possible to consider non-uniform wealth growth.\footnote{Note that with the change of variable $p=F_t(w_t)$, we get $\partial_t w_t(p) = \mu_t(p)$. Hence, $\mu_t(p)$ is the growth of the $p$th quantile.}

\paragraph{Local Effect of Mobility}

Now, consider the opposite thought experiment. Everyone, even if they have the same wealth, experiences a different wealth growth. But they are as likely to go up or down, so, on average, wealth growth is zero ($\mu_t \equiv 0$). Furthermore, assume that the amplitude of wealth variations is uniform ($\sigma_t \equiv \sigma$). Under these conditions, the number of billionaires will still change.

Indeed, some people just below \$1bn will see their wealth increase and become billionaires. This flow is proportional to (i) $f_t(w^-)$, the number of people right below \$1bn, and (ii) the amplitude $\sigma^2$ by which their wealth varies. Conversely, some people just above \$1bn will see their wealth decrease and drop out of the list of billionaires. This flow is proportional to (i) $f_t(w^+)$, the number of people right above \$1bn and (ii) the amplitude $\sigma^2$ by which their wealth varies.

In general, these two flows will not cancel out because there are not as many people right below and right above \$1bn. This difference between the population on each side of the threshold is effectively captured by the \textit{derivative} of the density $f_t$. Mathematically, this derivative appears from writing the difference between the two flows, and then taking the limit as $w^+ \rightarrow w$ and $w^- \rightarrow w$. After applying the correct proportionality factor, we get $-{\partial_t F_t(w)} = \frac{1}{2}\sigma^2_t(w) {\partial_w f_t(w)}$, which corresponds to equation~(\ref{eq:kf-int}) when $\mu_t \equiv 0$ and $\sigma_t \equiv \sigma$. Unlike the equation modeling the effect of average wealth growth, in this equation, the effect on the number of billionaires depends not on the \textit{value} of the density but on its \textit{gradient}. This formula is known as a diffusion equation and is best understood as a transformation that flattens wealth density.

To make sense of this effect, consider two extremes. First, if the wealth density is flat at the top, then the number of people who cross the \$1bn threshold from both sides will cancel out. Hence, even though wealth changes at the individual level, the overall effect on the distribution is nil. Now, assume that, on the contrary, the wealth density is infinitely steep, so that there are no billionaires but many people just below \$1bn. Some people will see their wealth increase and become billionaires. But since there is initially no one above \$1bn, there can be no countervailing flow of people leaving the group. And therefore, the number of billionaires will increase very fast.

\paragraph{Local Effect of the Mobility Gradient}

If the amplitude of wealth mobility is not uniform, there is a third effect on the wealth distribution. Indeed, assume that there is more variation in wealth growth above \$1bn than below: then, downward mobility will exceed upward mobility. Thus, even with no average growth and a flat density, people who drop out of the billionaire group will outnumber those who enter it. This phenomenon creates an additional effect on $-{\partial_t F_t(w)}$, which depends on how mobility varies throughout the distribution. Hence, it is a function of the mobility gradient $\partial_w \sigma^2_t(w)$, and is equal to $-{\frac{1}{2}}f_t(w)\partial_w \sigma^2_t(w)$.

\subsubsection{Phase Portrait and The Dynamics of Wealth Inequality}

Consider the case where the drift and mobility parameters are constant over time: $\mu_t(w)\equiv\mu(w)$ and $\sigma^2_t(w)\equiv\sigma^2(w)$. Equation~(\ref{eq:kf-int}) is best represented as a curve that relates the current level to the current evolution of inequality, as in Figure~\ref{fig:phase-portrait}. This curve --- a phase portrait --- lets us picture the dynamics of wealth inequality in a simple way.

\begin{figure}[ht]
    \centering
    \resizebox{0.8\linewidth}{!}{\centering
\begin{tikzpicture}[scale=1.25]

\draw[->,very thick,color=Charcoal] (-6,0) -- (1,0);
\draw[<->,very thick,color=QueenBlue] (0,-2) -- (0,3);

\draw (1,0) node[right,color=Charcoal,scale=0.9]{$\textstyle\partial_w f_t(w)/f_t(w)$};

\draw (3.2, 0) node[right,align=center,scale=0.9,color=Charcoal]{high \\ inequality};
\draw (-6.35, 0) node[left,align=center,scale=0.9,color=Charcoal]{low \\ inequality};

\draw (0,3) node[above,color=QueenBlue,scale=0.9]{$\textstyle-\partial_t F_t(w)/f_t(w)$};

\draw (0,4.25) node[align=center,scale=0.9,color=QueenBlue]{increasing \\ inequality};
\draw (0, -2.75) node[align=center,scale=0.9,color=QueenBlue]{decreasing \\ inequality};

\draw[very thick,color=SafetyBlazeOrange] plot[domain=-5.5:0.5] (\x, {-1.3 - 0.75*\x});
\draw[very thick,dashed,color=SafetyBlazeOrange] (-3,-1.3) -- (1,-1.3) node[right,scale=0.9]{$\textstyle\mu_t(w) - {\frac{1}{2}}\partial_w \sigma^2_t(w)$};
\draw[very thick,<-,color=SafetyBlazeOrange] (-1.75,-1.3) node[above left,scale=0.9]{$\textstyle-\frac{1}{2}\sigma_t^2(w)$} arc (180:143.130:1.75);

\foreach \i in {0,...,4} {
    \pgfmathsetmacro{\x}{-5 + 3.26667*(1 - exp(-0.4*\i))}
    
    \draw (\x,0) node[below,scale=0.9]{$x_\i$};
    
    \draw[dotted] (\x,0) -- (\x,{-1.3 - 0.75*\x}) node[scale=0.9]{$\bullet$} -- (0, {-1.3 - 0.75*\x}) node[right,scale=0.9]{$y_\i$};
}
\foreach \i in {0,...,3} {
    \pgfmathsetmacro{\x}{-5 + 3.26667*(1 - exp(-0.4*\i))}
    \pgfmathsetmacro{\y}{-5 + 3.26667*(1 - exp(-0.4*(\i + 1)))}
    
    \draw[->,thick] (\x + 0.1,{-1.3 - 0.75*\x}) to[bend left] (\y,{-1.3 - 0.75*\y + 0.1});
}

\draw (-1.73333,0) node[above right,scale=0.9]{$x_\infty$};
\draw (-1.73333,0) node[scale=0.9]{$\bullet$};

\end{tikzpicture}}
    
    \vspace{1em}
    \begin{minipage}{0.7\linewidth}
    \footnotesize \textit{Note:} This diagram is a phase portrait of the dynamics of inequality for a given wealth level $w$ located towards the top of the wealth distribution. These dynamics can be pictured in a two-dimensional space where each axis represents a quantity associated to the wealth distribution, whose \gls{pdf} is $f_t(w)$ and whose \gls{cdf} is $F_t(w)$. The $x$-axis is equal to $\partial_w f_t(w)/f_t(w)$ and is a proxy for inequality levels (high values mean high inequality). The $y$-axis is equal to $-\partial_t F_t(w)/f_t(w)$ and is a proxy for inequality changes (high values mean increasing inequality). The diagram represents the transition from a low inequality level ($x_0$) to a high inequality level ($x_\infty$), with constant $\mu(w)$ and $\sigma^2(w)$. During this transition, the system moves alongside the orange line with slope $-\frac{1}{2}\sigma^2(w)$ and intercept $\mu(w) - \frac{1}{2}\partial_w\sigma^2(w)$. When the system reaches the $x$-axis, where $-\partial_t F_t(w)/f_t(w) = 0$, it is a its steady-state.
    \end{minipage}
    \caption{Phase Portrait: Convergence to a Higher-Inequality Steady State}
    \label{fig:phase-portrait}
\end{figure}

The term $\partial_w f_t(w)/f_t(w)$ on the right-hand side of equation~(\ref{eq:kf-int}) measures the relative slope of the density. In the upper tail of the wealth distribution, the density is sloping downward, so $\partial_w f_t(w)/f_t(w) < 0$. This value is a good proxy for top wealth inequality. If it is high, then the density decreases at a slow rate: i.e., the tail is fat, and inequality is high. Conversely, if its value is low, then inequality is low as well.\footnote{If wealth follows a Pareto distribution with coefficient $\alpha$, then $\partial_w f_t(w)/f_t(w) = -(1 + \alpha)/w$. So, as $\alpha \rightarrow 1$, inequality goes to infinity and $\partial_w f_t(w)/f_t(w)$ increases to zero.} We can analogously interpret the term $-{\partial_t F_t(w)}/{f_t(w)}$ on the left-hand side of equation~(\ref{eq:kf-int}) as a change in the fatness of the top tail of the distribution. When it exceeds zero, the tail becomes fatter, and inequality increases.\footnote{If wealth follows a Pareto distribution with coefficient $\alpha_t$, then $-{\partial_t F_t(w)}/{f_t(w)} = -(w \log w) (\partial_t \alpha_t/\alpha_t)$. Therefore, $-{\partial_t F_t(w)}/{f_t(w)} > 0$ is associated with a decrease in $\alpha_t$, which corresponds to an increase in inequality.}

In Figure~\ref{fig:phase-portrait}, the first quantity $\partial_w f_t(w)/f_t(w)$ (inequality level, in blue) is on the $x$-axis and the second quantity $-{\partial_t F_t(w)}/{f_t(w)}$ (inequality change, in green) is on the $y$-axis. Equation~(\ref{eq:kf-int}) states that, with stable parameters, the data points representing the current state of inequality must lie alongside a straight line (in orange), with intercept $\mu_t(w) - {\frac{1}{2}}\partial_w \sigma^2_t(w)$ and slope $-\frac{1}{2}\sigma_t^2(w)$. Where this line crosses the $x$-axis, inequality is neither increasing nor decreasing: this is the distribution's steady state.

\paragraph{Dynamics of Wealth Inequality and Convergence to a Steady State}

Consider the transition to a high inequality steady state. On Figure~\ref{fig:phase-portrait}, start from the inequality level $x_0$. This level is below the steady-state level $x_\infty$, so inequality will change. We can decompose this change into (i) the effect of average wealth growth and of the mobility gradient, which the intercept captures, and (ii) the effect of mobility, which the slope captures. As we can see on the graph, (i) decreases inequality, whereas (ii) increases it. At first, the effect of (ii) is stronger, so inequality increases overall by $y_0$. In the next period, inequality is higher, equal to $x_1$. A higher inequality means a fatter tail, hence a flatter density. Having a flatter density weakens the effect of (ii). The effect of (i), on the other hand, remains unchanged. Overall, inequality still increases, but by a lower amount, $y_1$. The process repeats. Inequality keeps increasing, which flattens the density, weakens the effect of (ii) but leaves the effect of (i) unchanged. Asymptotically, we reach $x_\infty$. At this point, the effect of (ii) has been weakened to the point that it perfectly counterbalances (i). We have reached the steady state.

\paragraph{Rationale for the Steady State}

This framework provides a robust justification for the emergence of a steady state --- one that doesn't preclude but also doesn't require behavioral responses. Without mobility, the only way to get a steady-state distribution of wealth is for behavioral responses (and general equilibrium effects) to lead to a point where saving rates as a proportion of wealth become identical throughout the distribution. Any deviation from this situation leads to a degenerate steady state in the long run. In contrast, here, the distribution eventually stabilizes because of the mechanical interaction between mobility and drift, so the model remains well-behaved for a wider range of economic behaviors.

\paragraph{Distinction Between Drift and Mobility}

This framework shows how drift and mobility have different impacts on the distribution and why that distinction matters. In a model without mobility, the red line in Figure~\ref{fig:phase-portrait} would be flat. The only way to account for inequalities of wealth that are neither increasing nor decreasing at a constant rate is to assume that the intercept $\mu_t(w)$ changes with each period (which usually implies some form of behavioral change). In contrast, once we introduce mobility, it becomes possible to account for any linear downward-slopping evolution of inequality in the phase portrait of Figure~\ref{fig:phase-portrait} with a much more parsimonious model that only involves two time-invariant parameters: one for drift and one for mobility.

\subsection{Other Processes Affecting the Wealth Distribution}

I account for other phenomena impacting wealth distribution besides drift and diffusion (which capture individual income and consumption). The three phenomena that I consider in this paper are birth and death, inheritance, and assortative mating. I first introduce these effects in equation~(\ref{eq:kf}) and then gives a version of equation~(\ref{eq:kf-int}) that include them as well.

\paragraph{Births and Deaths}

At any time $t$, a fraction $\delta_t$ of people die. Let $g_t$ be the density of their wealth. Simultaneously, a fraction $\beta_t$ of people appears with a random initial endowment drawn from a distribution with density $h$. The total population grows at a rate $n_t = \beta_t - \delta_t$. This process impacts $\partial_t f_t(w)$, i.e., the change in the wealth density at wealth $w$ and time $t$, by:
\begin{equation*}
\underbrace{\zeta_t(w)}_{\text{demography}} \  \equiv \quad \underbrace{\beta_t h(w)}_{\text{injection}} \quad - \quad \underbrace{\delta_t g_t(w)}_{\text{deaths}} \quad - \ \; \underbrace{n_t f_t(w)}_{\text{normalization}}
\end{equation*}
and the equation~(\ref{eq:kf}) becomes:
\begin{equation*}
\partial_t f_t(w) \;\;\; = \;\;\; \underbrace{-\partial_w[\mu_t(w)f_t(w)] + \frac{1}{2}\partial^2_w[\sigma_t^2(w)f_t(w)]}_{\text{income and consumption}} \quad + \underbrace{\zeta_t(w)}_{\text{demography}}
\end{equation*}

\paragraph{Inheritance}

I model inheritance as a jump process. With a probability $\pi_t(w)$, people see their wealth jump from $w$ to $w + \lambda$ where $\lambda$ is the amount of inheritance received, net of taxes. Let $s_t(\lambda|w)$ be the density of the value of the inheritance, conditional on the value of wealth, and conditional on receiving an inheritance. We can model the jump process as a death with rate $\pi_t(w)$ and as an injection with rate $\int \pi_t(w - \lambda)f_t(w - \lambda)s_t(\lambda|w-\lambda)\,\dif\lambda$. So, the effect of inheritance on $\partial_t f_t(w)$, i.e., the change in the wealth density at wealth $w$ and time $t$, is:
\begin{equation*}
\xi(w) \equiv \int \pi_t(w - \lambda)f_t(w - \lambda)s_t(\lambda|w-\lambda)\,\dif\lambda - \pi_t(w)f_t(w)
\end{equation*}
and the equation~(\ref{eq:kf}) with both demography and inheritance is:
\begin{equation*}
\partial_t f_t(w) \;\;\; = \;\;\; \underbrace{-\partial_w[\mu_t(w)f_t(w)] + \frac{1}{2}\partial^2_w[\sigma_t^2(w)f_t(w)]}_{\text{income and consumption}} \quad + \underbrace{\zeta_t(w)}_{\text{demography}} + \underbrace{\xi_t(w)}_{\text{inheritance}}
\end{equation*}

\paragraph{Marriages, Divorces and Assortative Mating}

Finally, I account for the effect of marriages and divorces. I adopt the convention that wealth is split equally among spouses. At any time $t$, a fraction $\theta_t$ of people get married. Let $p_t(w)$ be the density of their wealth as individuals, and $q_t(w)$ be the density of their wealth as couples. The strength of assortative mating is captured by the relationship between $p_t$ and $q_t$: at the limit, if people always choose a spouse with identical wealth, then $p_t(w)=2q_t(2w)$. The effect of marriages on the wealth distribution is therefore $2\theta_t q_t(2w)-\theta_t p_t(w)$. We can construct an opposite process for divorces. Let $\chi_t(w)$ be combined effect of marriages and divorces on $\partial_t f_t(w)$. Then the equation~(\ref{eq:kf}) with the effect of demography, inheritance, marriages, and divorces becomes:
\begin{equation}\label{eq:kf-full}
\partial_t f_t(w) \;\;\; = \;\;\; \underbrace{-\partial_w[\mu_t(w)f_t(w)] + \frac{1}{2}\partial^2_w[\sigma_t^2(w)f_t(w)]}_{\text{income and consumption}} \quad + \underbrace{\zeta_t(w)}_{\text{demography}} + \underbrace{\xi_t(w)}_{\text{inheritance}} + \underbrace{\chi_t(w)}_{\substack{\text{marriages} \\ \text{and divorces}}}
\end{equation}

\paragraph{Integrated Version}

Rewrite equation~(\ref{eq:kf-full}) in an integrated form, similar to~(\ref{eq:kf-int}). Define $F_t(w) = \int_{-\infty}^w f_t(s)\,\dif s$, $Z_t(w) = \int_{-\infty}^w \zeta_t(s)\,\dif s$, $\Xi_t(w) = \int_{-\infty}^w \xi_t(s)\,\dif s$ and $X_t(w) = \int_{-\infty}^w \chi_t(s)\,\dif s$. After integrating equation~(\ref{eq:kf-full}) and re-arranging terms, we get:
\begin{equation}\label{eq:kf-full-int}
-\frac{\partial_t F_t(w)}{f_t(w)} + \frac{Z_t(w)}{f_t(w)} + \frac{\Xi_t(w)}{f_t(w)} + \frac{X_t(w)}{f_t(w)} = \mu_t(w) - {\frac{1}{2}}\partial_w \sigma^2_t(w) - {\frac{1}{2}}\sigma^2_t(w) \frac{\partial_w f_t(w)}{f_t(w)}
\end{equation}
This equation is similar to~(\ref{eq:kf-int}), with additional correction terms on the left-hand side to account for demography, inheritance, and assortative mating. The fundamental dynamics of wealth inequality remain similar to those pictured in the phase portrait in Figure~\ref{fig:phase-portrait}, except that the $y$-axis needs to be adapted to include the correction terms. This equation will serve as the basis for the decomposition introduced in this paper.

\subsection{Decomposition of the Different Effects}

I will use equation~(\ref{eq:kf-full-int}) to decompose the various factors affecting the distribution of wealth. All the terms in the equation can be directly observed or separately estimated, except those related to consumption. Therefore, these unobserved parameters have to be estimated as a type of ``residual.'' This section explains how.

\paragraph{Estimating Equation for the Complete Model}

Let us go back to the equation~(\ref{eq:kf-full-int}). Separate the drift $\mu_t(w)$ and the diffusion $\sigma^2_t(w)$ into their observed component (income) and their unobserved component (consumption):
\begin{align*}
\underbrace{\mu_t(w)}_{\parbox{4.5em}{\setstretch{1}\scriptsize\centering total drift (average wealth change)}} &= \underbrace{\tilde{\mu}_t(w)}_{\parbox{4.5em}{\setstretch{1}\scriptsize\centering mean income (observed)}} - \underbrace{c_t(w)}_{\parbox{4.5em}{\setstretch{1}\scriptsize\centering  mean consumption (unobserved)}} & & &
\underbrace{\sigma^2_t(w)}_{\parbox{4.5em}{\setstretch{1}\scriptsize\centering total mobility (variance of wealth change)}} &= \underbrace{\tilde{\sigma}^2_t(w)}_{\parbox{4.5em}{\setstretch{1}\scriptsize\centering variance of income (observed)}} + \underbrace{\gamma^2_t(w)}_{\parbox{4.5em}{\setstretch{1}\scriptsize\centering variance of consumption (unobserved)}}
\end{align*}
Move the observed components of drift and mobility to the left-hand side of equation~(\ref{eq:kf-full-int}). The complete equation for the left-hand side becomes:
\begin{equation*}
\text{LHS}_t(w) \  \equiv\   \underbrace{-\frac{\partial_t F_t(w)}{f_t(w)}}_{\parbox{4em}{\setstretch{1}\scriptsize\centering inequality change}} \  + \  \underbrace{\frac{Z_t(w)}{f_t(w)} + \frac{\Xi_t(w)}{f_t(w)} + \frac{X_t(w)}{f_t(w)}}_{\parbox{10em}{\setstretch{1}\scriptsize\centering demography, inheritance, marriages and divorces}} \  - \  \underbrace{\tilde{\mu}_t(w) + {\frac{1}{2}}\partial_w \tilde{\sigma}^2_t(w) + {\frac{1}{2}}\tilde{\sigma}^2_t(w) \frac{\partial_w f_t(w)}{f_t(w)}}_{\parbox{15em}{\setstretch{1}\scriptsize\centering drift and mobility induced by income}}
\end{equation*}
All the components of $\text{LHS}_t(w)$ are either directly observable or separately estimable. We can therefore rewrite~(\ref{eq:kf-full-int}) in its final form:
\begin{equation}\label{eq:estimation-complete}
\text{LHS}_t(w) \  = \  \underbrace{c_t(w) - {\frac{1}{2}}\partial_w \gamma^2_t(w)}_{\text{intercept}} \  - \  \underbrace{{\frac{1}{2}}\gamma^2_t(w)}_{\text{slope}} \times \frac{\partial_w f_t(w)}{f_t(w)}
\end{equation}

\paragraph{Identification}\label{sec:identification-final}

Equation~(\ref{eq:estimation-complete}) provides the basis for estimating the parameters. It shows a relationship similar to that shown in Figure~\ref{fig:phase-portrait}. We require three assumptions to get point estimates of the mean $\{c_t(w)\}_{w \in \mathds{R}}$ and the variance $\{\gamma_t^2(w)\}_{w \in \mathds{R}}$ of consumption conditional on wealth over a period $[t_0, t_1]$.
\begin{assump}\label{assump:observe}
For all $w$, we can observe (or separately estimate) $\mathrm{LHS}_t(w)$ and ${\partial_w f_t(w)}/{f_t(w)}$ at $k$ distinct times $t_k \in [t_0, t_1]$ with $k \geq 2$.
\end{assump}
\begin{assump}\label{assump:stability}
For all $w$, the parameters $c_t(w)$ and $\gamma_t^2(w)$ are constant over time, i.e., $c_t(w) \equiv c(w)$ and $\gamma_t^2(w) \equiv \gamma^2(w)$ for $t \in [t_0, t_1]$.
\end{assump}
\begin{assump}\label{assump:change}
The wealth distribution is changing over time, in the sense that, for all $w$, we observe the distribution for at least to periods $r, s \in [t_0, t_1]$ such that ${\partial_w f_{r}(w)}/{f_{r}(w)} \neq {\partial_w f_{s}(w)}/{f_{s}(w)}$.
\end{assump}
These three assumptions ensure we can meaningfully fit a line through the different points as in Figure~\ref{fig:phase-portrait}. Assumption~\ref{assump:observe} states that we need to observe the wealth distribution and its evolution during two different periods, a consequence of the fact that we need at least two points to be able to fit a line. In practice, and in the presence of statistical noise, more than two points are preferable to get robust estimates. Assumption~\ref{assump:stability} ensures that the parameters that govern the relationship~(\ref{eq:estimation-complete}) remain stable over the period of time under consideration. Finally,  Assumption~\ref{assump:change} states that the distribution of wealth must change over the period where we observe it. This assumption comes from the fact that we are using local variations in the flatness of the density to disentangle the drift from mobility. This strategy only works if the flatness does, in fact, vary. If these three assumptions are satisfied, then we can proceed with the estimation in two steps:
\begin{description}
\item[Step 1] Estimate $c^*_t(w) = c_t(w) - {\frac{1}{2}}\partial_w \gamma^2_t(w)$ and $\gamma^2_t(w)$ for every $w$ using a line-fitting method.
\item[Step 2] Using the estimate for $\gamma^2_t$, estimate the mobility gradient $\partial_w \gamma^2_t$, and use it to get the estimate of $c_t(w) = c^*_t(w) + {\frac{1}{2}}\partial_w \gamma^2_t(w)$.
\end{description}

\paragraph{Ex-ante Estimate of the Effect's Magnitude}

I use variations in the flatness of the density over time to disentangle the effect of drift from that of mobility. But are these variations large enough to provide reliable empirical estimates? Some rough but simple calculations can give a general idea of the magnitude of the effects at play. Between its high point in the 1970s and today, the Pareto coefficient of wealth in the upper tail went from $\alpha \approx 2$ to $\alpha \approx 1.5$. Assume a mobility parameter $\sigma^2 \approx 0.16w^2$ at the top, which matches the estimates in this paper (and separate estimates from the \gls{scf} and the \gls{psid}). The effect of mobility under these conditions is equal to $\frac{1}{2}\sigma^2(1 + \alpha)w$. A useful way to interpret this number is to say that mobility has the same effect on the distribution as an average wealth growth of $\frac{1}{2}\sigma^2(1 + \alpha)$, which went from $\frac{1}{2} \times 0.16 \times (1 + 2) = 24\%$ to $\frac{1}{2} \times 0.16 \times (1 + 1.5) = 20\%$ of wealth --- a 4\% change. The change in the effect of mobility that is attributable to the flattening of the wealth density at the top is, therefore, sizable --- equivalent to the mechanical effect that a permanent 4\% wealth tax would have.

\paragraph{Potential Limitations}

The main limitation of the method is that it requires the parameters for the drift $c(w)$ and the mobility $\gamma^2(w)$ induced by consumption to be constant (or at least not to have a downward or upward trend) over sufficiently long periods. Estimates may be biased if this is not the case. For example, assume that the drift decreases over time. On the phase portrait in Figure~\ref{fig:phase-portrait}, the linear relationship (in orange) moves down over time. As a result, the observed data points on the phase portrait will lie on a curve that could take many different shapes, but, in general, will appear more downward-sloping than the true relationship. If we were to estimate the decomposition in that context, we would therefore overestimate the diffusion parameter.

In practice, there are two ways to address this issue. The first one is to check that the empirical phase portrait to which we fit the line remains relatively close to linearity. While this does not guarantee that the assumption of a constant $c(w)$ and $\gamma^2(w)$ is verified, it can identify some problematic cases. The second one is to verify that the parameters estimated from the decomposition are consistent with external estimates of savings and mobility. I perform both checks in this paper's empirical section, suggesting that the simple model with a constant $c(w)$ and $\gamma^2(w)$ since the 1980s works well.

\section{Estimation for the United States}

\subsection{Data}

I estimate the parameters in decomposition~(\ref{eq:estimation-complete}) for the United States. I start in 1962, when sufficiently detailed microdata on the distribution of income and wealth in the United States becomes available. I primarily rely on the tax-based microdata from \citet{saez_rise_2020}, which I complement in a number of ways. Using the data collected, I perform microsimulations of mortality, birth, inheritance, marriage, and divorce, which I use to estimate the impact of the corresponding phenomenons on the wealth distribution. I briefly review the data and methodology below and provide more details in Appendix~\ref{sec:detailed-data}.

\paragraph{Income and Wealth}

For income and wealth, I rely on the \gls{dina} tax-based microdata from \citet{saez_rise_2020}. These data distribute the entirety of national income and wealth, as measured by the national accounts, every year since 1962, to adult individuals (20 and older). They infer the distribution of wealth from capital income flows, following the capitalization method of \citet{saez_wealth_2016}.\footnote{Their latest revision \citep{saez_rise_2020} accounts for heterogeneous returns.} This data does not distribute capital gains since they are not part of national income. I incorporate them in the data by assuming a constant capital gains rate by asset class, as in \citet{robbins_capital_2018}. The information on age in the \gls{dina} microdata is also limited, so I replace it with information from the \glsfirst{scf}. I match \gls{dina} and \gls{scf} observations one-to-one based on their rank in the wealth distribution. I use this to estimate a rank in the age distribution by sex for each observation and attribute to them the age that matches this rank according to official demographic data.

\paragraph{Demography and Intergenerational Linkages}

I estimate the entire demographic history of the United States since 1850 by year, age, and sex, including population counts, mortality rates, female and male fertility rates by birth order. I start the estimation in 1850, long before the income and wealth data starts (in 1962), because I use the demographic data to simulate intergenerational linkages between parents and children. So if a centenarian dies in 1962, I must be able to retrace that person's entire fertility history since their birth and retrace the mortality history of that person's children. 

I construct this data by collecting and harmonizing data from official sources {\color{QueenBlue}(United States Census Bureau)}, historical databases {\color{QueenBlue}(Human Mortality Database\nocite{HumanMortalityDatabase}, Human Life Table Database\nocite{HumanLifeTable}, Human Fertility Database\nocite{HumanFertility}, Human Fertility Collection\nocite{HumanFertilityCollection})} and academic publications \citep{haines_estimated_1998}. To make projections for the future, I rely on the forecast (medium variant) of the World Population Prospects \citep{united_nations_world_2019}. For male fertility rates, which are not a standard demographic parameter, I combine the female fertility rates with the joint age distribution of mixed-sex couples in the IPUMS census microdata \citep{ruggles_steven_ipums_2022}.

First, I use this data to simulate deaths, assuming that people die at random according to their age and sex-specific mortality rate. I also account for births by assuming that people enter the population at age 20 with a constant exogenous wealth distribution estimated from the data.\footnote{People aged 20 have very low levels of wealth, so in practice, this is close to assuming that people start with zero wealth.}$^,$\footnote{The birth rate is estimated here as a residual between population growth and the crude death rate, so in effect, it also incorporates the effect of immigration.} Second, I simulate intergenerational linkages. For every person in the data, I assume this person had children according to their sex, age, year, and birth-order-specific fertility rates. Then I assume these children experience mortality in line with their sex, year, and age-specific mortality rates. This methodology generates a distribution for the age and sex of the direct descendants of every person in the sample, which I use to distribute inheritances.

\paragraph{Inheritance and Estate Taxation}

When someone dies, I assume that their wealth is transmitted to their spouse (if any), without estate tax, or to their children, after payment of the estate tax. To calculate the estate tax, I collect complete statutory schedules of the federal estate tax over the second half of the 20th century. I assume that wealth is split equally among descendants, as is the norm in the United States \citep{menchik_primogeniture_1980}. I account for the possibility that wealthier people are more likely to inherit and receive larger inheritances. I use the \gls{scf} to estimate the relative probability of receiving an inheritance, as well as the rank in the inheritance distribution, as a function of the rank in the wealth distribution, conditional on age. Within a given age group, I simulate international wealth transmission according to these parameters.

\paragraph{Marriages, Divorces and Assortative Mating}

I collect data on the aggregate rate of divorce and marriage from the \gls{nvss} \citep{NVSS}. In addition, I reconstruct age and sex-specific rates using population data disaggregated by marital status from IPUMS census microdata \citep{ruggles_steven_ipums_2022}.

I determine the extent of assortative mating by estimating the joint distribution of the ranks in the wealth distribution at the time of marriage using the \gls{sipp} panel (2013--2016). I also consider the impact of assortative mating on divorce by estimating the distribution of the share of wealth owned by each couple member before they get divorced using the same data.

Then, I simulate the process of marriage and divorce by randomly selecting people to get wedded in a given year according to age and sex-specific marriage rates, and then marry them to one another to reproduce the joint distribution of wealth ranks at marriage in the \gls{sipp}. Similarly, I simulate the effect of divorce by randomly selecting people to get separated according to their age and sex-specific divorce rates, and split the couple's wealth among both spouses according to the \gls{sipp} data.

\subsection{Estimation}\label{sec:estimation}

\paragraph{Wealth Distribution}

First, I normalize the distribution of wealth by the average national income per adult. Then, I transform it using the inverse hyperbolic sine function ($\ash$).\footnote{I.e., $\ash: x \mapsto \log(x + \sqrt{x^2+1})$.} This transformation makes the distribution of wealth easier to manipulate empirically. And because we operate in a continuous time framework, it creates no practical difficulty: Itô's lemma establishes a direct correspondence between the parameters of the process for $w_{it}$ and for $\ash(w_{it})$.\footnote{Itô's lemma states that, if a process $x_t$ follows the \gls{sde} $\dif x_t = \mu_t \dif t + \sigma_t \dif B_t$, then $\phi_t(x_t)$ follows the \gls{sde} $\dif \phi_t(x_t) = \left[\partial_t \phi_t(x_t) + \mu_t \partial_x \phi_t(x_t) + \frac{1}{2}\sigma^2_t \partial^2 \phi_t(x,t)\right]\dif t + \sigma_t \partial_x \phi_t(x) \dif B_t$. See Appendix Section~\ref{sec:transform-param-asinh} for details.}

\begin{figure}[ht]
\begin{center}
    \begin{subfigure}[t]{0.499\textwidth}
        \includegraphics[width=\textwidth]{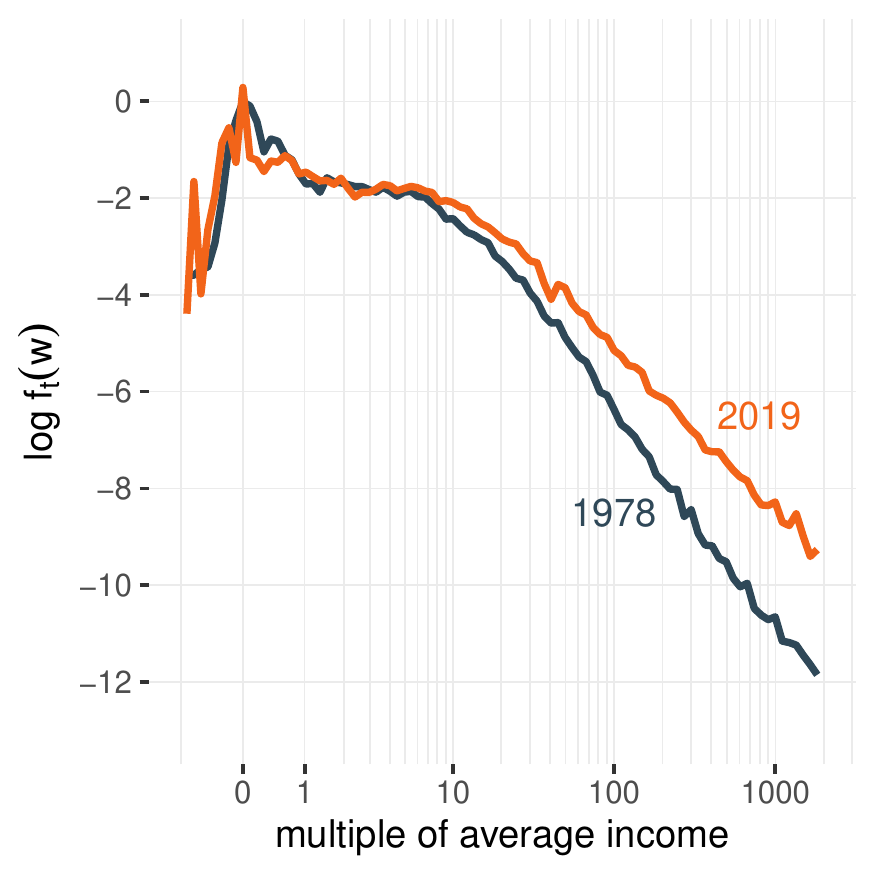}
        \caption{Log Density in 1978 and 2019}
        \label{fig:density-wealth-1978-2019}
    \end{subfigure}%
    \begin{subfigure}[t]{0.499\textwidth}
        \includegraphics[width=\textwidth]{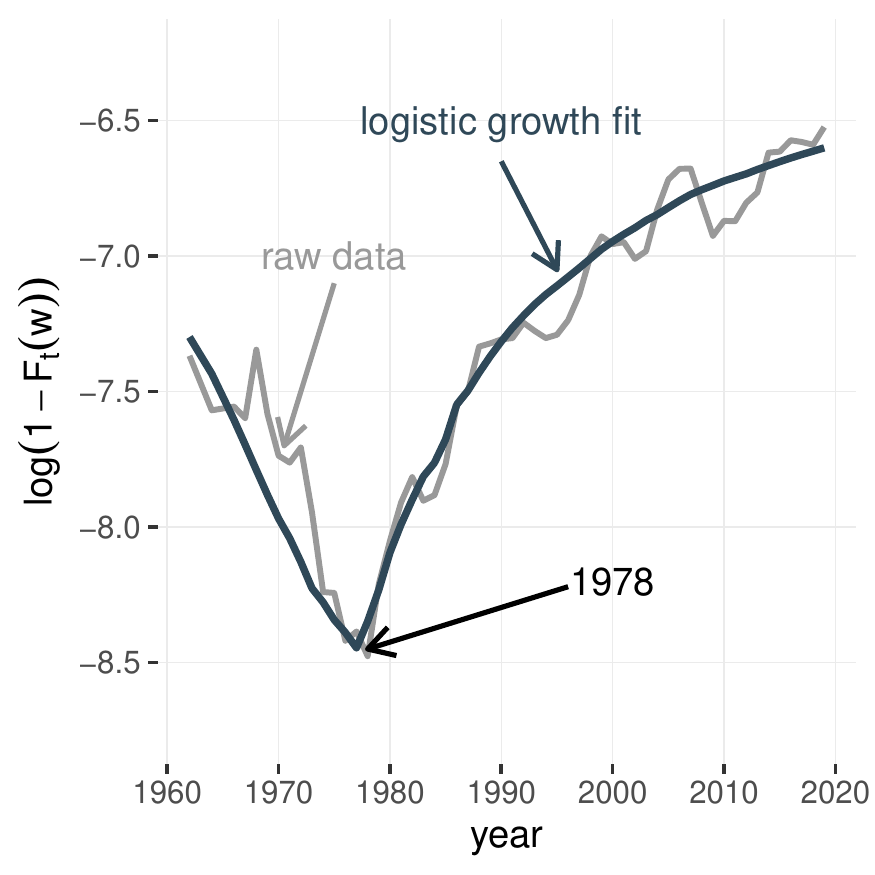}
        \caption{Log CCDF at 500 Times Average Income}
        \label{fig:ccdf-wealth}
    \end{subfigure}
    
    \vspace{1em}
    \begin{minipage}{0.9\linewidth}
    \footnotesize \textit{Source:} Own computation using the \glsfirst{dina} microdata from \citet{saez_rise_2020}. \textit{Note:} Wealth is always expressed as a multiple of national income. Densities in Figure~\ref{fig:density-wealth-1978-2019} are estimated as histograms with 91 bins of size $0.1$ on the $\ash(\text{wealth})$ scale, ranging from $-1$ to $2000$ times average income ($-0.9$ to $8.3$ on the $\ash$ scale).
    \end{minipage}
\end{center}
\vspace{-1em}
\caption{Distribution of Wealth and Its Evolution}
\label{fig:wealth-distr-evolution}
\end{figure}

I select a range of values (from $-1$ to $2000$ times average income) which the wealth microdata consistently covers over the entire period.\footnote{The range goes from $-0.9$ to $8.3$ on the $\ash$ scale.} I divide this range into bins of equal size, each representing $0.1$ units of $\ash(\text{wealth})$.\footnote{This represents 91 bins.} Figure~\ref{fig:density-wealth-1978-2019} plots the density of wealth, as estimated by the frequency of these bins. I display two years: 1978, which has the lowest inequality, and 2019, which has the highest inequality and is also the most recent data available. I use the logarithm of the density, so changes in the top tail of the wealth distribution are more clearly visible. It appears linear in the top tail, which follows from the fact large wealth holdings follow a power law.\footnote{At the top, the wealth distribution is approximately Pareto, and the inverse hyperbolic sine is approximately logarithmic, so the distribution of transformed wealth in approximately exponential.} Note that $\partial_w \log f_t(w) = \partial_w f_t(w)/w$, i.e., the derivative of the logarithm of the density is equal to the quantity of interest on the right-hand side of equation~(\ref{eq:estimation-complete}). Hence, our interest lies in the slopes of the lines, which I estimate by running locally weighted linear regressions through the values of Figure~\ref{fig:density-wealth-1978-2019}.\footnote{In the benchmark specification, I use a rectangular kernel and a bandwidth of $1.5$. See Appendix~\ref{sec:robustness-checks} for robustness checks.} The top tail has driven most of the changes in the wealth distribution, and indeed this is where we observe most of the variation. In 1978, when inequality was at its lowest, the density at the top was quite steep, with a slope around $-2$. By 2019, inequality had increased dramatically, leading to a fatter tail and a flatter density, with a slope around $-1.5$.

\paragraph{Changes in the Distribution of Wealth Over Time}

For each wealth bin, I estimate $(1 - F_t(w))/f_t(w)$, which is the relevant measure for the evolution of the wealth distribution as it appears on the left-hand side of equation~(\ref{eq:estimation-complete}). Let us begin with $\log(1-F_t(w))$, the logarithm of the numerator (see Figure~\ref{fig:ccdf-wealth} for the bin corresponding to 500 times the average income). The raw data shows two clear trends, one on each side of the year 1978, which correspond to the decreasing part and the increasing part of the U-shaped evolution of wealth inequality. I use a parametric approximation to filter out the short-run variations around these trends (which are not my focus here). Using nonlinear least squares, I fit a logistic growth model for each bin, separately on each side of the year 1978.\footnote{The logistic curve can be written as $L: t \mapsto L(t) = \frac{x_{\infty}}{1 + (x_{\infty}/x_0-1)\exp(-\rho t)}$ where $(x_0, x_{\infty}, \rho)$ are the three parameters which capture, respectively, the initial value, the asymptotic value, and the rate of convergence.}$^,$\footnote{This model is attractive for three reasons: (i) empirically, it fits the data well, (ii) it captures the features that we expect, i.e., that of a process that grows at first, and then settles to a steady state, and (iii) if we locally approximate the distribution with an exponential distribution with rate parameter $\lambda_t$, then equation~(\ref{eq:kf-int}) collapses to a logistic differential equation for $\lambda_t$, whose solution is the logistic curve function. Note that this approximation can only be valid locally and does not provide a global solution to the partial differential equation~(\ref{eq:kf-int}).} From this parametric approximation, I estimate the time derivative of $\log(1-F_t(w))$. Finally, I retrieve the quantity of interest using the fact that $\partial_t F_t(w) = (1 - F_t(w)) \partial_t \log(1-F_t(w))$.

\paragraph{Other Processes}

I separately estimate the effects of income, demography, inheritance, marriage, and divorce, using the microdata and, when needed, microsimulations based on the microdata. For income, I directly calculate the mean and the variance within each wealth bin and use this to estimate the drift and the mobility induced by income. For demography, inheritance, marriage, and divorce, I simulate the processes in the microdata and use the difference between the \glspl{cdf} before and after simulation to estimate the effects.

\paragraph{Estimation of Drift and Mobility}

Having estimated all the observable components of equation~(\ref{eq:estimation-complete}), we can plot the empirical counterpart to the phase portrait in Figure~\ref{fig:phase-portrait}, for every wealth bin. The result, for each year and for the bin corresponding to a wealth level of 500 times the average income, is shown in Figure~\ref{fig:phase-portrait-data}. Due to the inverse hyperbolic sine transform of wealth and to the inclusion of effects besides drift and mobility, the interpretation of the parameters is not as straightforward as in Figure~\ref{fig:phase-portrait}. But the fundamental linear relationship should hold.

\begin{figure}[ht]
\captionsetup[subfigure]{justification=centering}
\begin{center}
    \begin{subfigure}[t]{0.499\textwidth}
        \centering
        \includegraphics[width=\textwidth]{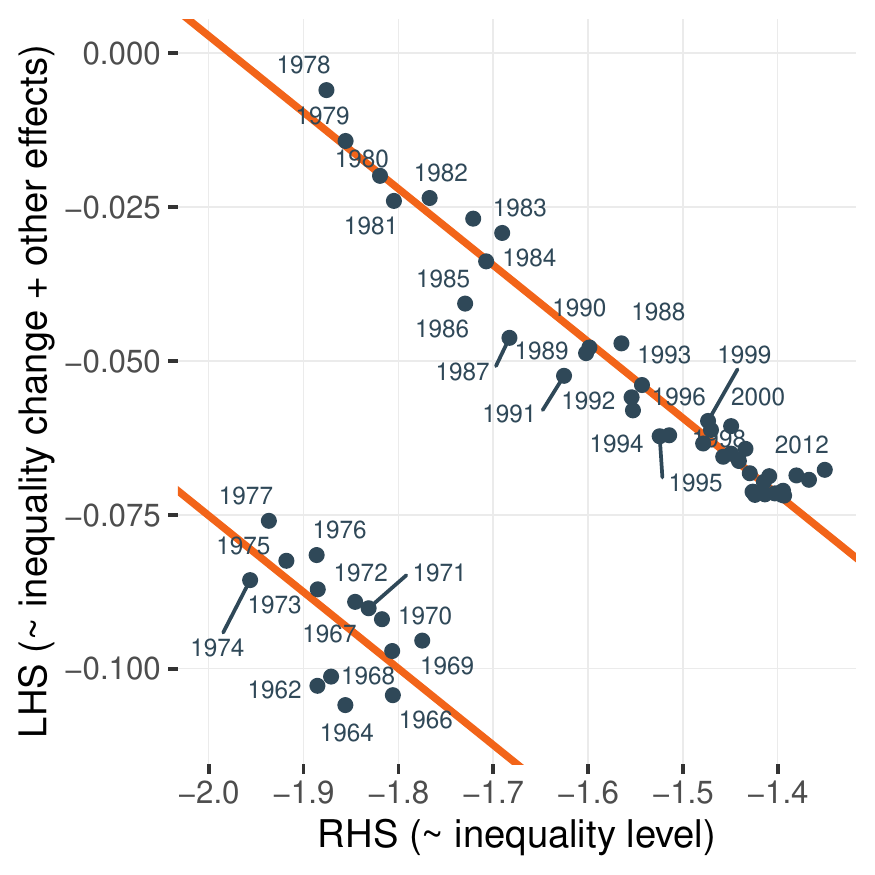}
        
        \caption{Empirical Phase Portrait}
        \label{fig:phase-portrait-data}
        \vspace{1em}
        
        \begin{minipage}{0.9\linewidth}
        \footnotesize \textit{Source:} Author's estimations. \textit{Note:} This scatter plot shows the empirical counterpart to the phase portrait described by equation~(\ref{eq:estimation-complete}), for a level of wealth that correspond to 500 times the average national income (around \$40m in 2022). The two linear lines, fitted for 1962--1977 and 1978--2019, correspond to the relationship defined by equation~(\ref{eq:estimation-complete}). See main text and Appendix~\ref{sec:detailed-estimations} for details on the estimation procedure.
        \end{minipage}
    \end{subfigure}%
    \begin{subfigure}[t]{0.499\textwidth}
    \centering
        \includegraphics[width=\textwidth]{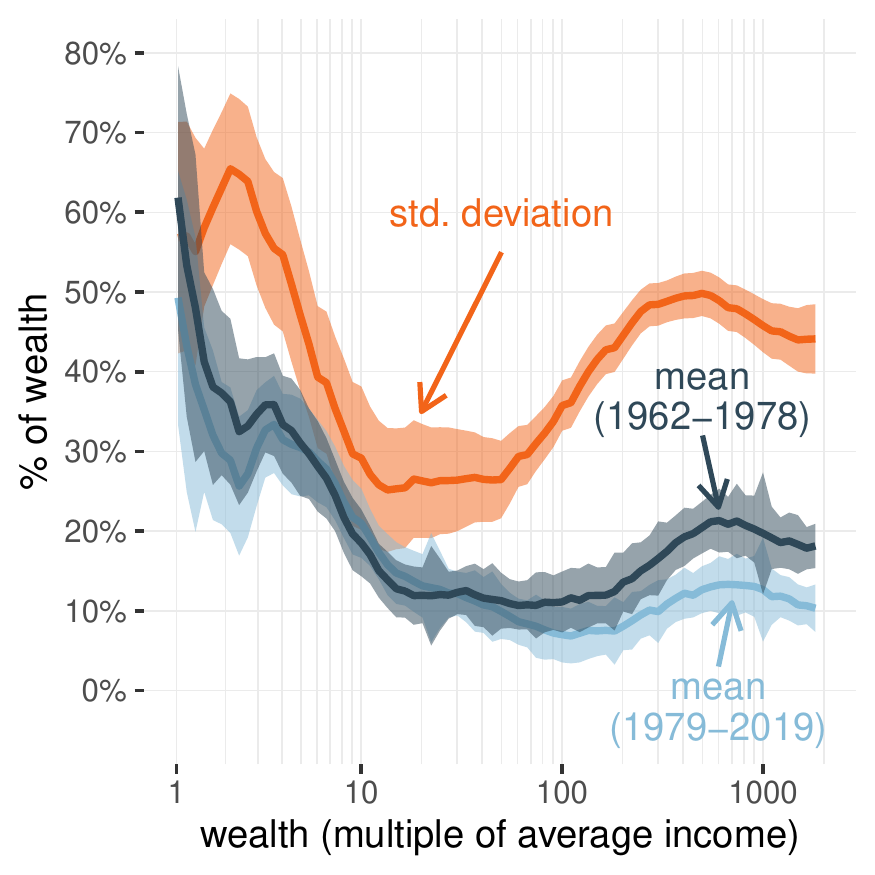}
        
        \caption{Consumption by Wealth}
        \label{fig:diffu-drift-ci-top}
        \vspace{1em}
        
        \begin{minipage}{0.9\linewidth}
        \footnotesize \textit{Source:} Author's estimations. \textit{Note:} This graph shows the average and the standard deviation of consumption by wealth, as estimated from the slope and the intercept of the linear relationships in the left panel, for every wealth bin. Parameters have been adjusted to account for the data transformations, as explained in Appendix~\ref{sec:transform-param-asinh}. Areas around the lines indicate 95\% confidence intervals, estimated using a bootstrap procedure described in Appendix~\ref{sec:std-err}.
        \end{minipage}
    \end{subfigure}
\caption{Estimation of Consumption by Wealth}
\end{center}
\label{fig:propensity-consume}
\end{figure}

Two facts stand out. First, there has been a structural break between 1962--1977 and 1978--2019. Indeed, it is impossible to account for wealth's evolution during both periods by assuming the same linear relationship on the phase portrait: the underlying accumulation process (i.e., the parameters of the propensity to consume) must have changed. Second, within each of these periods, the relationship between the left-hand side and the right-hand side of equation~(\ref{eq:estimation-complete}) is indeed linear. Therefore, a parsimonious model, with a constant mean and variance of consumption by wealth, can account for the trajectory of wealth since 1978, and, separately, for the trajectory between 1962 and 1977. If we focus, for example, on the post-1978 period, we can see the dynamics described in Figure~\ref{fig:phase-portrait} at play. We start in 1978, with a low but rapidly increasing inequality level. But as inequality goes up, the pace at which it increases slows down progressively.

Note that, while there is unmistakable evidence that the intercept of the linear relationship (which captures drift) has changed between periods, there is no clear sign that the slope (which captures mobility) is different.\footnote{This is partly the result of a smaller sample size over 1962--1977.} In light of this, and to get more robust estimates, I assume the same mobility parameter over both periods and only let the drift vary. I apply the same model within all wealth bins: for each of them, I fit two linear relationships with the same slope. I use \citet{deming_statistical_1943} regressions to account the presence of error terms on both sides of equation~(\ref{eq:estimation-complete}). Appendix~\ref{sec:deming-estimation-procedure} provides details of the procedure, alongside robustness checks. I extract the coefficients from these regressions and transform them so that they can be interpreted in terms of the mean and variance of consumption (see Step~2 in Section~\ref{sec:identification-final}, as well as Appendix~\ref{sec:transform-param-asinh} for additional adjustments to account for the inverse hyperbolic sine transform of wealth). This finally allows me to plot Figure~\ref{fig:diffu-drift-ci-top}, the profile of the mean and the variance of consumption by wealth. This figure also displays 95\% confidence intervals, calculated using a bootstrap procedure, which accounts for the presence of error terms on both sides of equation~(\ref{eq:estimation-complete}), as well as autocorrelations across years and wealth bins, described in Appendix~\ref{sec:std-err}.

Several findings emerge from Figure~\ref{fig:diffu-drift-ci-top}. First, the variance of consumption is large, which implies a significant role for mobility in the wealth distribution. Second, on average, people consume a significant fraction of their wealth, even at the top, and even in periods of increasing wealth inequality. This matters, in particular, for our understanding of the wealth distribution in the steady state. Significant consumption levels at the top --- in general exceeding income --- create a tendency for large wealth holdings to reverse toward the mean. In the long run, the reversion towards the mean counterbalances mobility's effect, making it possible for a steady-state distribution to emerge.\footnote{The presence of demographic effects also contributes to the existence of a nondegenerate steady-state.} If consumption at the top were too low, then wealth at the top would grow without bounds, and so would inequality. Note that at no point did I restrict the parameter values to force the existence of a steady state: a nondegenerate steady state arises naturally from the data in a model with constant drift and constant mobility. Finally, we see that changes in the average consumption between 1962--1978 and 1979--2019 are most significant at the top of the distribution (i.e., wealth above 50 times the average income), which aligns with the view that top wealth holders have been the primary drivers of rising wealth inequality. 

\paragraph{Relation to the Rest of the Literature}

\begin{figure}[ht]
    \centering
\begin{tikzpicture}

\fill[color=white,left color=RubyRed!10, right color=white] (0,0) rectangle (2,10);
\fill[color=DarkSkyBlue!30] (0, 0) -- (-10, 10) -- (0, 10);
\fill[color=MaximumGreen!10] (0, 0) -- (-10, 0) -- (-10, 10);


\fill[color=Mahogany!20] (0, 0) -- (-9, 10) -- (-10, 10) -- (-10, 9) -- cycle;

\draw[color=Charcoal,fill=white] (-9.8,0.2) -- (-9.8,2.7) -- (-7.3,0.2) -- cycle;
\draw[color=Charcoal,fill=white] (-0.2,9.8) -- (-2.7,9.8) -- (-0.2,7.3) -- cycle;
\draw[color=Charcoal,fill=white] (-9.8,9.8) -- (-7.3,9.8) -- (-9.8,7.3) -- cycle;
\draw[color=Charcoal,fill=white] (-0.2,0.2) -- (-2.7,0.2) -- (-0.2,2.7) -- cycle;

\node[rotate=-45,text width=2cm,align=center,below,scale=0.8] at (-8.6, 1.4) {lower inequality \\ $\downarrow$};
\node[rotate=-45,text width=2cm,align=center,above,scale=0.8] at (-1.4, 8.6) {$\uparrow$ \\ higher inequality};
\node[rotate=45,text width=2cm,align=center,below,scale=0.8] at (-1.42, 1.42) {slower transitions \\ $\downarrow$};
\node[rotate=45,text width=2cm,align=center,above,scale=0.8] at (-8.7, 8.7) {$\uparrow$ \\ faster transitions};

\node[rectangle,scale=0.9,color=RubyRed,fill=white,draw=RubyRed,text width=2.5cm,align=center] at (2.5, 5) {degenerate steady-state};

\draw[->,very thick,color=Charcoal] (-10,0) -- (2,0);
\node[below] at (-5, 0) {drift parameter $\mu$ ($\rightarrow$ average savings)};
\draw[->,very thick,color=Charcoal] (0,0) -- (0,10);
\node[above,rotate=-90] at (0, 5) {diffusion parameter $\sigma^2$ ($\rightarrow$ mobility)};

\draw[->,very thick,color=Mahogany] (-10.25, 6) node[above left,text width=3.5cm,align=center,scale=0.7]{to get steady-state inequality levels that match reality, models must have parameters close to this diagonal; models on that diagonal still feature different transition speeds} to[bend right] (-6, 6);

\draw[very thick] (-11.7, 4.5) node{\textbf{1}} circle (0.3);
\draw[pattern=north east lines] (-8, 2) circle (0.5);
\draw[->,very thick] (-9.5, 3) node[above,text width=5cm,align=center,scale=0.7]{\textbf{\AtNextCite{\color{black}}\citet{aiyagari_uninsured_1994}} and other \textbf{\AtNextCite{\color{black}}\citet{bewley_permanent_1977}} models, underestimate inequality because savings and mobility at the top are too low} to[bend right] (-8.5, 2.3);

\draw[very thick] (-6, 2.6) node{\textbf{2}} circle (0.3);
\draw[pattern=north east lines] (-2, 2) circle (0.5);
\draw[->,very thick] (-5, 2.2) node[below,text width=6.5cm,align=center,scale=0.7]{\AtNextCite{\color{black}} models with bequests motives and/or a taste for wealth \textbf{\citep{carroll_why_1998,de_nardi_wealth_2004}} achieve higher inequality through higher savings alone} to[bend left=50] (-2.4,2.4);

\draw[very thick] (-3.9, 6.4) node{\textbf{3}} circle (0.3);
\draw[pattern=north east lines] (-3.7, 3.7) circle (0.5);
\draw[->,very thick] (-3.7, 5.7) node[right,text width=5cm,align=center,scale=0.7]{\AtNextCite{\color{black}} models with stochastic returns \textbf{\citep{quadrini_entrepreneurship_2000,cagetti_entrepreneurship_2006,benhabib_distribution_2011}} also feature higher mobility} to[bend right=50] (-4, 4.2);

\draw[very thick] (-6.3, 9.2) node{\textbf{4}} circle (0.3);
\draw[pattern=north east lines] (-7, 7) circle (0.5);
\draw[->,very thick] (-6, 8.5) node[right,text width=5cm,align=center,scale=0.7]{\textbf{this paper} finds that to match both the levels and the transition speeds of inequality, we actually need lower savings but even higher mobility} to[bend right=30] (-6.9, 7.6);


\end{tikzpicture}
    \caption{Estimated Parameters and Their Relationship to the Literature}
    \label{fig:diagram-models}
\end{figure}
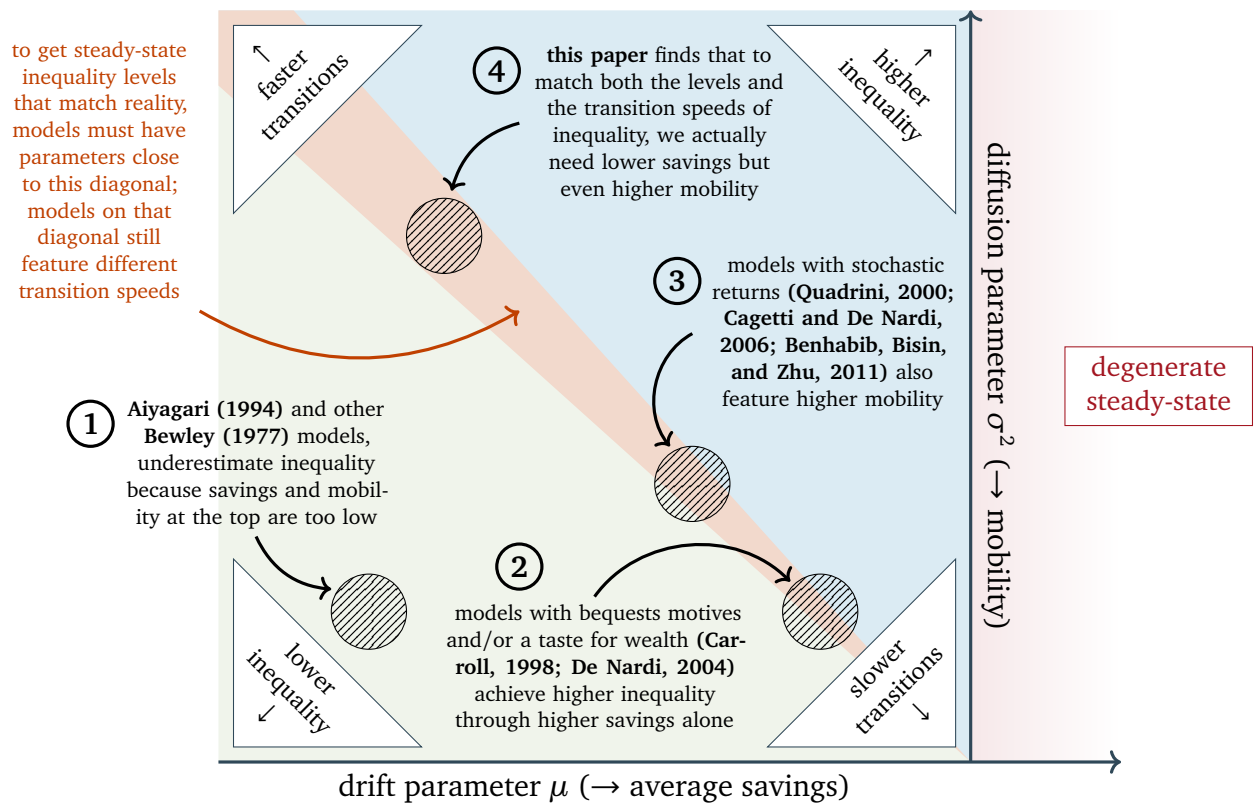

How do these estimates relate to the rest of the literature, and what can we learn from them? We can simplify the situation by focusing solely on the top of the distribution and on the two key parameters (the drift and the mobility) while ignoring the other, less important phenomenons (mobility gradient, demography, etc.) Figure~\ref{fig:diagram-models} summarizes the situation. The various models of the literature can be schematically represented on a two-dimensional plane, where the $x$-axis corresponds to the amount of drift ($\mu$), and the $y$-axis corresponds to the amount of mobility ($\sigma^2$). In this representation, all models with a nondegenerate steady-state lie on the top left quadrant, pictured in Figure~\ref{fig:diagram-models}. The bottom half is not meaningful because it implies negative mobility; the top right quadrant implies an infinite steady-state inequality because there is no reversion towards the mean at the top.\footnote{The drift term $\mu$ is normalized by the economy's growth rate, so it is still possible to have a nondegenerate steady-state if people at the top experience positive wealth growth on average as long as that growth remains below the economy's growth rate. Demography and the mobility gradient are other phenomenons that make it possible to sustain a steady state with positive drift at the top. In any case, it remains true that the emergence of a steady-state requires limited wealth growth at the top.}

What wealth distribution is implied by the different points? At the steady-state, the derivative of the wealth distribution with respect to time is zero, and therefore equation~(\ref{eq:kf-int}) becomes:
\begin{equation}\label{eq:kf-int-steady}
0 = \mu(w) - \frac{1}{2}\sigma^2(w) \frac{\partial_w f(w)}{f(w)}
\end{equation}
This equation characterizes a set of straight diagonal lines for every distribution of wealth, passing through the origin of the plane. Each of them is an ``isoinequality'' line, defining the set of parameter values that lead to the same steady-state distribution of wealth. These isoinequality lines indicate that models can attain any long-run inequality level, either using a high-savings/low mobility regime (points close to the origin) or using a low-savings/high mobility regime (points far away from the origin).

The set of lines that roughly match the inequality levels typically seen in the United States is colored in orange in Figure~\ref{fig:diagram-models}. Combinations of parameters above this line correspond to higher inequality; combinations that lie below, to lower inequality. Importantly, models on the same isoinequality line still differ when it comes to dynamics. High-savings/low-mobility regimes feature slow transitions between steady-states, while the opposite holds for low-savings/high-mobility regimes.\footnote{Figure~\ref{fig:phase-portrait} can demonstrate this. Increasing the slope of the line while keeping the same point of intersection with the $x$-axis leads to the same steady-state but with higher derivatives of the distribution with respect to time (on the $y$-axis), and therefore faster transitions.}

We can now study where the different models in the literature stand and compare them to this paper's estimate. Start the from \citet{aiyagari_uninsured_1994} and similar \citet{bewley_permanent_1977} models (item~1, Figure~\ref{fig:diagram-models}). These models notoriously underestimate inequality for two reasons. First, people in these models accumulate wealth only for precautionary or consumption smoothing motives, so they have no reasons to accumulate the type of large wealth holdings we observe in practice. Second, because everyone earns the same rate of return, mobility at the top is only the result of labor income shocks. Since labor income is small compared to wealth at the top of the distribution, there is limited mobility as well. These facts put these models squarely in the bottom left corner of Figure~\ref{fig:diagram-models}. To fix this problem, a second set of models (item~2, Figure~\ref{fig:diagram-models}) introduced additional saving motives \citep{carroll_why_1998,de_nardi_wealth_2004} such as a taste for wealth, or for bequests. These models manage to match observed inequality levels by increasing savings at the top but do not fundamentally change the extent of wealth mobility. This puts them within the area of realistic steady-state inequality levels (orange diagonal) by moving them to the right of \citet{aiyagari_uninsured_1994} models in the bottom right corner. A third set of models (item~3, Figure~\ref{fig:diagram-models}) also introduces idiosyncratic stochastic returns \citep{quadrini_entrepreneurship_2000,cagetti_entrepreneurship_2006,benhabib_distribution_2011}. This increases mobility at the top because people with the same initial wealth may now move up or down the distribution depending on whether they get high or low returns. That being said, mobility remains quite limited because it is only the result of heterogeneous labor and capital income. Conditional on wealth, however, consumption remains essentially homogeneous because of consumption smoothing. This limited amount of mobility implies slow dynamics, as was identified by \citet{gabaix_dynamics_2016} for income inequality. This paper (item~4, Figure~\ref{fig:diagram-models}) finds that to match the dynamics of inequality that we observe, we need even higher mobility (and consequently lower savings).\footnote{This is a parsimonious alternative to the solutions suggested by \citet{gabaix_dynamics_2016}, which involve the introduction of additional short-run dynamics at the beginning of transition periods.}

We can also the synthetic savings method \citep{saez_wealth_2016,kuhn_income_2020,garbinti_accounting_2021} to this paper. In equation~(\ref{eq:kf-int}), define the synthetic saving as $\tilde{\mu}_t(w) \equiv \mu_t(w) - \frac{1}{2}\sigma_t^2(w) \partial_w f_t(w)/f_t(w)$, and then apply the change of variable $w = Q_t(p)$, where $0<p<1$ is a fractile and $Q_t = F_t^{-1}$ is the quantile function. We get:
\begin{equation*}
\partial_t Q_t(p) = \tilde{\mu}_t(Q_t(p))
\end{equation*}
which indeed corresponds to the traditional definition of synthetic savings. Note, however, that the definition of $\tilde{\mu}_t(w)$ depends on the distribution of wealth, so a more accurate formula would be $\partial_t Q_t(p) = \tilde{\mu}_t(Q_t(p),\partial_p Q_t(p))$. Synthetic saving rates methods can either choose to ignore the dependency on $\partial_p Q_t(p)$, or explicitly eliminate it by setting $\sigma_t(w) = 0$. 

\subsection{Validation}\label{sec:validation}

We can assess the validity and consistency of the model in two different ways. First, we can look at its \textit{internal consistency}. (If we simulate the evolution of the wealth distribution using the estimated parameters, do we reproduce the observed data?) Second, we can look its \textit{external consistency}. (Are the estimated parameters consistent with external observations?) In this section, I address both questions.

\paragraph{Replication of Observed Wealth Inequality Dynamics}

Starting from the distribution of wealth in 1962, and assuming that all the factors which affect the wealth distribution remain at their observed value, I can use the mean and variance of the propensity to consume estimated from the model to simulate the evolution of the wealth distribution. An elementary requirement for the general validity of the approach is that the evolution of the simulated wealth distribution matches the one observed in reality.

\begin{figure}[ht]
\captionsetup[subfigure]{justification=centering}
\begin{center}
    \begin{subfigure}[t]{0.499\textwidth}
        \centering
        \includegraphics[width=\textwidth]{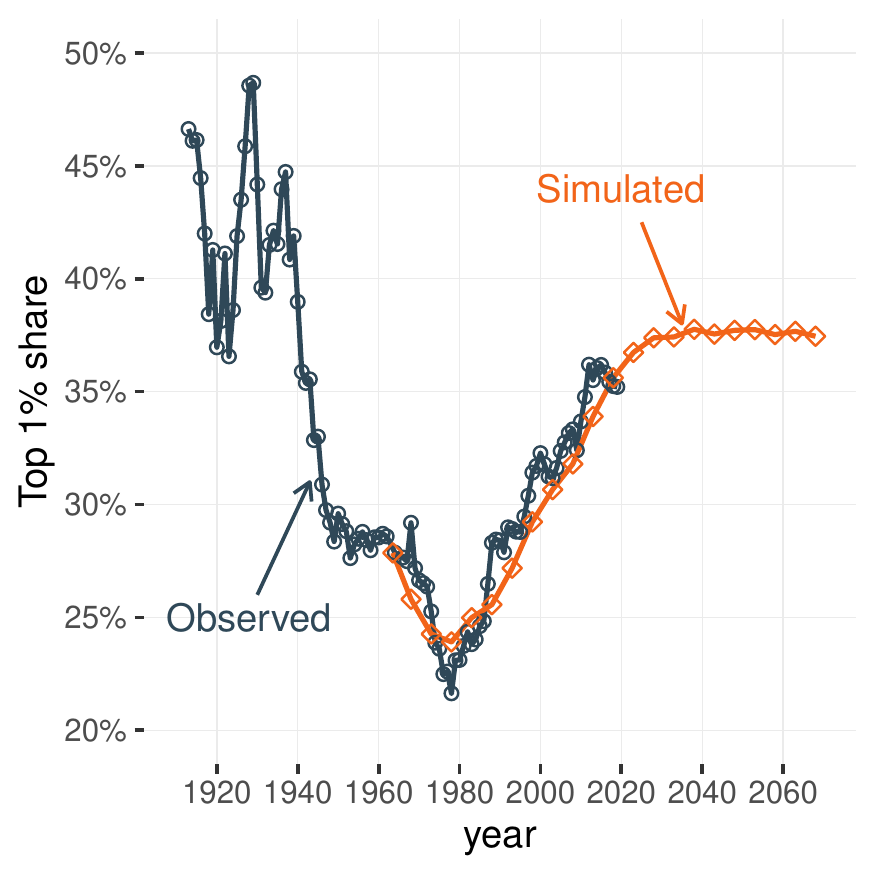}
        \caption{Top 1\% Share}
        \label{fig:actual-simul-top1}
    \end{subfigure}%
    \begin{subfigure}[t]{0.499\textwidth}
        \centering
        \includegraphics[width=\textwidth]{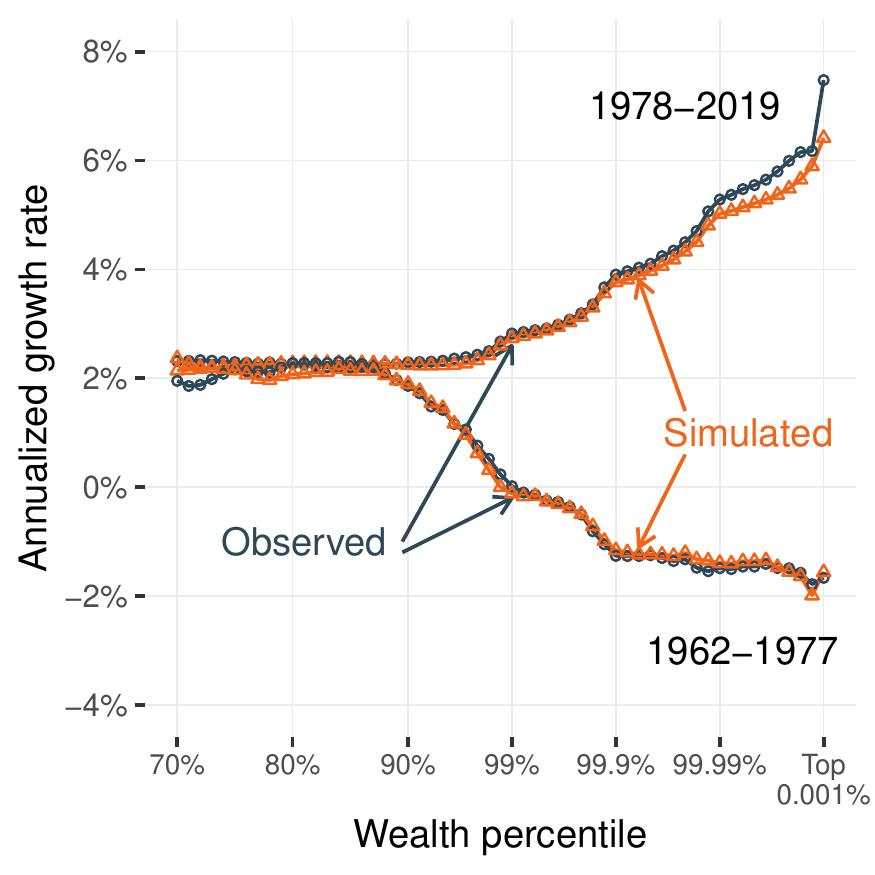}
        \caption{Growth Incidence Curves}
        \label{fig:actual-simul-gic}
    \end{subfigure}

\vspace{1em}
\begin{minipage}{0.8\linewidth}
\footnotesize \textit{Note:} The simulation of the model involves randomly simulated values: to filter out the resulting statistical noise, I simulate the model five times and take the median of the simulations. See main text for details. After 2019, the simulation use the demographic projection (medium variant) from the World Population Prospects \citep{united_nations_world_2019} and otherwise assumes that economic parameters remain fixed at their latest observed values.
\end{minipage}

\caption{Comparison of the Model with Observed Dynamics}
\label{fig:actual-vs-simul}
\end{center}
\end{figure}

Figure~\ref{fig:actual-vs-simul} confirms that this is the case. Figure~\ref{fig:actual-simul-top1} compares the evolution of the top 1\% wealth share in real and simulated data, and shows that we reproduce both the decrease in wealth inequality over 1962--1978, and the increase over 1979--2019. Figure~\ref{fig:actual-simul-gic} goes further by showing the \glspl{gic} for the top 30\% (a group that has consistently owned about 90\% of total wealth since the 1960s).\footnote{Average wealth below the 70th percentile is very low, even zero or negative for some percentiles, and therefore it is not meaningful to calculate their growth rates.} Again, we reproduce the observed growth rates for every percentile (and for fractions of a percentile within the top 1\%), both over 1962--1977 and over 1978--2019.

\paragraph{Consumption}

To evaluate the external validity of the model, I now compare the model's estimates of consumption to external data. It is important to note that direct evidence on the value of these parameters is scant --- which is, in fact, a central motivation for the indirect approach taken in this paper. Nonetheless, we can use two surveys to shed some light on these values. First, there is the \gls{scf}, which is usually cross-sectional but had a panel wave between 2007 and 2009. Then there is the \gls{psid}, which has been recording wealth every five years since 1984.

\begin{table}[ht]
    \centering
    \begin{threeparttable}
    \begin{tabular}{cccccc}
\toprule
                      &          & \multicolumn{2}{c}{Model}                                                               & SCF                                               & PSID                                               \\ 
                      &          & 1962--1977                                 & 1978--2019                                 & 2007--2009                                        & 1984--2019                                         \\ \midrule
\multirow{4}{*}{\begin{tabular}[c]{@{}c@{}}Mean\\ (\% of wealth)\end{tabular}} & 50--90\% & 39\% & 31\% & 32\% & 38\% \\ \cmidrule(l){2-6}
                      & 90--99\% & 15\% & 15\% & 17\% & 14\% \\ \cmidrule(l){2-6}
                      & Top 1\%  & 13\% & 9\% & 21\% & 12\% \\ \midrule
\multirow{4}{*}{\begin{tabular}[c]{@{}c@{}}Std. Dev.\\ (\% of wealth)\end{tabular}}   & 50--90\% & \multicolumn{2}{c}{50\%}                    & 42\%   & 61\%   \\ \cmidrule(l){2-6}
                      & 90--99\% & \multicolumn{2}{c}{27\%}                    & 31\%   & 41\%   \\ \cmidrule(l){2-6}
                      & Top 1\%  & \multicolumn{2}{c}{37\%}                    & 35\%   & 31\%   \\ \bottomrule
\end{tabular}

    \begin{tablenotes}
    \footnotesize \textit{Source:} Author's calculations using the \glsfirst{scf} and the \glsfirst{psid}. \textit{Notes:} All numbers are yearly values: data from the \gls{scf} and the \gls{psid}, which are calculated between several years, are rescaled by a factor $\Delta t$ (for the means) and $\sqrt{\Delta t}$ (for the standard deviation) where $\Delta t$ is the number of years, to make estimates comparable. I winsorize the bottom and the top $2.5\%$ of survey-based consumption estimates to limit the impact of measurement error. For the model, values may differ from Table~\ref{tab:decomposition-top1} because they are population-weighted, rather wealth-weighted.
    \end{tablenotes}
    \end{threeparttable}
    \caption{Distribution of the Propensity to Consume: Comparison With Other Sources}
    \label{tab:propensity-consumption}
\end{table}

Because these surveys record wealth and income longitudinally, I can use them to estimate the consumption of each respondent. In principle, I can calculate consumption as the difference between income and the variation of wealth between two consecutive interviews. In practice, doing so involves considerable difficulties, and estimates based on that approach only exist to give rough orders of magnitudes.\footnote{First, the surveys do not record the income earned between interview years, so I have to assume that the respondent's incomes have not changed between interviews. Second, some forms of income, especially accrued capital gains, are not always properly recorded in the surveys, so changes in asset prices might affect the results. Third, since we are calculating consumption as a residual, we are more sensitive to measurement error, which can be a significant issue in surveys. Fourth, surveys record their data at time intervals that are less frequent than this paper's yearly data. I rescale the survey estimates of the mean and the variance of consumption by a factor $\Delta t$, where $\Delta t$ is the time between interviews, to annualize all numbers and make them comparable. But that doesn't fix the fact that we are discretizing the underlying continuous-time process and that the larger time steps lead to coarser approximations.} Table~\ref{tab:propensity-consumption} nonetheless provides the mean and variance of consumption in the surveys, as a fraction of wealth, for three brackets, estimated so as to be as comparable as possible to the model's parameters. Overall, we observe broadly similar numbers. In particular, this exercise confirms our two notable findings: significant consumption levels (including at the top) and important mobility throughout the distribution. One of the largest discrepancies between the model and the surveys concern the mean consumption of the top 1\% which is twice as high in the \gls{scf} than according to the model. This difference could be explained by the fact that the survey was conducted over the great recession and that the consumption estimate in the \gls{scf} might be polluted by asset price declines.

\paragraph{Wealth Mobility}

Another way to check the external validity of my approach is to compare the mobility implied by the model with the mobility we find in the survey data. This approach is less detailed, but more robust than attempts to estimate consumption, because it does not require information on income. Figure~\ref{fig:mobility-rank} compares wealth mobility in the model with the \gls{scf} and the \gls{psid}. On the $x$-axis, I group observations according to their wealth rank, and on the $y$-axis, I plot the distribution of the wealth ranks for each group in the following survey wave, using the median rank and the interquartile range. Then I estimate comparable quantities using the model. For the \gls{scf} (Figure~\ref{fig:mobility-rank-scf}), I estimate mobility over two years to match the frequency of the survey and show results up to the top $0.1\%$. For the \gls{psid}, the interval is five years, and given the smaller sample size and lack of oversampling at the top, I only go up to the top 5\%.

\begin{figure}[ht]
\begin{center}
    \begin{subfigure}[t]{0.499\textwidth}
        \includegraphics[width=\textwidth]{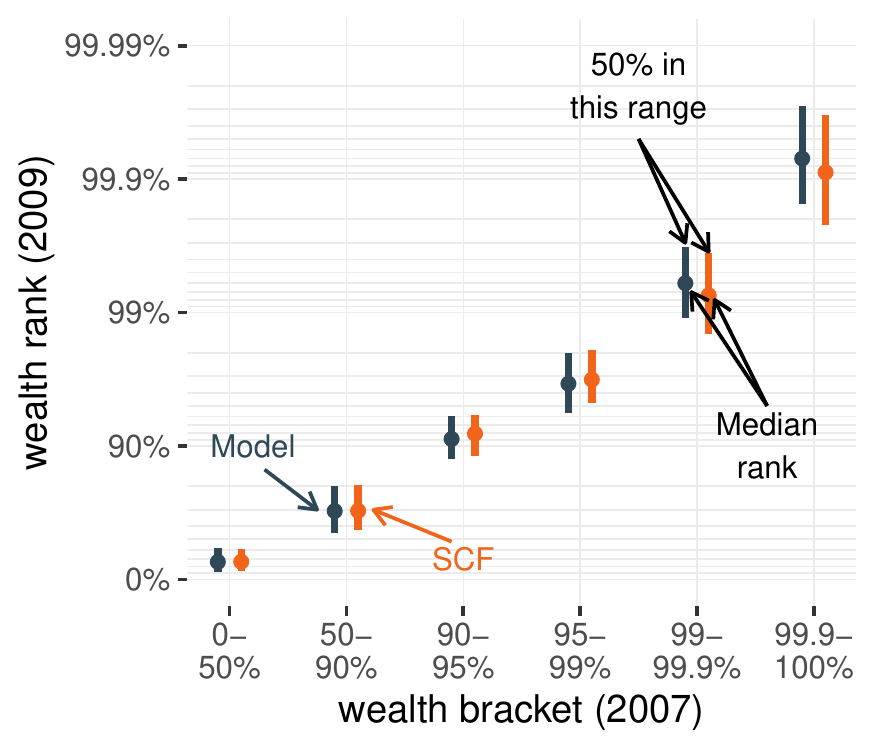}
        \caption{Panel \gls{scf} (2007--2009)}
        \label{fig:mobility-rank-scf}
    \end{subfigure}%
    \begin{subfigure}[t]{0.499\textwidth}
        \includegraphics[width=\textwidth]{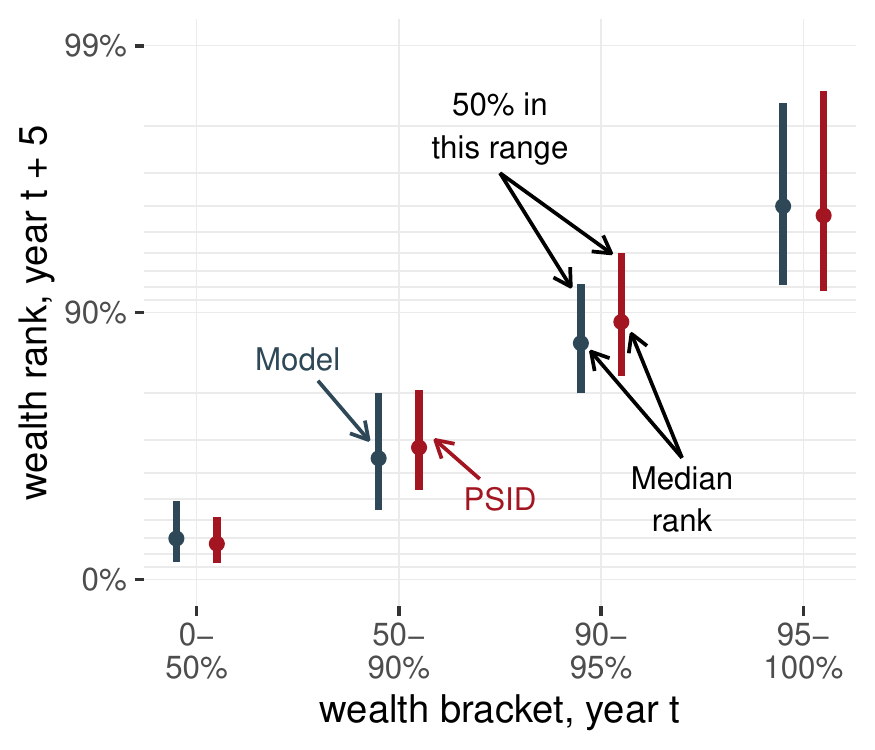}
        \caption{\gls{psid} (1984--2019)}
        \label{fig:mobility-rank-psid}
    \end{subfigure}
    \begin{minipage}{0.8\linewidth}
    
    \vspace{1em}
    \footnotesize \textit{Source:} Own computation using the Panel \glsfirst{scf} (2007--2009) and the \glsfirst{psid} (1984--2019).
    \end{minipage}
\end{center}
\vspace{-1em}
\caption{Comparison of Mobility in the Model with Panel Survey Data}
\label{fig:mobility-rank}
\end{figure}

Once again, the model is broadly consistent with the panel survey data. There is a fair amount of persistence in the wealth rank over time. On average, an observation in wave $n+1$ remains close to its rank in wave $n$. But there is noticeable variability around this central tendency, which shows that there is still significant movement in the wealth distribution over time. The magnitude of this variability, as shown by the interquartile ranges in Figure~\ref{fig:mobility-rank}, is similar in the data and the model.

\section{The Drivers of Wealth Inequality}

\subsection{Decomposition of Wealth Growth}

I can use equation~(\ref{eq:estimation-complete}) of the model to get a straightforward decomposition of the growth of any part of the wealth distribution. This decomposition is similar to that of \citet{gomez_decomposing_2022}, with the difference being that it is estimated directly from comprehensive wealth data in the United States, and accounts for more factors. To understand the decomposition, define $Q_t(p)$, the p-$th$ quantile of the wealth distribution at time $t$. By definition, $F_t(Q_t(p)) = p$. Taking the derivative of this expression with respect to time, and letting $w=Q_t(p)$ we get that $\partial_t Q_t(p) = -\partial_t F_t(w)/f_t(w)$. Therefore, the left-hand side of equation~(\ref{eq:estimation-complete}) is equal to the variation of the $p$-th quantile. Let $p=99\%$ and let $W_t(p)$ be the average wealth of the top 1\%. We can write the rate of growth of $W_t(p)$ as the average of the growth of the individual fractiles that make up the top 1\%, i.e., $\partial_t W_t(p) = \frac{1}{1-p}\int_p^{+\infty} \partial_t Q_t(r)\,\dif r$. Therefore, if we average the effects on the right-hand side of equation~(\ref{eq:estimation-complete}), we can decompose the growth of the top percentile of wealth.

\begin{table}[ht]
    \centering
    \begin{threeparttable}
    \begin{tabular}{lrrr}
  \toprule
Effect & 1962--1978 & 1979--2019 & Difference \\ 
  \midrule
Demography & $-1.6\%$ & $-1.8\%$ & $-0.2\%$ \\ 
   [0.5em]Inheritance & $1.6\%$ & $1.1\%$ & $-0.6\%$ \\ 
   [0.5em]Marriages \& Divorces & $-0.07\%$ & $-0.02\%$ & +$0.05\%$ \\ 
   [0.5em]Drift & $-10.1\%$ & $-4.1\%$ & +$6.1\%$ \\ 
  \quad \textit{incl. Labor income} & $2.2\%$ & $2.2\%$ & $-0.007\%$ \\ 
  \quad \textit{incl. Capital income} & $4.7\%$ & $4.5\%$ & $-0.3\%$ \\ 
  \quad \textit{incl. Capital gains} & $-1.8\%$ & $1.0\%$ & +$2.9\%$ \\ 
  \quad \textit{incl. Consumption} & $-15.2\%$ & $-11.8\%$ & +$3.5\%$ \\ 
   [0.5em]Mobility & $9.4\%$ & $9.0\%$ & $-0.4\%$ \\ 
  \quad \textit{incl. Income} & $0.6\%$ & $0.4\%$ & $-0.2\%$ \\ 
  \quad \textit{incl. Consumption} & $8.8\%$ & $8.6\%$ & $-0.2\%$ \\ 
   \midrule
Equals: Average annual growth & $-0.8\%$ & $4.2\%$ & +$5.0\%$ \\ 
   \bottomrule
\end{tabular}

    \begin{tablenotes}
    \footnotesize \textit{Source:} Author's calculations. \textit{Notes:} Growth is adjusted for inflation. The growth numbers correspond to the average annual growth of the top 1\% of the wealth distribution over each period. Values differ from Table~\ref{tab:propensity-consumption} because they are wealth-weighted, rather population-weighted.
    \end{tablenotes}
    \caption{Decomposition of Wealth Growth in the Top 1\%}
    \label{tab:decomposition-top1}
    \end{threeparttable}
\end{table}

Table~\ref{tab:decomposition-top1} shows this decomposition separately for the period of decreasing inequality (1962--1978) and increasing inequality (1979--2019). For each period, I express the total rate of growth of the top wealth percentile as the sum of its different components. One virtue of this presentation is that it puts all the effects we consider here on the same scale --- although they are all conceptually very distinct. We can say, for example, that demography over 1979--2019 had the same effect as a 1.8~pp. decrease in the rate of return would have had.

Two facts stand out in Table~\ref{tab:decomposition-top1}. First, drift and mobility dominate the other factors by far. Each of them have an opposite effect, so when combined, they partially cancel out. But this should not make us overlook that these effects are each very sizable on their own and that the trajectory of wealth inequality is effectively determined by temporary imbalances between the two. Second, the increase in the drift --- which accounts for most of the change in the growth rate of the top 1\% --- is primarily driven by two factors: an increase in capital gains and a decrease in average consumption. The increase in capital gains comes largely from the fact that wealth holders experienced capital losses over 1962--1978. On the other hand, regular capital incomes cannot account for the rise of wealth inequality since they have been lower during 1979--2019 than during 1962--1978.

While Table~\ref{tab:decomposition-top1} provides a valuable overview of the drivers of top wealth growth, it is limited in its ability to tell us how the different factors have truly impacted wealth inequality. Indeed, many of the terms in equation~(\ref{eq:estimation-complete}) are endogenous to the distribution of wealth itself, and wealth evolves through several feedback loops between each side of equation~(\ref{eq:estimation-complete}). For this reason, we cannot simply calculate a counterfactual growth rate for the wealth of the top 1\% by changing specific terms in the decomposition presented in Table~\ref{tab:decomposition-top1}. The next section goes deeper by performing counterfactual simulations where we change some parameters.

\subsection{Counterfactuals}\label{sec:counterfactuals}

In this section, I change various parameters of the wealth accumulation process and observe how the distribution of wealth would have evolved under these different circumstances. I stress that these counterfactuals exist to explore the direct, proximate consequences of the effects under study. In particular, when I consider changes in, say, income or taxation, I do so while explicitly keeping the consumption unchanged, so that I can explore one specific channel at a time. Therefore, this exercise should not be interpreted as a full counterfactual, which would incorporate both direct and indirect effects. It remains a powerful way to explore many mechanisms and can be used to clarify the facts that more exhaustive models would need to match.

\begin{figure}[p]
\vspace*{-2em}
\begin{center}
    \begin{subfigure}[t]{0.499\textwidth}
        \includegraphics[width=\textwidth]{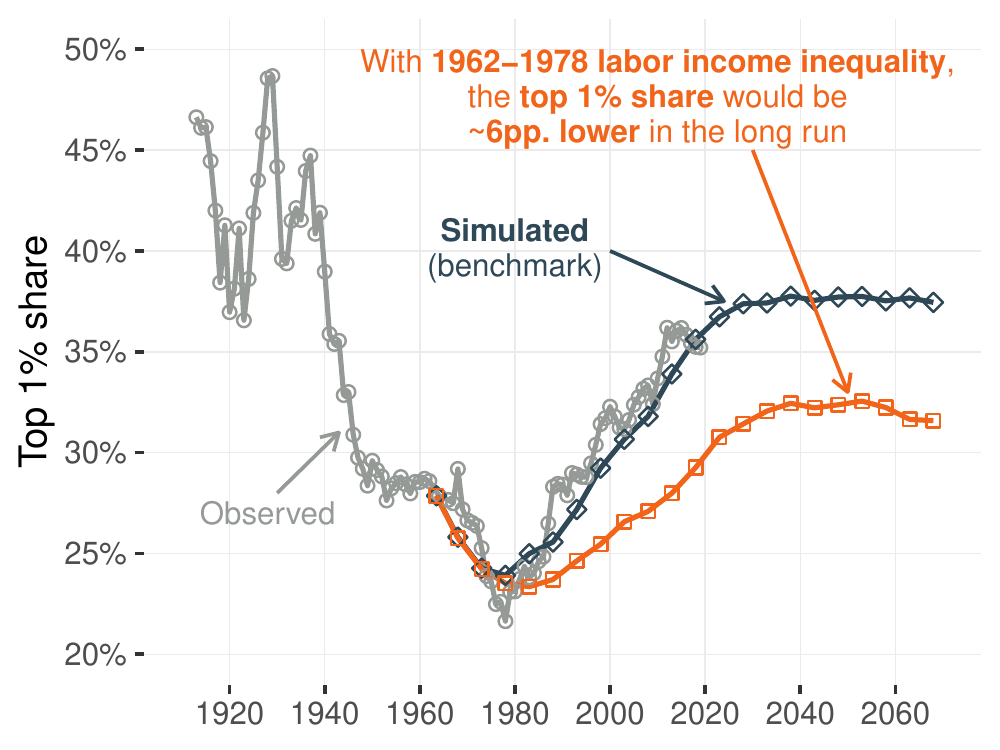}
        \caption{1962--1978 Labor Income Inequality}
        \label{fig:wealth-counterfactuals-labor-income}
    \end{subfigure}%
    \begin{subfigure}[t]{0.499\textwidth}
        \includegraphics[width=\textwidth]{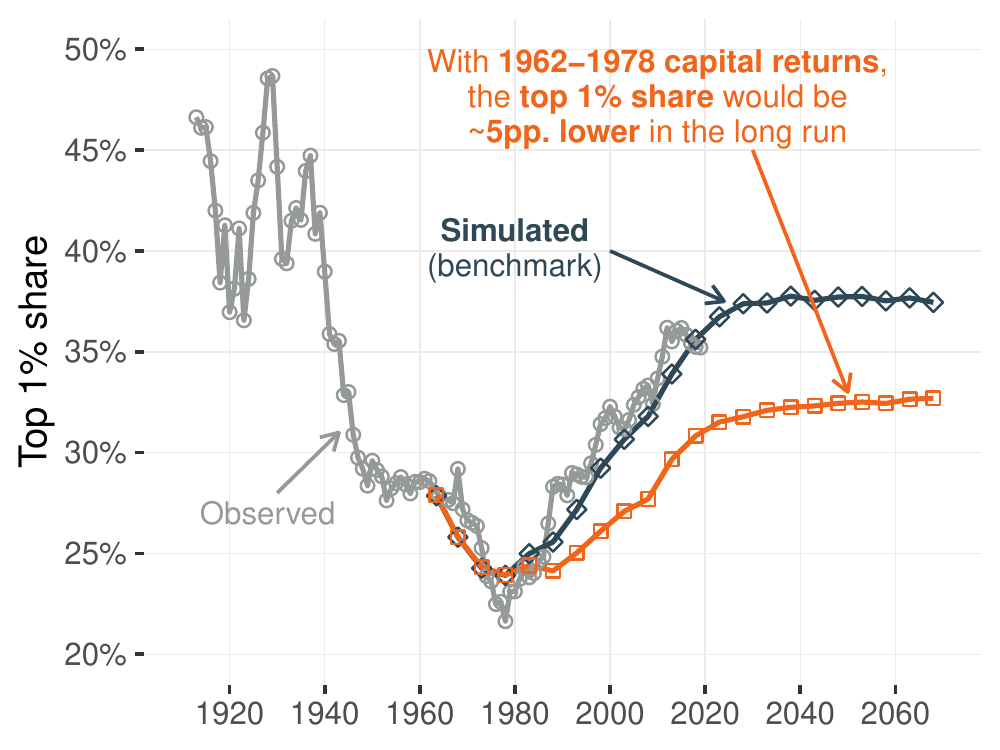}
        \caption{1962--1978 Rates of Return on Wealth}
        \label{fig:wealth-counterfactuals-capital}
    \end{subfigure}
    
    \begin{subfigure}[t]{0.499\textwidth}
        \includegraphics[width=\textwidth]{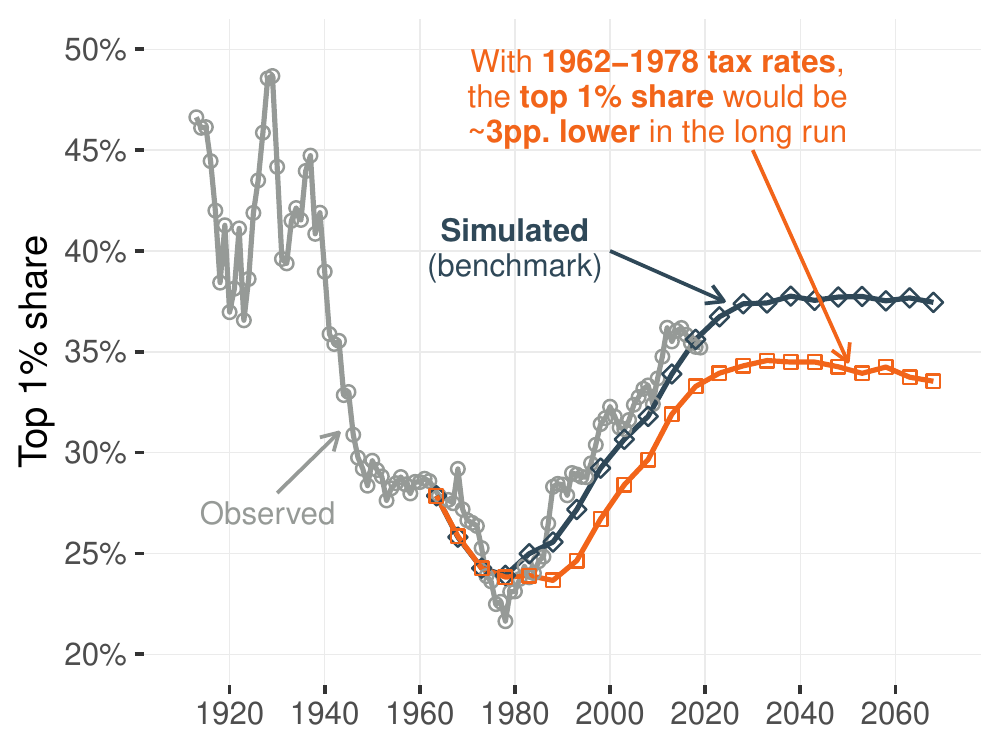}
        \caption{1962--1978 Effective Tax Rates}
        \label{fig:wealth-counterfactuals-taxes}
    \end{subfigure}%
    \begin{subfigure}[t]{0.499\textwidth}
        \includegraphics[width=\textwidth]{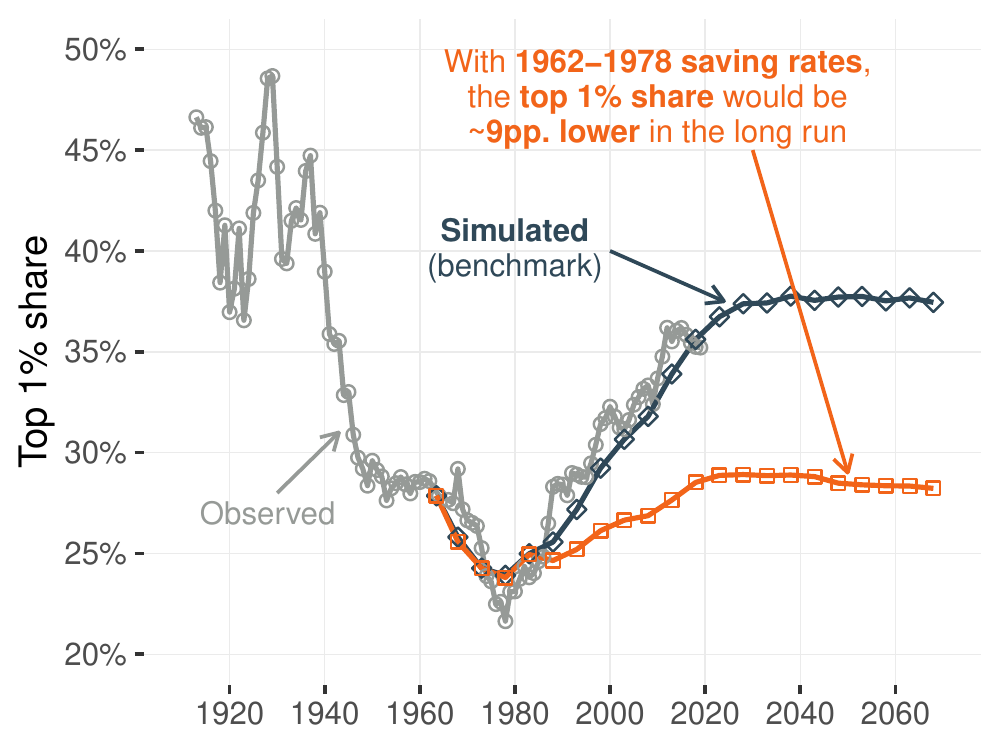}
        \caption{1962--1978 Savings}
        \label{fig:wealth-counterfactuals-conso}
    \end{subfigure}
    
    \begin{subfigure}[t]{0.499\textwidth}
        \includegraphics[width=\textwidth]{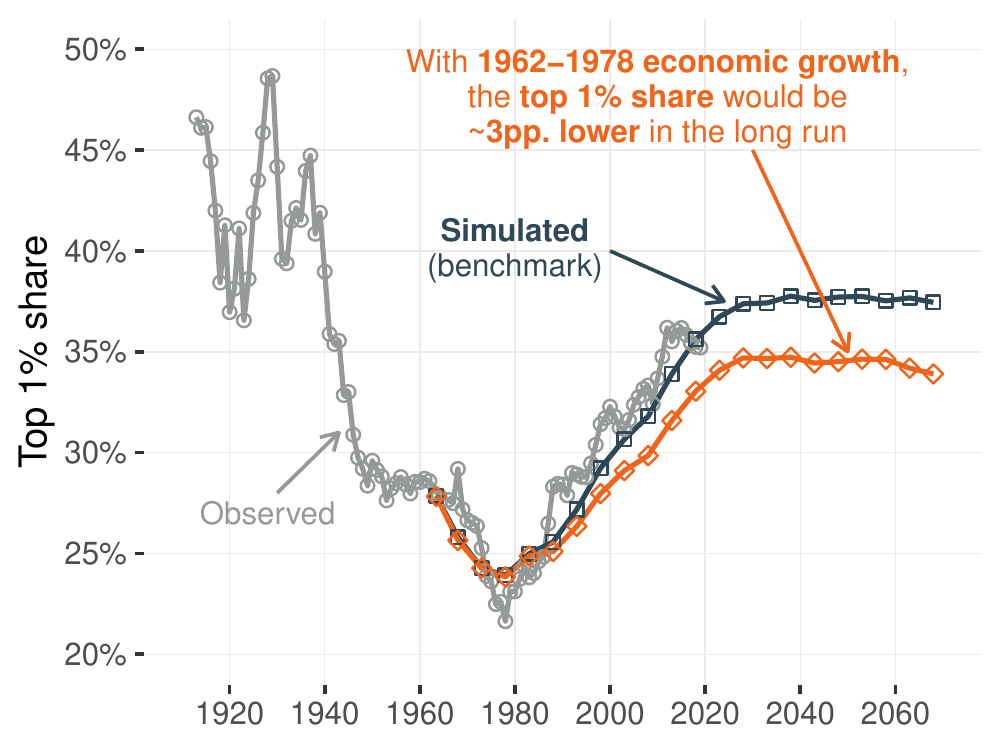}
        \caption{1962--1978 National Income Growth}
        \label{fig:wealth-counterfactuals-growth}
    \end{subfigure}%
    \begin{subfigure}[t]{0.499\textwidth}
        \includegraphics[width=\textwidth]{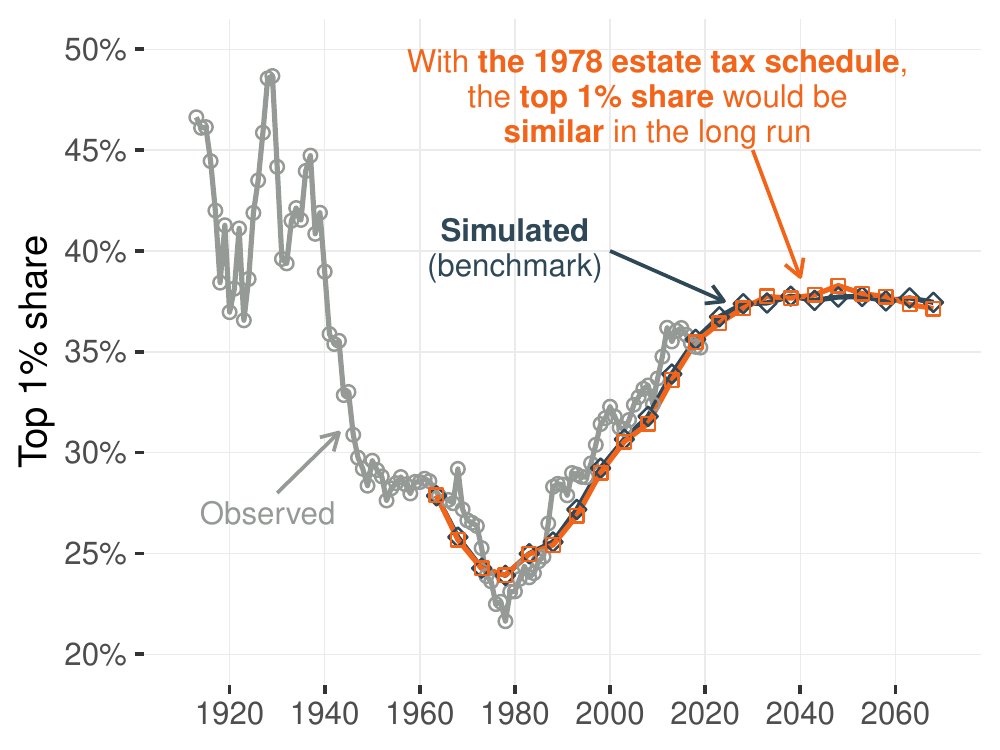}
        \caption{1962--1978 Estate Tax}
        \label{fig:wealth-counterfactuals-estate-tax}
    \end{subfigure}

\vspace{1em}
\begin{minipage}{\linewidth}
\footnotesize \textit{Note:} See main text for details. The benchmark simulation use the demographic projection (medium variant) from the World Population Prospects \citep{united_nations_world_2019} and otherwise assumes that economic parameters remain fixed at their latest observed values. The counterfactual projections change a parameter but leave the others constant. Each model simulation involves randomly simulated values: to filter out the resulting statistical noise, I simulate the model five times and take the median of the simulations.
\end{minipage}

\vspace{0.5em}
\caption{Counterfactual Evolution of Wealth Inequality}
\label{fig:wealth-counterfactuals}
\end{center}
\end{figure}

\paragraph{Benchmark}

Figure~\ref{fig:wealth-counterfactuals} shows the evolution of the top 1\% wealth share according to the model under different scenarios. I run the simulation from the beginning of the data (in 1962) to the year 2070. In each case, I compare the result to a benchmark scenario, which estimates the future evolution of wealth assuming the characteristics of the economy are held fixed after 2019. Specifically, I assume that economic growth remains at its 2010--2019 average and that the distribution of labor income and capital rates of return remain at their 2019 value. I simulate the effect of demography in the future using the projections (medium variant) from the World Population Prospects \citep{united_nations_world_2019}, and otherwise assume that the correlation between age and wealth remains constant after 2019. In this benchmark scenario, the top 1\% wealth stabilizes around 37--38\% in the long run.

\paragraph{Labor Income Inequality}

In Figure~\ref{fig:wealth-counterfactuals-labor-income}, I estimate what the distribution of wealth would look like today if the distribution of labor income had stayed the same after 1978 as it was over the 1962--1978 period. That is, I give people with a given rank in the wealth distribution after 1978 the average mean and variance of labor income from people with the same rank over 1962--1978. This implies that the distribution of labor income is held fixed. I find that the increase in the top 1\% wealth since the 1980s would have been significantly lower under those conditions and would be about 6pp. lower in the long run. The distribution of labor income has thus been an important driver of rising wealth inequality, but remains far from explaining the full rise.

\paragraph{Rates of Return of Wealth}

In Figure~\ref{fig:wealth-counterfactuals-capital}, I maintain the rates of return on capital at their average 1962--1978 value. I find an effect on the top 1\% wealth share that is similar to that of labor income. Importantly, capital gains drive the entire effect. Indeed, as shown in Table~\ref{tab:decomposition-top1}, regular capital returns have actually been lower since 1979 than during 1962--1978. But overall wealth returns have been higher, particularly due to the large capital losses observed during the 1970s. The effect seen in Figure~\ref{fig:wealth-counterfactuals-capital} is in fact similar to what we would obtain by assuming zero capital gains since 1978.

\paragraph{Taxation}

In Figure~\ref{fig:wealth-counterfactuals-taxes}, I focus my attention on taxation. Note that tax rates do not explicitly intervene in my decomposition since its parameters depend directly on the post-tax distribution of labor income and post-tax rates of return. To explore the effects of taxation, I therefore construct alternative distributions of labor income and capital return, for which I assume that pretax distributions evolved as they did in real life, but for which average effective tax rates by wealth percentile are maintained at their average 1962--1978 level. Because tax progressivity has decreased since the 1980s \citep{saez_trends_2020}, the wealthiest people in this scenario end up with less after-tax labor income and lower after-tax rates of return. As a result, they accumulate less wealth. Figure~\ref{fig:wealth-counterfactuals-taxes} shows the outcome of this process. I find that this direct effect of taxation is real but also significantly more muted than the overall effect of labor income or rates of return. This aligns with the fact that the rise in post-tax income inequality is mostly explained by rising pretax inequality and that the decrease in tax progressivity only plays a secondary role.

\paragraph{Savings}

Figure~\ref{fig:wealth-counterfactuals-conso} considers the role of the changes in savings estimated by the model. Of the different effects I consider, this one plays the most prominent role. Assuming the same average consumption as in 1962--1978, the top 1\% wealth share would be about 9pp. lower in the long run. This implies that savings have played a large role in explaining today's levels of wealth inequality: the wealthy are wealthier today than in the 1960s and 1970s in large part because they have saved more.\footnote{Naturally, such changes in consumption could themselves be the result of changes to income and taxation. But the current exercise maintains the distinction between the direct and the indirect effects of income and taxes on wealth inequality.}

\paragraph{Economic Growth}

Economic growth also plays an important role in equation~(\ref{eq:estimation-complete}). Indeed, people at the top of the wealth distribution derive most of their income from capital, so the rate at which they accumulate wealth follows the rate of return $r_t$. On the other hand, people at the bottom of the wealth distribution derive most of their income from labor, so their ability to accumulate wealth follows the growth of labor income which matches $g_t$ in the long run. This mechanism explicitly manifests itself through the fact that wealth accumulation in equation~(\ref{eq:income-consumption}) depends on $r_{it} - g_t$. This dependence of wealth inequality on $r-g$ was emphasized by \citet{piketty_capital_2014,piketty_wealth_2015}. In Figure~\ref{fig:wealth-counterfactuals-growth}, I explore the role that the slowdown of growth since the 1980s had on the wealth distribution. To that end, I increase economic growth by a constant factor over 1979--2019 to match the average growth over 1962--1978. I find a limited but noticeable effect: had the economic growth not slowed down, the top 1\% wealth share would be about 3pp. lower.

\paragraph{Estate Taxation}

Estate taxation has undergone many reforms, leading to a top marginal rate that is half as high today as it was in the 1960s. What role has estate taxation played in the increase in wealth inequality? I explore this issue in Figure~\ref{fig:wealth-counterfactuals-estate-tax}, in which I freeze the estate tax schedule after 1978. I find virtually no impact of this change on the wealth distribution. Two factors account for this finding. First, the evolution of the overall progressivity of the estate tax over the 20th century has actually been more ambiguous than what the trajectory of the top marginal tax rate would suggest (see Figure~\ref{fig:estate-tax} in appendix). The very high marginal tax rates of the 1960s did not kick in until extremely high levels of wealth. Following the reforms in the 1980s, estates in the order of tens of millions of dollars were actually taxed more heavily. Estates needed to reach hundreds of millions of dollars to benefit from the reforms. It wasn't until the 2000s that the estate tax became unambiguously less progressive.

Second, estate taxation has an intrinsically weak impact on the wealth distribution. Weaker, say, than an annual wealth tax of seemingly comparable magnitude. As an illustration, running the model with a radical estate tax (100\% tax on estates above \$100k) would only reduce the top 1\% wealth share by $1.5$pp. in the long run. A naive view would suggest that taxing wealth every year at a net-of-tax rate $1 - \tau$ is equivalent to taxing wealth every $n$ years at a rate $(1-\tau)^n$. This exercise shows this is not the case: taxing wealth once every generation, as the estate tax does, fundamentally alters the nature of the tax. The typical lifetime of a generation is the main determinant of this fact: the longer generations live, the more different the estate tax is from a wealth tax. Section~\ref{sec:inheritance-vs-wealth-tax} develops a simplified model where this finding can be derived analytically.

\paragraph{Other Effects}

The other effects considered in the model (demography, assortative mating) have had a negligible impact on the distribution of wealth.

\section{The Taxation of Wealth}

In this section, I use the model of this paper to assess the long-run effect of wealth taxes at the top of the distribution. Understanding this effect is crucial to understanding capital taxation in general and wealth taxation in particular, as recent studies have argued that the long-run elasticity of wealth with respect to the net-of-tax rate is a sufficient statistic for optimal capital taxation \citep{saez_simpler_2018,piketty_theory_2013}. However, while short-run responses to wealth taxation can be measured empirically \citep[e.g.,][]{brulhart_taxing_2016,seim_behavioral_2017,jakobsen_wealth_2020,zoutman_elasticity_2018,ring_wealth_2020,londono-velez_enforcing_2021}, the long-run responses are more elusive.

This paper provides a powerful framework for addressing that question. Indeed, it can estimate counterfactual steady-state wealth distributions, in which we assume different rates of return or different savings. This problem is, effectively, equivalent to estimating a counterfactual wealth distribution under a wealth tax (which decreases post-tax rates of return) with or without behavioral responses (which modify saving rates). A key advantage of this approach --- which is often absent from studies of optimal capital taxation --- is the presence of mobility. Not only is mobility a desirable feature on its own, but it also ensures that the model is well-behaved under a wide range of economic behaviors, because it naturally leads to nondegenerate steady-states for the wealth distribution. This makes it a realistic, easy and tractable way to explore the equity-efficiency trade-offs that are involved in wealth taxation. In contrast, many models without mobility generate infinite responses of capital supply to taxation; under these conditions, the efficiency concerns always dominate the equity concerns, which leaves no room to discuss trade-offs.

One way to use this paper's model is to fully simulate the evolution of the wealth distribution under various wealth tax assumptions, similarly to what was done in Section~\ref{sec:counterfactuals}. This solution is the most complete but also the most computationally demanding. This section pursues an alternative path, where I focus on the long run and use a slightly simplified model. In these conditions, it becomes possible to derive analytical formulas for how the wealth distribution would eventually react to any wealth tax. This makes it easy to draw Laffer curves for wealth taxation and determine what would be the revenue-maximizing wealth tax rate at the top.\footnote{This section does not develop a complete theory of optimal wealth taxation, which would be beyond the scope of this paper. Instead, it focuses on the observable responses of the wealth distribution to a wealth tax. I leave the issue of integrating this framework with a complete optimal tax model for future research.}

\subsection{Theoretical Results}

\subsubsection{Wealth Taxation Without Behavioral Responses}\label{sec:wealth-taxation-no-behav}

Consider the introduction of a nonlinear wealth tax $\tau(w)$ on wealth $w$. I will first assume that there is no behavioral response following the introduction of the wealth tax, so the evolution of wealth is simply characterized by $\dif w_{it} = (\mu(w_{it}) - \tau(w_{it}))\dif t + \sigma_t(w_{it}) \dif B_{it}$. I can state the following result.

\begin{prop}\label{prop:wealth-tax}
Let $f$ be the steady-state density of wealth without a wealth tax. Then the steady-state density $f^*$ with a wealth tax is equal to the steady-state density without a wealth tax, reweighted by a factor $\theta(w)$:
\begin{equation*}
f^*(w) \propto \theta(w)f(w) \qquad \text{where} \qquad \theta(w) = \exp\left\{-\int_{-\infty}^w \frac{2\tau(s)}{\sigma^2(s)}\,\dif s\right\}
\end{equation*}
In particular, if $\tau(w) = \tau(w-w_0)_+$ (i.e., the wealth tax is linear with rate $\tau$ above a threshold $w_0$), and if $\sigma(w) = \sigma w$ for $w \geq w_0$ (i.e., diffusion is proportional to wealth above $w_0$) then $\theta(w)$ simplifies to:
\begin{equation*}
\forall w>w_0 \qquad \theta(w)=\exp\left\{-{\frac{2\tau}{\sigma^2}}\left(\frac{w_0}{w}-1\right)\right\}\left(\frac{w}{w_0}\right)^{-{2\tau/\sigma^2}}
\end{equation*}
and $\theta(w) = 1$ otherwise.
\end{prop}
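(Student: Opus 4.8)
The plan is to work directly with the stationary version of the Kolmogorov forward equation~(\ref{eq:kf}). First I would note that at a steady state $\partial_t f = 0$, so the equation for the no-tax density $f$ can be written as the vanishing of a total derivative, $\partial_w\!\left[\mu(w)f(w) - \tfrac{1}{2}\partial_w(\sigma^2(w)f(w))\right] = 0$. The bracketed quantity is the probability current, so it is constant in $w$; under the natural boundary conditions for a wealth distribution (decay at $+\infty$, a reflecting boundary at the bottom of the support) this constant must be zero, reducing the second-order equation to the first-order relation $\mu(w)f(w) = \tfrac{1}{2}\partial_w[\sigma^2(w)f(w)]$. Applying the same argument to the taxed process, whose drift is $\mu(w)-\tau(w)$ while its diffusion $\sigma^2(w)$ is unchanged, gives $(\mu(w)-\tau(w))f^*(w) = \tfrac{1}{2}\partial_w[\sigma^2(w)f^*(w)]$.

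Next I would substitute the ansatz $f^*(w) = \theta(w)f(w)$ into the taxed relation and expand the right-hand derivative by the product rule. The piece $\tfrac{1}{2}\theta(w)\partial_w[\sigma^2(w)f(w)]$ equals $\theta(w)\mu(w)f(w)$ by the untaxed relation, and it cancels the $\mu(w)\theta(w)f(w)$ term on the left. What remains is $-\tau(w)\theta(w)f(w) = \tfrac{1}{2}\sigma^2(w)f(w)\,\theta'(w)$; dividing through by $f(w)$ yields the separable equation $\theta'(w)/\theta(w) = -2\tau(w)/\sigma^2(w)$. Integrating from $-\infty$ gives precisely $\theta(w) = \exp\{-\int_{-\infty}^w 2\tau(s)/\sigma^2(s)\,\dif s\}$, with the integration constant absorbed into the proportionality sign. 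I would finish this part by checking that $\theta f$ is integrable, so that $f^*$ can be normalized into a genuine density.

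For the closed-form expression I would specialize to $\tau(w)=\tau(w-w_0)_+$ and $\sigma^2(w)=\sigma^2 w^2$ on $w\ge w_0$. Below the threshold the integrand vanishes, so $\theta\equiv 1$ there. Above it the exponent is $-\tfrac{2\tau}{\sigma^2}\int_{w_0}^w \frac{s-w_0}{s^2}\,\dif s$, and the integral splits as $\int_{w_0}^w(1/s - w_0/s^2)\,\dif s = \log(w/w_0) + w_0/w - 1$. Re-exponentiating separates this into the stated product of an $\exp\{-\tfrac{2\tau}{\sigma^2}(w_0/w-1)\}$ factor and a power-law factor $(w/w_0)^{-2\tau/\sigma^2}$.

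I expect the main obstacle to be the justification that the stationary probability current vanishes rather than equalling a nonzero constant: this is where the boundary conditions of the wealth process genuinely enter, and where one must also confirm that the reweighted tail — which decays faster by the extra power $w^{-2\tau/\sigma^2}$ — stays integrable so that $f^*$ is well defined. The algebraic steps (the cancellation via the untaxed relation and the explicit integral) are routine by comparison.
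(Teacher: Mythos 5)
Your proof is correct and follows essentially the same route as the paper: both reduce the stationary Kolmogorov forward equation to the first-order zero-current relation and integrate $-2\tau(s)/\sigma^2(s)$ to obtain $\theta$, and your evaluation of $\int_{w_0}^{w}(s-w_0)/s^2\,\dif s$ matches the paper's. If anything you are more explicit than the paper about why the probability current must vanish (the paper simply writes down the zero-current solution), which is a welcome clarification rather than a deviation.
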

\begin{proof}
See Appendix~\ref{sec:proof-wealth-tax}.
\end{proof}

That result makes it possible to estimate how the tax base would react to a wealth tax in the long run, effectively by reweighting the steady-state distribution of untaxed wealth using the function $\theta$. The setting mentions the introduction of a new wealth tax where there previously was none, but we could apply the same result to an increase or a decrease of an existing wealth tax by redefining $\tau$ as a change in the rate of the wealth tax.\footnote{This result does not consider how the government uses the wealth tax. We can adapt the formula to include the redistribution of a lump-sum amount and then solve an equation numerically to ensure that we redistribute as much as we tax. In practice, the impact would be negligible as long as we focus on the top of the distribution. Indeed, much more wealth is taxed in that part of the distribution than would be redistributed. The lump sum rebate would have an impact at the bottom, however. Section~\ref{sec:lump-sum-rebate} in appendix addresses this question.}

The result emphasizes the role of mobility, as explained by \citet{saez_progressive_2019}. The impact on the tax base depends on $\tau(w)/\sigma^2(w)$ and not just $\tau(w)$. Doubling the parameter $\sigma(w)$ quadruples the parameter $\sigma^2(w)$, which implies that a tax rate four times as high would lead to the same change in the tax base. The intuition is the same as in \citet{saez_progressive_2019}: high mobility means that people only get taxed for a short period and that new, previously untaxed wealth keeps entering the tax base. As a result, the tax base does not react too much to wealth taxation. When mobility goes to zero, however, the same wealth from the same people is taxed repeatedly so that the tax base eventually goes to zero.

Proposition~\ref{prop:wealth-tax} carries one important difference compared to the result of \citet{saez_progressive_2019}. In the second part of the result, the reweighting factor is the product of two terms: $\exp\{-{{2\tau}({w_0}/{w}-1)/{\sigma^2}}\}$ and $({w}/{w_0})^{-{2\tau/\sigma^2}}$. The first impacts the distribution near the threshold $w_0$ while the second impacts the distribution away from the threshold. The discrete-time formula of \citet{saez_progressive_2019} only contains an equivalent of the second term, $({w}/{w_0})^{-{2\tau/\sigma^2}}$. This difference is because they consider the taxation of billionaires not at a marginal rate $\tau$, but at an average rate $\tau$. This distinction is important because the long-run mechanical effect of a wealth tax explicitly depends on the average tax rate and not, like behavioral responses, on the marginal rate. At the very top of the distribution, the distinction between the average and the marginal rate becomes negligible, hence the term $({w}/{w_0})^{-{2\tau/\sigma^2}}$ similar to the formula in \citet{saez_progressive_2019}. But close to the threshold $w_0$, the distinction does matter and, in fact, has important consequences for the overall response of the tax base to the tax, as we will see in Section~\ref{sec:calibrations-us}.

\subsubsection{Behavioral Responses}

\paragraph{Tax Evasion and Tax Avoidance} 

People can react to a wealth tax by hiding some of their wealth, either through tax evasion or tax avoidance. Assume that, in response to a marginal tax rate $\tau'(w)$, people only report a fraction $\alpha(w)=[1-\tau'(w)]^\varepsilon$ of their wealth. The parameter $\varepsilon$ is the elasticity of declared wealth to the marginal net-of-tax rate $1-\tau'(w)$. For a small rate $\tau'(w) \ll 1$, people react by approximately hiding a fraction $\tau'(w)\varepsilon$ of their wealth. When $\varepsilon = 0$, people truthfully report all of their wealth. As $\varepsilon$ goes to infinity, people start hiding all of their wealth to avoid paying the tax. With tax avoidance, people that own $w$ in wealth pay $\tau(w)\alpha(w)$
instead of $\tau(w)$. In effect, this is equivalent to having a wealth tax with a lower rate. Therefore, the results for the purely mechanical model hold with minimal modifications.

\paragraph{Consumption}

People may also react to a wealth tax by accumulating less wealth. Changes to savings have different implications than tax evasion. Indeed, tax evasion affects both the dynamic of wealth and the tax base. Savings, on the other hand, affect the dynamic of wealth but do not directly reduce the tax base.

Theory provides few constraints regarding how a wealth tax ought to affect saving rates, given the many settings and mechanisms we could consider. In broad terms, there can be a substitution effect that decreases savings (because a wealth tax makes deferring consumption more expensive). And there can be an income effect that increases savings (because a wealth tax makes people poorer, and they compensate by investing more). In the case of, say, labor supply, the widely accepted view that substitution effects dominate. No such consensus exists for savings.

The following reduced-form specification can nonetheless account for the overall effect in a direct and intuitive way. Assume that, in response to a tax with marginal rate $\tau'(w)$ on wealth, people increase their consumption by a factor $\beta(w)=[1 - \tau'(w)]^{-\eta}$. The parameter $\eta$ captures the elasticity of consumption with respect to the marginal net-of-tax rate $1-\tau'(w)$.\footnote{I will ignore the cases where $\eta<0$ (i.e., income effects dominate), even though they are a theoretical possibility and even though some studies find this result \citep{ring_wealth_2020}. Indeed, it is problematic to assume in a taxation context that the tax base responds positively to the tax. Moreover, the elasticity has to change sign at some point, otherwise, a $100\%$ wealth tax would correspond to infinite savings. However, if true, it would imply that wealth tax rates could be higher.} The drift in the dynamic of wealth now includes an additional term $c(w)[1 - \beta(w)]$ where $c(w)$ is the average propensity to consume. The savings behavioral response amplifies the impact of the wealth tax.

\paragraph{Complete Model}

Behavioral responses change the drift term, which is analogous to a change in the effective rate of the wealth tax. Therefore Proposition~\ref{prop:wealth-tax} can be directly extended to account for behavioral effects. The reweighting factor in the full model becomes:
\begin{equation*}
\theta(w) = \exp\left\{-\int_{-\infty}^w \frac{2\tau(s)\alpha(s)}{\sigma^2(s)}\,\dif s -\int_{-\infty}^w \frac{2c(s)[1 - \beta(s)]}{\sigma^2(s)}\,\dif s\right\}
\end{equation*}

\subsection{Calibrations for the United States}\label{sec:calibrations-us}

\paragraph{Baseline Wealth Model}

To illustrate the formulas of this section, I will consider the case of a linear tax on estates above \$50m. I consider a model of wealth accumulation with a mobility parameter and an average propensity to consume that match this paper's estimates for the United States over 1979--2019. I directly use the 2019 data as an estimate of the steady-state wealth distribution, given that the simulation in Section~\ref{sec:counterfactuals} suggest that wealth inequality is close to its long-run value.

\paragraph{Behavioral Elasticities}

To calibrate $\varepsilon$ and $\eta$, I rely on the recent empirical literature that exploit various quasi-experimental settings to assess behavioral reactions to a wealth tax.

Several of these papers present bunching evidence \citep{seim_behavioral_2017,londono-velez_enforcing_2021,jakobsen_wealth_2020}. Bunching provides the cleanest estimates of pure tax avoidance elasticity. Indeed, the true value of wealth in the short run tends to follow unpredictable asset movements so it would be hard for a household to precisely bunch at kink points. \citet{seim_behavioral_2017} finds an elasticity of $0.5$ in Sweden, and \citet{jakobsen_wealth_2020} find elasticities that are even lower in Denmark. \citet{londono-velez_enforcing_2021} find a higher estimate (2--3) in Colombia.

As their main identification strategy, \citet{jakobsen_wealth_2020} pursue a difference-in-difference approach that exploit various tax reforms. This allows them to compute elasticities that incorporate dynamic and saving responses over larger time spans. Over an 8-year time frame, they find a sizable elasticity at the top of about $18$ with respect to the net-of-tax rate. The authors argue that most ($90\%$) of it can be attributed to a behavioral effect (as opposed to a mechanical effect). Assuming that the elasticities cumulate multiplicatively over time, this would correspond to a yearly behavioral elasticity of $1.4$ for both the saving and tax avoidance response. \citet{zoutman_elasticity_2018}, also using a difference-in-difference strategy, finds a much higher elasticity of almost 14. \citet{seim_behavioral_2017} also analyzes saving responses to a wealth tax but does not find any. \citet{ring_wealth_2020} exploits geographic discontinuities in the exposure to wealth taxation to estimate savings responses and actually finds an \textit{increase} in savings in response to wealth taxation. \citet{brulhart_taxing_2016} find a much higher overall elasticity (23--34) in Switzerland using both between canton variations of the tax rate and within variation in the Bern canton. They also look at bunching evidence but find much lower effects there.

Note that the tax avoidance elasticity is not a pure structural parameter, but also results from how strongly a wealth tax is enforced. For the baseline calibration, I will consider a limited tax avoidance response ($\varepsilon = 1$), which is around the values found by \citet{seim_behavioral_2017}, \citet{londono-velez_enforcing_2021} and \citet{jakobsen_wealth_2020}. I will also consider a medium savings response ($\eta = 1$), in line with \citet{jakobsen_wealth_2020}, and around the median of existing studies. I will also consider an alternative scenario with a higher saving response ($\eta = 3$) and a higher tax avoidance response ($\varepsilon = 3$).\footnote{Very large behavioral responses, as found by \citet{brulhart_taxing_2016}, are of limited interest for the current exercise, since they make long-run dynamic effects negligible compared to the immediate static effect of behavioral responses.}

\paragraph{Results}

\begin{figure}[ht]
     \centering
     \begin{subfigure}[b]{0.499\textwidth}
         \centering
         \includegraphics[width=\textwidth]{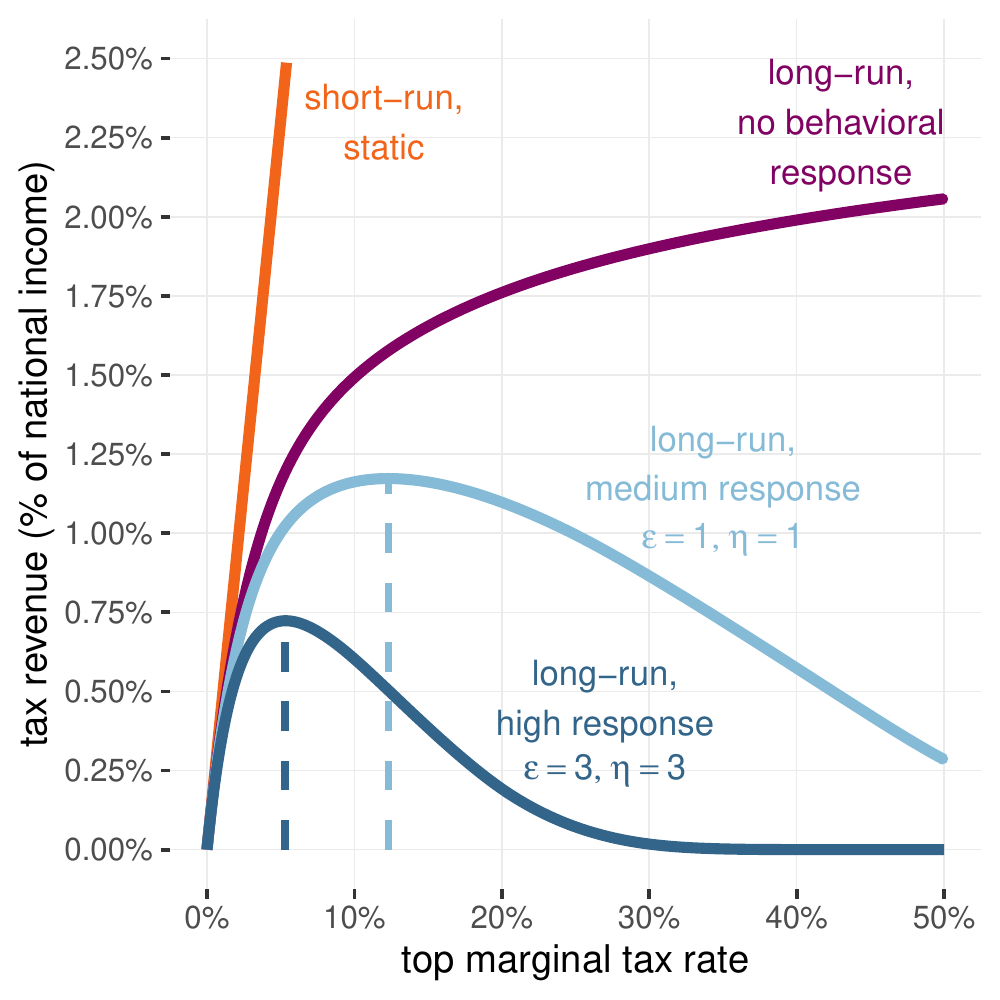}
         \caption{Laffer Curves}
         \label{fig:laffer-curves}
     \end{subfigure}%
     \begin{subfigure}[b]{0.499\textwidth}
         \centering
         \includegraphics[width=\textwidth]{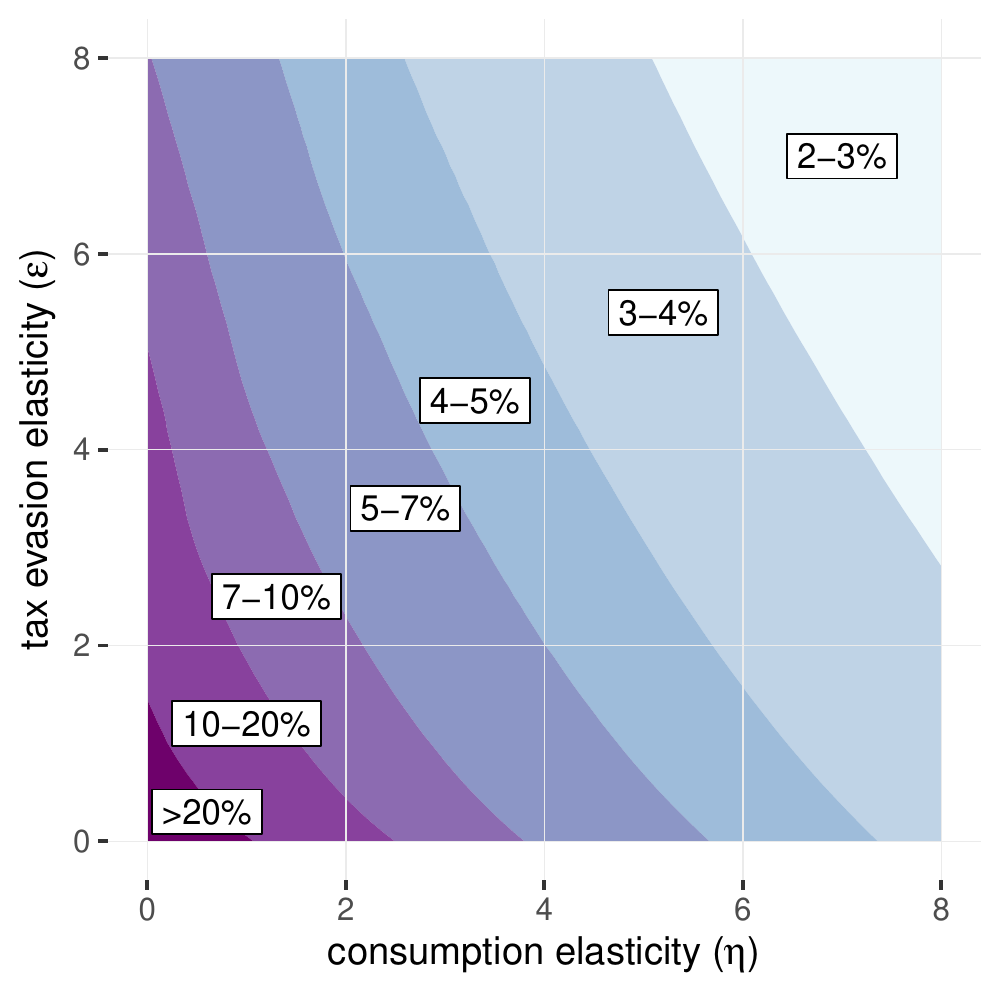}
         \caption{Revenue-Maximizing Tax Rates}
         \label{fig:abacus}
     \end{subfigure}
    \caption{Effects of a Wealth Tax on Estates Above \$50m}
    \label{fig:wealth-tax}
\end{figure}

Figure~\ref{fig:laffer-curves} shows the Laffer curves for the wealth tax, i.e., the amount of tax revenue raised as a function of the tax rate. As a point of reference, we can start from the short-run, static revenue estimate (in orange). This estimate assumes the tax base to be entirely nonreactive, and as a result, the wealth tax can raise a very large amount of revenue. The second curve (in purple) accounts for how the wealth tax progressively eats away its tax base in the long run, even without behavioral responses. As argued by \citet{saez_progressive_2019}, mobility is the central factor determining that curve's shape. In a model without mobility \citep[e.g.,][]{jakobsen_wealth_2020} this curve would be equal to zero for all positive tax rates. This dynamic mechanical effect is sizable and significantly lowers tax revenue in the long run. However, it is not strong enough to produce the usual, inverted U-shaped Laffer curve: the tax base never goes to zero, even with a 100\% tax rate. As a consequence, that model is insufficient to generate guidance on the revenue-maximizing rate.\footnote{\citet{saez_progressive_2019}, who consider a somewhat analogous model, do obtain an estimate for the revenue-maximizing rate. This is because in their model, top wealth holders are taxed at an average, not marginal, rate $\tau$ (see also Section~\ref{sec:wealth-taxation-no-behav} for a discussion).} The reason behind the result is that the dynamic mechanical effect depends on the average tax rate and not, like behavioral responses, on the marginal rate. Hence, close to the exemption threshold, the impact of the tax remains limited even with very high rates. So, in theory, it remains possible to keep raising revenue by increasing the tax rate \textit{ad infinitum}. To get a more realistic and actionable model, we need to incorporate behavioral responses, which I do for the two last curves (in blue). In the benchmark calibration, the revenue-maximizing rate remains high ($\approx 12\%$), but the amount of revenue raised is only a fraction of what a static scoring finds.

There is still a fair amount of uncertainty regarding the value of behavioral elasticities, so in Figure~\ref{fig:abacus}, I estimate the revenue-maximizing rates for a wide range of elasticity values. A notable result here is that the consumption elasticity has a stronger impact than the tax avoidance elasticity. This is because tax avoidance actually has an ambiguous effect on the tax base. On the one hand, it directly lowers the tax base since people underreport their assets. But on the other hand, it increases the post-tax rate of return, allowing people to accumulate more, which grows the tax base in the long run.

\subsection{Comparison with Inheritance Taxation}\label{sec:inheritance-vs-wealth-tax}

How does the annual taxation of wealth compare to inheritance taxation? In simulations, I found in Section~\ref{sec:counterfactuals} that the estate tax has a limited impact on the distribution. To illustrate and clarify this finding, I develop a simple model that compares the mechanical effects of an annual wealth tax to the effects of an inheritance tax. Assume that wealth $w$ evolves according to the following \gls{sde}, where wealth is taxed at a constant rate $\tau$, and $\mu$ is the rate of wealth growth without tax:
\begin{equation*}
\dif w = (\mu - \tau) w \dif t + \sigma w \dif B_t
\end{equation*}
Assume a reflecting barrier at the bottom, so that wealth cannot go below $w_0$. This is the simplest model that generates a Pareto distribution of wealth, with a \gls{pdf} of the form $f(w) = \frac{\alpha w_0^\alpha}{w^{\alpha+1}}$ \citep{gabaix_power_2009}.

Assume that people die at a Poisson rate $\delta$. When they do, their only child pays the estate tax, inherits their wealth, and replaces them in the distribution. Let $(1-\chi)^{1/\delta}$ be the net-of-tax rate of the estate tax. (This specification makes the rates $\tau$ and $\chi$ \textit{a priori} comparable.) At the steady-state, the wealth density $f$ satisfies the \citet{kolmogorov_uber_1931} equation with the Poisson process included, and with the time the derivative set to zero:
\begin{equation*}\small
0 =
\,-\, \underbrace{(\mu - \tau)\partial_w w f(w)}_{\text{drift}}
\,+\, \underbrace{\frac{1}{2}\sigma^2\partial^2_w w^2 f(w)}_{\text{mobility}}
\,-\, \underbrace{\delta f(w)}_{\text{deaths}}
\,+\, \underbrace{\frac{\delta}{(1-\chi)^{1/\delta}} f\left(\frac{w}{(1-\chi)^{1/\delta}}\right)}_{\text{births}}
\end{equation*}
The steady-state still follows a Pareto distribution. Substituting $f$ with its value $f(w) = \frac{\alpha w_0^\alpha}{w^{\alpha+1}}$, we get the following equation for the Pareto coefficient $\alpha$, which is a measure of inequality (higher values mean less inequality):
\begin{equation}\label{eq:alpha}
0 = \mu +\frac{1}{2} (\alpha -1)  \sigma ^2 - \tau - \frac{\delta}{\alpha}[1 - (1 - \chi)^{\alpha/\delta}]
\end{equation}

\begin{figure}[ht]
    \centering
    \begin{minipage}{0.7\linewidth}
    \includegraphics[width=\linewidth]{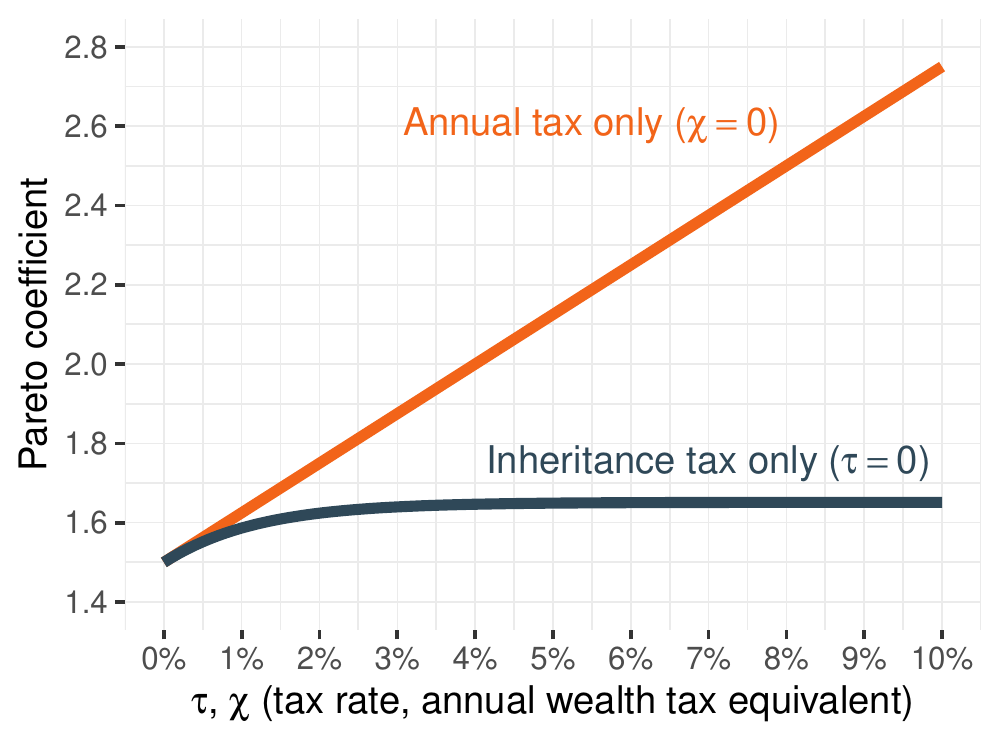}
    \caption{Pareto Coefficients under an Annual Wealth vs. an Inheritance Tax of Comparable Magnitude}
    \label{fig:simple-model-inheritance}
    \end{minipage}
\end{figure}

Consider the following calibration: $\mu=-0.04$, $\sigma=0.4$ and $\delta=1/50$. We can solve this equation numerically to get the steady-state Pareto coefficient under any value of $\tau$ and $\chi$. Figure~\ref{fig:simple-model-inheritance} shows the results from this exercise. It compares the Pareto coefficients for two scenarios: an annual wealth tax (without inheritance tax) and an inheritance tax (of comparable magnitude, without an annual wealth tax). With an annual wealth tax, the Pareto coefficient increases linearly with the tax rate (therefore, inequality decreases). With an inheritance tax, not only does the Pareto coefficient increase much less overall, but as the rate increases, the Pareto coefficients quickly reach an asymptote. Hence, an inheritance is incapable, no matter how large, of reducing inequality to the same extent that an annual wealth tax can.

Equation~(\ref{eq:alpha}) provides the intuition behind that result. For a small estate tax rate $\chi$, a first-order approximation gives $\frac{\delta}{\alpha}[1 - (1 - \chi)^{\alpha/\delta}] \approx \chi$, and therefore the estate tax has an effect similar to a wealth tax $\tau$. But for higher rates $\chi$, this is no longer true. In practice, the approximation does not work unless $\chi$ is extremely small because the ratio $\alpha/\delta$ is high. As the estate tax rate $\chi$ increases, the Pareto coefficients converge to a maximum value $\alpha_1$, which we obtain by solving~(\ref{eq:alpha}) for $\chi = 1$:
\begin{equation*}
\alpha_1 = \frac{1}{2}\left\{\alpha_0 + \sqrt{\alpha_0^2 + \frac{8\delta}{\sigma^2}}\right\}
\end{equation*}
where $\alpha_0 = 1 - 2(\mu - \tau)/\sigma^2$ is the Pareto coefficient in the absence of inheritance tax. The estate tax is, therefore, the least efficient when the ratio $8\delta/\sigma^2$ is small, in which case the Pareto coefficient for $\chi = 1$ is almost the same as the coefficient for $\chi = 0$. This can happen for two reasons: $\delta$ being low and $\sigma^2$ being high. When $\sigma^2$ is high, there is a lot of mobility within generations, so estate taxation is ineffective because wealth quickly recovers from it. But this effect also exists for annual wealth taxes (see Section~\ref{sec:wealth-taxation-no-behav}), so it does not drive the gap between annual and inheritance taxes.  When $\delta$ is low, people die and children inherit their wealth at a low rate, so estate taxation is ineffective because it happens too rarely. We can also see this effect by looking directly at~(\ref{eq:alpha}): taking the estate tax rate $\chi$ to 100\% has the same effect as an annual wealth tax with a rate $\delta/\alpha$. So for $\alpha \approx 1.5$, a 100\% estate tax, which occurs every 50 years on average, is similar to a 3\% annual wealth tax, and cannot get higher.\footnote{Intuitively, $\delta$ creates a cap on the potency of the estate tax: when it is low, wealth is rarely transmitted, and so its effect is limited. The parameter $\alpha$ modulates that cap. When $\alpha \rightarrow 1$, inequality tends to infinity. Only one dynasty owns meaningful wealth, and the estate tax is maximally efficient because it can repeatedly affect that one dynasty. On the other hand, when $\alpha \rightarrow +\infty$, inequality tends to zero. Everyone owns a similar amount of wealth. The estate tax struggles to discriminate between the richest and the poorest in these conditions.}

\section{Conclusion}

This paper introduces a new way of decomposing the evolution of wealth inequality. This decomposition accounts for the different processes affecting the wealth distribution, including mobility. It only requires repeated cross-sections and can therefore be applied to available historical data.

Applying this decomposition to the United States, I find that the rise of wealth inequality has been driven by higher savings at the top, higher rates of capital gains, and higher labor income inequality, with other factors playing a minor role. Applying the framework to the study of wealth taxation, I derive simple formulas that describe how the wealth distribution would react to a wealth tax in the long run. I use them to estimate revenue-maximizing tax rates for a linear tax on high estates. The website {\small\url{https://thomasblanchet.github.io/wealth-tax/}} provides a simulator to apply these formulas in the United States.

The approach developed in this paper offers several venues for future research. One would be to extend the decomposition of this paper alongside several dimensions. Recent work \citep{blanchet_real-time_2022} has started to estimate distributional national accounts --- including national wealth --- disaggregated by race. This paper's methodology could be applied to such data and provide new insights that combine our understanding of the dynamics of the racial wealth gap \citep{derenoncourt_wealth_2022} with our understanding of the dynamics of wealth inequality. To the extent that wealth can be meaningfully divided between household members \citep{fremeaux_inequalities_2020}, the same could be done for gender.

In future research, the study of capital taxation that is sketched out in this paper could also be further expanded into a full-fledged theory of optimal capital taxation, one that integrate our theories of the wealth distribution with various social objectives \citep{saez_generalized_2016} and with the broader effects of wealth taxation on the macroeconomy \citep{gaillard_wealth_2021}.

{
\sloppy
\printbibliography
}

\newpage

\newrefsection

\setcounter{page}{1}
\renewcommand\thefigure{\thesection.\arabic{figure}}
\setcounter{figure}{0}  

\appendix

\emptythanks
\title{Appendix to ``Uncovering the Dynamics \\of the Wealth Distribution''}
\author{Thomas Blanchet}
\maketitle

\part{}
\vspace{-3em}
\localtableofcontents

\newpage

\section{Proofs Omitted from Main Text}

\subsection{Proposition~\ref{prop:gyongy}}\label{sec:proof-gyongy}

The proof of proposition~\ref{prop:gyongy} is a straightforward consequence of \textcolor{QueenBlue}{\citeauthor{gyongy_mimicking_1986}'s~(\citeyear{gyongy_mimicking_1986})} theorem and of the rules of Itô calculus. Let us start by providing the full statement of \textcolor{QueenBlue}{\citeauthor{gyongy_mimicking_1986}'s~(\citeyear{gyongy_mimicking_1986})} theorem.

\begin{thm}[\cite{gyongy_mimicking_1986}]
Let $\vec{X}_t$ be a $n$-dimensional stochastic process satisfying:
\begin{equation*}
\dif \vec{X}_t = \vec{\alpha}_t \dif t + \vec{\beta}_t \dif \vec{B}_t
\end{equation*}
where $\vec{\alpha}_t$ and $\vec{\beta}_t$ are bounded and nonanticipative $n \times 1$ and $n \times m$ stochastic processes, respectively, $\vec{\beta}_t \vec{\beta}_t'$ is uniformly positive definite, and $\vec{B}_t$ is a $m$-dimensional Wiener process. Then there is a Markov process $\vec{Y}_t$ satisfying:
\begin{equation*}
\dif \vec{Y}_t = \vec{a}_t(\vec{Y}_t) \dif t + \vec{b}_t(\vec{Y}_t) \dif \vec{B}_t
\end{equation*}
where $\vec{X}_t$ and $\vec{Y}_t$ have the same marginal distributions for each $t$. We can construct $\vec{Y}_t$ by setting:
\begin{align*}
\vec{a}_t(\vec{y}) &= \mathds{E}[\vec{\alpha}_t|\vec{X}_t=\vec{y}] & \vec{b}_t(\vec{y}) &= \mathds{E}[\vec{\beta}_t\vec{\beta}_t'|\vec{X}_t=\vec{y}]^{1/2}
\end{align*}
\end{thm}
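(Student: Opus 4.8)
The plan is to apply \textcolor{QueenBlue}{\citeauthor{gyongy_mimicking_1986}'s~(\citeyear{gyongy_mimicking_1986})} theorem directly to equation~(\ref{eq:income-consumption}), read as a one-dimensional It\^o process for $w_{it}$, and then to simplify the two mimicking coefficients it prescribes. First I would set $\vec{X}_t = w_{it}$, with drift $\vec{\alpha}_t = \mu_{it} = y_{it} + (r_{it}-g_t)w_{it} - c_{it}$ and diffusion $\vec{\beta}_t = \sigma_{it} = (\upsilon^2_{it} + \phi^2_{it}w^2_{it} + \gamma^2_{it})^{1/2}$, so that the theorem applies with $n=m=1$ and $\vec{\beta}_t\vec{\beta}_t' = \sigma_{it}^2$. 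The hypotheses then reduce to two conditions: the coefficients $\mu_{it}$ and $\sigma_{it}$ must be bounded and nonanticipative, and $\sigma_{it}^2$ must be uniformly bounded below by a positive constant (the scalar form of uniform positive definiteness).

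The substance of the argument is the evaluation of the mimicking coefficients $a_t(w) = \mathds{E}[\mu_{it} \mid w_{it}=w]$ and $b_t(w) = \mathds{E}[\sigma_{it}^2 \mid w_{it}=w]^{1/2}$. The one idea needed is that any factor that is itself a deterministic function of $w_{it}$ becomes a constant once we condition on $w_{it}=w$ and therefore pulls out of the conditional expectation. Applying linearity to the drift,
\[
\mathds{E}[\mu_{it}\mid w_{it}=w] = \mathds{E}[y_{it}\mid w_{it}=w] + \big(\mathds{E}[r_{it}\mid w_{it}=w]-g_t\big)w - \mathds{E}[c_{it}\mid w_{it}=w] = y_t(w) + (r_t(w)-g_t)w - c_t(w),
\]
where I have used the definitions of $y_t(w)$, $r_t(w)$ and $c_t(w)$ as the corresponding conditional means. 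The same pull-out applied to the instantaneous variance, with $w^2_{it}=w^2$ constant under the conditioning, gives $\mathds{E}[\sigma_{it}^2\mid w_{it}=w] = \upsilon^2_t(w) + \phi^2_t(w)w^2 + \gamma^2_t(w)$, whose square root is precisely the diffusion coefficient in the statement. Substituting $a_t$ and $b_t$ into the mimicking dynamics $\dif \vec{Y}_t = a_t(Y_t)\dif t + b_t(Y_t)\dif B_t$ reproduces verbatim the \gls{sde} with deterministic coefficients displayed in the proposition, and the theorem guarantees that it shares the one-dimensional marginal law of $w_{it}$ at every $t$.

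The only genuine obstacle is checking the regularity hypotheses of the mimicking theorem rather than the conditioning computation, which is routine. Boundedness of $\mu_{it}$ and $\sigma_{it}$ is immediate from the standing assumption that $y_{it}$, $r_{it}$, $c_{it}$, $\upsilon^2_{it}$, $\phi^2_{it}$ and $\gamma^2_{it}$ are bounded, \emph{provided} $w_{it}$ is restricted to the bounded wealth window on which the analysis is carried out; otherwise the terms $(r_{it}-g_t)w_{it}$ and $\phi^2_{it}w^2_{it}$ are unbounded and one must instead invoke an extension of the theorem to locally bounded coefficients (via a localization/stopping argument). Uniform nondegeneracy, $\sigma_{it}^2 \geq \varepsilon > 0$, is secured by taking one of the transitory variances (for instance $\upsilon^2_{it}$ or $\gamma^2_{it}$) bounded away from zero. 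Once these conditions are in place, the result follows, the only It\^o-calculus input being the identification of $\sigma_{it}^2$ as the instantaneous variance on which the theorem conditions.
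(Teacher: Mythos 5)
You were asked to prove the theorem of \citet{gyongy_mimicking_1986} itself, and your proposal does not do that: it invokes the theorem in its very first sentence and then carries out the conditioning computation that turns equation~(\ref{eq:income-consumption}) into the reduced \gls{sde}. As a proof of the statement this is circular. The entire mathematical content of the theorem is existential --- that a Markov process $\vec{Y}_t$ solving $\dif \vec{Y}_t = \vec{a}_t(\vec{Y}_t)\dif t + \vec{b}_t(\vec{Y}_t)\dif \vec{B}_t$ with the conditional-moment coefficients actually \emph{exists} and matches the one-dimensional marginals of $\vec{X}_t$. Your pull-out-of-the-conditioning calculation only identifies what $\vec{a}_t$ and $\vec{b}_t$ would have to be if a mimicking process exists; it does not construct $\vec{Y}_t$, and no elementary It\^o manipulation can, since Gy\"ongy's argument runs through Krylov-type estimates and the solution of a martingale problem for the mimicking coefficients (equivalently, producing a weak solution consistent with the associated Kolmogorov forward equation), which is where the boundedness and uniform positive definiteness hypotheses are actually used. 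For what it is worth, the paper offers no proof of this statement either: it quotes the theorem verbatim from \citet{gyongy_mimicking_1986} in Appendix~\ref{sec:proof-gyongy} and then uses it --- exactly as you do --- to prove Proposition~\ref{prop:gyongy}.

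Read instead as a blind proof of Proposition~\ref{prop:gyongy}, your argument tracks the paper's appendix closely, with two differences worth recording. The paper is more explicit on the It\^o step: writing $z_{it} = \mu_{it}\dif t + \sigma_{it}\dif B_{it}$, it expands $\var(z_{it})$ and uses $(\dif t)^2 = 0$, $\dif B_{it}\,\dif t = 0$ and $\dif B_{it}^2 = \dif t$ to show that the heterogeneity of the drift contributes only at order $(\dif t)^2$, so the instantaneous variance is $\expc[\sigma^2_{it}]\dif t$; you assert this identification rather than verify it. Conversely, you are more careful than the paper about the theorem's hypotheses: boundedness of $y_{it}$, $r_{it}$, $c_{it}$, $\upsilon^2_{it}$, $\phi^2_{it}$ and $\gamma^2_{it}$ does not bound $(r_{it}-g_t)w_{it}$ or $\phi^2_{it}w^2_{it}$ because $w_{it}$ itself is unbounded, so a localization (or restriction to the bounded estimation window) is genuinely needed, and uniform ellipticity requires one of the transitory variances to be bounded away from zero --- caveats the paper passes over in silence. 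These observations are correct and useful, but they do not repair the central problem: the statement you were asked to prove is precisely the black box on which your entire argument rests.
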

Define $\mu_{it} = (y_{it} + (r_{it} - g_t)w_{it} - c_{it})$ and $\sigma_{it} = \left(\upsilon^2_{it} + \phi^2_{it}w^2_{it} + \gamma^2_{it}\right)^{1/2}$. Recall that the wealth of person $i$ evolves according to $\dif w_{it} = \mu_{it}\dif t + \sigma_{it}\dif B_{it}$. Using \textcolor{QueenBlue}{\citeauthor{gyongy_mimicking_1986}'s~(\citeyear{gyongy_mimicking_1986})} theorem, we can write $\dif w_{it} = \mu_t(w_{it})\dif t + \sigma_t(w_{it})\dif B_{it}$ where $\mu_t(w) = \mathds{E}[\mu_{it}|w_{it}=w]$ and $\sigma_t^2(w) = \mathds{E}[\sigma^2_{it}|w_{it}=w]$. To simplify notations, consider all expectations conditional on $w_{it}=w$. Write $z_{it} = \mu_{it}\dif t + \sigma_{it}\dif B_{it}$. For the drift term, we have directly $\expc[z_{it}] = \expc[\mu_{it}]\dif t = \mu_t(w)\dif t$. For the diffusion term:
\begin{align*}
\var(z_{it}) &= \expc[(z_{it} - \mu_t(w)\dif t)^2] \\
&= \expc[(\mu_{it}\dif t - \mu_t(w)\dif t + \sigma_{it}\dif B_{it})^2] \\
& = \underbrace{\expc[(\mu_{it}\dif t - \mu_t(w)\dif t)^2]}_{\text{$=0$ because $(\dif t)^2=0$}} \quad\! + \underbrace{\expc[\sigma^2_{it}\dif B_{it}^2]}_{\substack{\text{$=\expc[\sigma^2_{it}]\dif t$} \\ \text{because $\dif B_{it}^2=\dif t$}}} + \quad\! 2\underbrace{\expc[\sigma_{it}(\mu_{it} - \mu_t(w))\dif B_{it}\dif t]}_{\text{$=0$ because $\dif B_{it}\dif t = 0$}}
\end{align*}
Therefore, $\mu_t(w)\dif t = \expc[z_{it}]$ and $\sigma_t^2(w)\dif t = \var(z_{it})$. \hfill $\qed$

Note that the proof rests on an approximation made possible by this paper's continuous-time formalism. Namely, it allows us to identify the average variance of individual shocks $\mathds{E}[\sigma^2_{it}]$ with the overall variance of wealth growth $\var(z_{it})$. This is because over an interval of time $\dif t$, the part of the variance that is explained by $w$ is of order $(\dif t)^2$ while the rest of the variance is of order $\dif t$, and so as $\dif t \rightarrow 0$, the former is negligible compared to the latter.

\subsection{Proposition~\ref{prop:wealth-tax}}\label{sec:proof-wealth-tax}

Recall that wealth evolves according to $\dif w_{it} = (\mu(w) - \tau(w))\dif t + \sigma_t(w) \dif B_{it}$ where $\tau(w)$ is the wealth tax. Let $f^*$ be the steady-state density of wealth with the wealth tax. It has to obey the Kolmogorov forward equation with the time derivative terms set to zero:
\begin{equation*}
0 = -\partial_{w}[(\mu(w) - \tau(w))f^*(w)] + \frac{1}{2}\partial^2_{w}[\sigma^2(w) f^*(w)]
\end{equation*}
Solving this differential equation, we can write:
\begin{equation*}
f^*(w) \propto \exp\left\{-2\int_0^w\frac{\sigma(s)\sigma'(s) - \mu(s)}{\sigma^2(s)}\,\dif s\right\}\exp\left\{-\int_{0}^w\frac{2\tau(s)}{\sigma^2(s)}\,\dif s\right\}
\end{equation*}
Note that the steady-state density $f$ without the wealth tax corresponds to the case $\tau(w)\equiv 0$, and therefore $f^*(w)\propto f(w)\theta(w)$ where $\theta(w)\equiv \exp\left\{-\int_{w_0}^w\frac{2\tau(s)}{\sigma^2(s)}\,\dif s\right\}$, which gives the first part of the result. For the second part, assume that $\tau(w) = \tau(w-w_0)_+$ and that $\sigma(w) = \sigma w$ for $w > w_0$. Then $\theta(w)$ simplifies to:
\begin{equation*}
\theta(w) = \exp\left\{-\frac{2\tau}{\sigma^2}\int_{w_0}^w\frac{(s - w_0)_+}{s^2}\,\dif s\right\} = \exp\left\{-\frac{2\tau}{\sigma^2}\left(\frac{w_0}{w}-1\right)\right\}\left(\frac{w}{w_0}\right)^{-2\tau/\sigma^2}
\end{equation*}
which proves the second part of the result. \hfill $\qed$

\section{Detailed Data Construction}\label{sec:detailed-data}

\subsection{Income and Wealth}\label{sec:income-wealth}

For income and wealth data, I primarily rely on the \gls{dina} microdata from \citet{piketty_distributional_2018}, which are based on the \gls{irs} individual public-use microdata files, and which have been continuously updated to reflect methodological changes and improvements \citep{saez_rise_2020,saez_trends_2020}. These files are annual (except for 1963 and 1965) since 1962. Each observation corresponds to an adult individual (20 or older), and each variable corresponds to an item of the national accounts distributed to the entire adult population. These files distribute the entirety of the income and wealth of the United States.

This data has several advantages. It provides distributional estimates that are consistent with macroeconomic aggregates. It has rather large samples (from about 175\,000 in the 1960s to about 300\,000 today), with oversampling of the richest. And because it is based on tax data, it captures the top tail of the distribution well. But it does have some drawbacks. First, it has limited socio-demographic information: in particular, age information is only available in the form of very broad age groups. Second, the data does not include capital gains because they are not part of national income as defined by the national accounts. For these reasons, I make some adjustments and imputations to these data, using the \gls{scf} and national accounts.

I use post-tax national income as my income concept of reference. It corresponds to income after all taxes and transfers. It also distributes government expenditures and the income of the corporate sector to individuals so as to sum up to net national income.

\paragraph{Capital Gains}

\begin{figure}[ht]
\begin{center}
    \begin{subfigure}[t]{0.499\textwidth}
        \includegraphics[width=\textwidth]{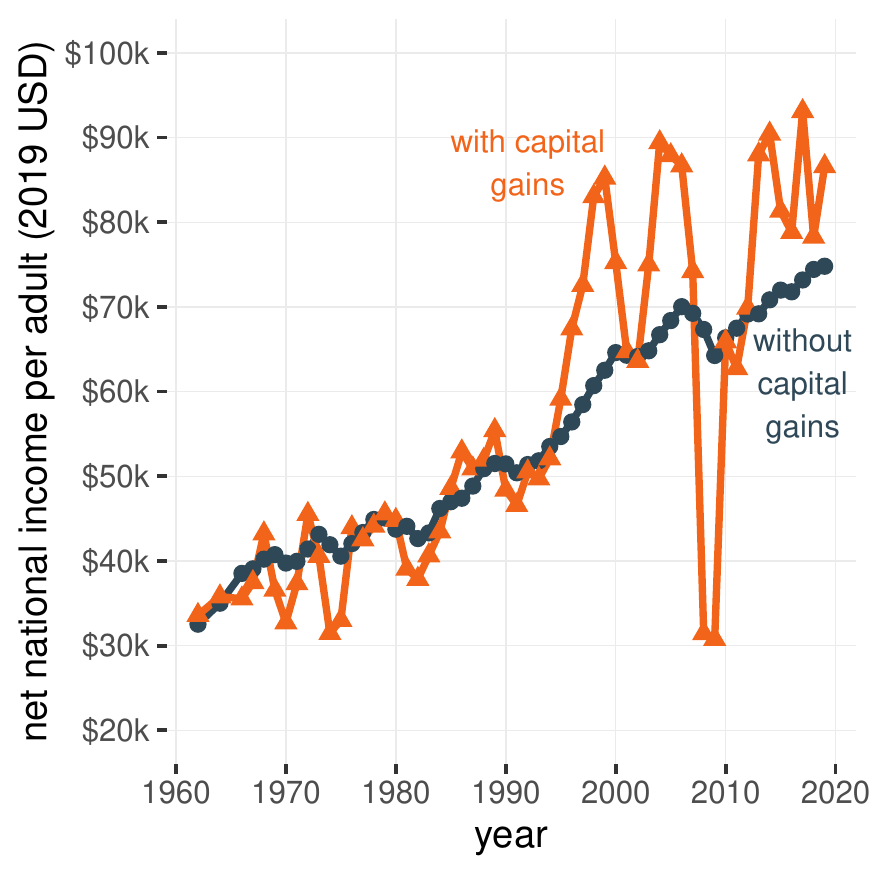}
        \caption{\centering Net National Income and Capital Gains}
        \label{fig:national-income-macro}
    \end{subfigure}%
    \begin{subfigure}[t]{0.499\textwidth}
        \includegraphics[width=\textwidth]{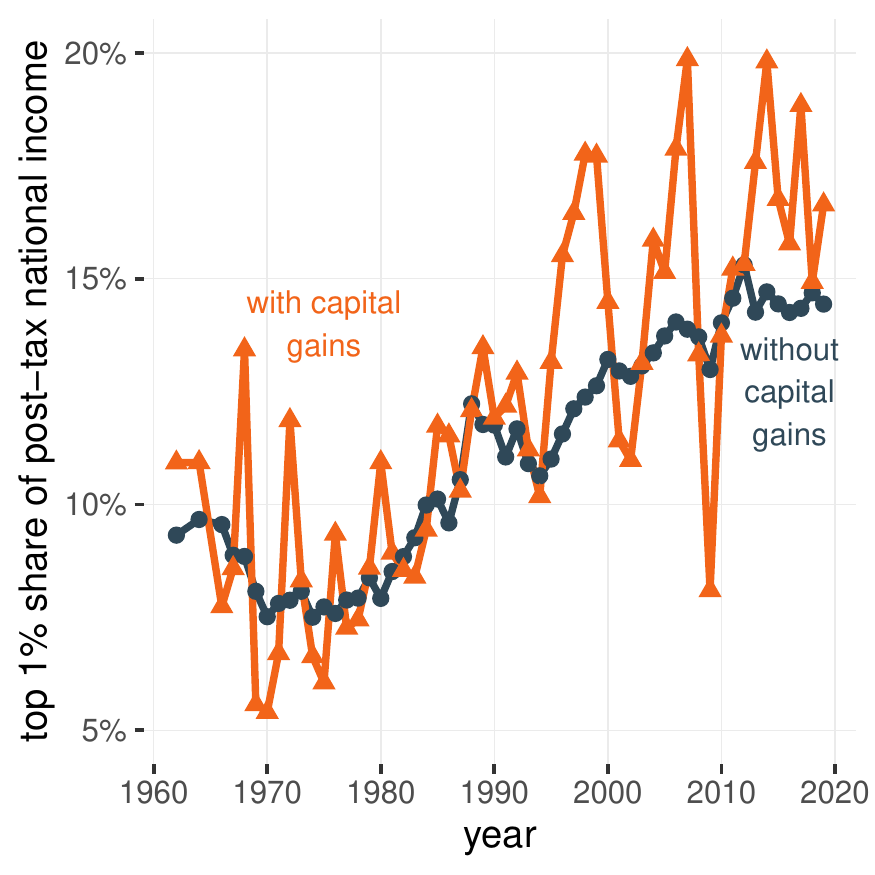}
        \caption{\centering Top 1\% Post-tax National Income Share, with and without Capital Gains}
        \label{fig:national-income-share}
    \end{subfigure}
    
    \vspace{1em}
    
    \begin{flushleft}
    \justify \footnotesize \textit{Source:} Author's computations using the \gls{dina} microdata and table TSD1 (online appendix) from \citet{piketty_distributional_2018}. \textit{Note:} The unit of analysis is the adult individual (20 or older). Income is split equally between members of couples. Capital gains are estimated assuming a constant rate of capital gains by asset type.
    \end{flushleft}
    \caption{The Impact of Capital Gains on National Income and Its Distribution}
    \label{fig:national-income}
\end{center}
\end{figure}

We can measure capital gains when they accrue to individuals or when they are realized. For our purpose, accrued capital gains are more useful than realized ones, because they are the ones that reconcile changes in the value of the balance sheet with national income and savings. On the other hand, whether a capital gain is realized now or later is the result of various tax and economic incentives that are not relevant here and do not correspond to any meaningful economic aggregate.

The \gls{dina} data only records taxable capital gains, which is essentially a measure of realized capital gains. These are a poor proxy for accrued capital gains \citep{alstadsaeter_accounting_2016}. Instead, I estimate them individually using the capitalization approach of \citet{robbins_capital_2018}. I retrieve the capital gains rate by year and asset type from the national accounts \citep[table TSD1 in appendix]{piketty_distributional_2018}. Then, I assume that for a given asset type, everyone gets the same rate of capital gains. By construction, these micro-level capital gains estimates are consistent with macro totals. Their distribution follows the logic of the \citet{saez_wealth_2016} capitalization method.\footnote{Although the income measure in the \gls{dina} data does not include capital gains, it does distribute income from the corporate sector to the owners of capital, which partly accounts for changes in asset prices. My capital gains measure is net of retained earnings, so there is no double counting.} \citet{robbins_capital_2018} provides a thorough discussion of why that measure is more appropriate to analyze the impact of asset price changes on inequality and the economy.

National income including capital gains can be quite volatile (figure~\ref{fig:national-income-macro}), but on average their inclusion matters on several fronts. \citet{robbins_capital_2018} shows that their inclusion overturns certain stylized facts about the United States economy (such as the long-run decline of saving rates) and strengthens others (such as the rising capital share and increase in income inequality). As shown in figure~\ref{fig:national-income-share}, capital gains were dampening the top 1\% share of post-tax national income during most of the 1970s, but since then, they have consistently increased it.

\paragraph{Wealth by Age}

The age information in the \gls{dina} data is very limited, so I cannot use it. Instead, I import it from the \gls{scf} and demographic estimates using constrained statistical matching. I calculate the rank in the wealth distribution in both the \gls{dina} and the \gls{scf} data, and the rank in the age distribution by sex and household type (single or couple) in the \gls{scf} data. Then, I match the \gls{dina} observations one by one to \gls{scf} observations based on their wealth rank to give them a rank in the age distribution.\footnote{Note that both datasets are weighted, so that observations end up being duplicated and partially matched to one another. When the samples contain $M$ and $N$ observations, respectively, the resulting dataset contains at most $M+N-1$ observations.} Finally, I use the population structure from the demographic data to attribute an age to every \gls{dina} observation. By construction, the method preserves the wealth distribution in \gls{dina}, the population by age and sex from demographic sources, and the copula between wealth and age from the \gls{scf}.

\subsection{Demography}\label{sec:demography}

I compute the entire demography of the United States from 1850 to 2100. Although the income and wealth data does not start until 1962, the model requires demographic data that starts much earlier. Indeed, I need to simulate how wealth gets transmitted from one generation to the next. Therefore, if a supercentenarian dies in the 1960s, I have to be able to simulate their entire life history to know how many live children they have and how old they are. For all years and all ages, I estimate data on the population structure by age and sex, mortality (i.e., life tables), fertility (for both sexes), and intergenerational ties (age and sex of children). Sometimes, data is only available by age group (e.g., of five years) or a subset of years (e.g., every ten years). Whenever necessary, I interpolate estimates with a monotonic cubic spline \citep{fritsch_monotone_1980} to get data for every single year and age.

\paragraph{Population by Age and Sex}

Before 1900, I directly estimate the population pyramid using the decennial census microdata from the IPUMS~USA database \citep{ruggles_steven_ipums_2022}. From 1900 to 1932, I use the National Intercensal Tables from the \citeauthor{united_states_census_bureau_national_2021}. From 1933 to 2016, I use population estimates from the Human Mortality Database\nocite{HumanMortalityDatabase}.\footnote{See \url{https://www.mortality.org/hmd/USA/DOCS/ref.pdf} for detailed primary sources.} After 2016, I use the projections from the World Population Prospects \citep{united_nations_world_2019}.

\paragraph{Life Tables}

Before 1900, I use the historical life tables from \citet{haines_estimated_1998}. From 1900 to 1932, I use the Human Life Table Database\nocite{HumanLifeTable}, and from 1933 to 2016, life tables from the Human Mortality Database\nocite{HumanMortalityDatabase}. After 2016, I rely on projections from the World Population Prospects \citep{united_nations_world_2019}. All tables are broken down by sex.

\paragraph{Age-Specific Fertility Rates by Birth Order}

I estimate age-specific fertility rates by birth order for both sexes. For women, they are directly available from 1933 to 2016 from the Human Fertility Database\nocite{HumanFertility}. From 1917 to 1932, I use data from the Human Fertility Collection\nocite{HumanFertilityCollection}. That same source provides fertility rates until going back to 1895--1899, but without the breakdown by birth order. Therefore, before 1917, I assume that the birth order composition remains constant. Before 1895, there is no age-specific data available, so I use the data on the total fertility rate and rescale the age profile from 1895--1899 to that value.\footnote{See Gapminder\nocite{Gapminder}: \url{https://www.gapminder.org/news/children-per-women-since-1800-in-gapminder-world/}}

Unlike female fertility rates, male fertility rates are not a standard demographic indicator, so they are not directly available from any source. To estimate them, I combine the age-specific female fertility rates with the joint distribution of the age of opposite-sex couples since 1850 calculated using the decennial census microdata from the IPUMS~USA database \citep{ruggles_steven_ipums_2022}.

\paragraph{Age and Sex of Children}

I simulate the distribution of the number, age, and sex of living children for each year after 1962 (when income and wealth data starts). I do this for every age and both sexes, which allows me to realistically model how wealth gets transmitted from one generation to the next. To that end, I combine all the data above. I make every person have children randomly over their past lifetime according to year, age, and sex-specific fertility rate. Because I have the breakdown by birth order, I can take into account how the decision to have another child depends on the number of children that one already has. Then, I make each child go through life and die randomly according to their year, age, and sex-specific mortality rate. As a result, I can tie every individual in the database to fictitious descendants that are, on average, representative of the true composition of descendants.

\subsection{Inheritance and Estate Tax}\label{sec:inheritance}

Part of the inheritance process is determined by the demographics and the distribution of wealth, while other parts have to be modeled separately. I assume that people die at random, conditional on their age and sex, so the distribution of inheritances corresponds to the distribution of wealth, weighted by mortality rates. I then assume that the wealth of decedents is either redistributed to their spouse (if any) or to their descendants (if they have no living spouse), after payment of the estate tax. The demography gives the age and sex of decedents (see section~\ref{sec:demography}). I assume that inheritance is split equally between children, as is the norm in the United States \citep{menchik_primogeniture_1980}.

While the demographic aspect of inheritance is endogenously determined by demography, I still need to model separately how wealth gets distributed for a given age and sex. This facet of the problem captures intergenerational wealth mobility, in the sense that wealthier people might also have wealthier parents and thus inherit more. There are two aspects to this question: the extensive margin (how likely are you to receive an inheritance in a given year?) and the intensive margin (how much inheritance do you receive?) To address this question, I use data from the \gls{scf}, which has been recording inheritance consistently since 1989. Because the probability of receiving an inheritance in a given year is very low overall (about half a percent, see figure~\ref{fig:inheritance-extensive-age}), I have to pool all the 1989--2016 waves to get sufficient sample sizes.

\begin{figure}[p]
\begin{center}
    \begin{subfigure}[t]{0.49\textwidth}
        \includegraphics[width=\textwidth]{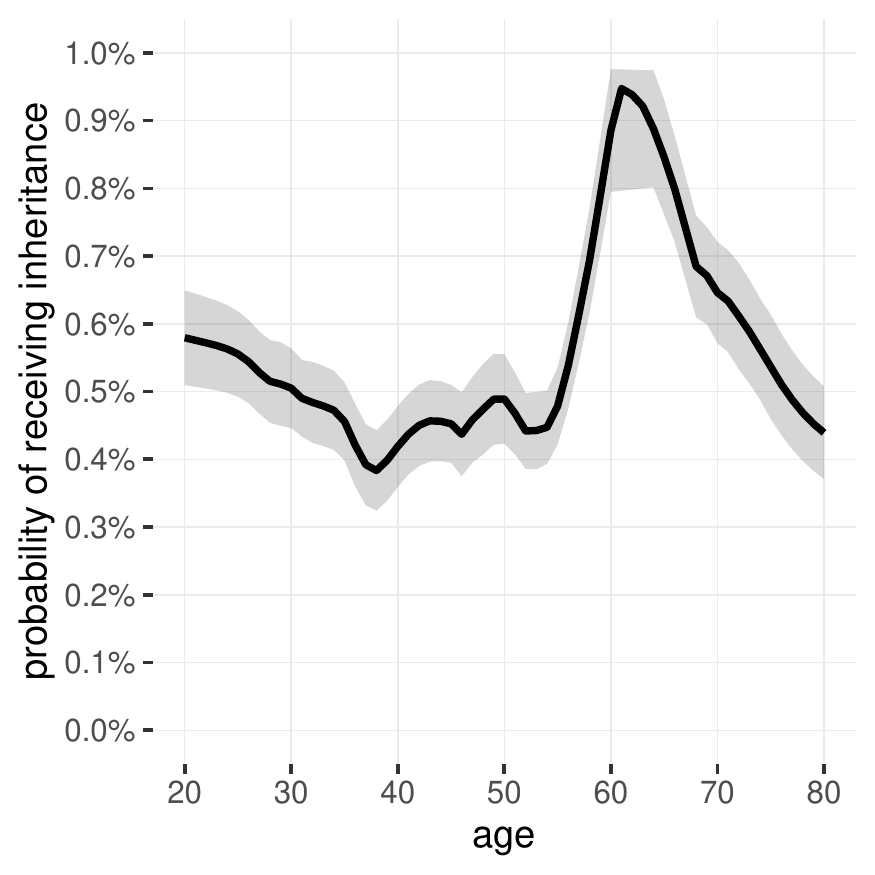}
        \caption{\centering Probability of Receiving Inheritance by Age}
        \label{fig:inheritance-extensive-age}
    \end{subfigure}
    \begin{subfigure}[t]{0.49\textwidth}
        \centering
        \includegraphics[width=\textwidth]{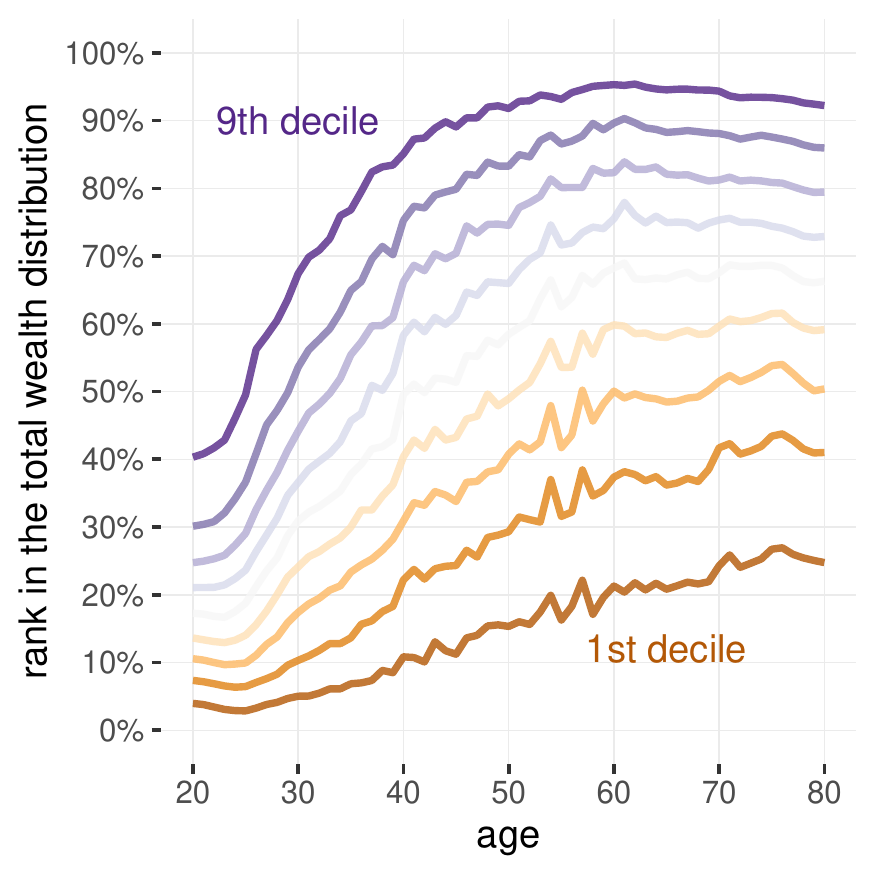}
        \caption{Rank in the Wealth Distribution by Age}
        \label{fig:inheritance-extensive-wealth-age}
    \end{subfigure}
    
    \vspace{1em}
    
    \begin{subfigure}[t]{0.49\textwidth}
        \includegraphics[width=\textwidth]{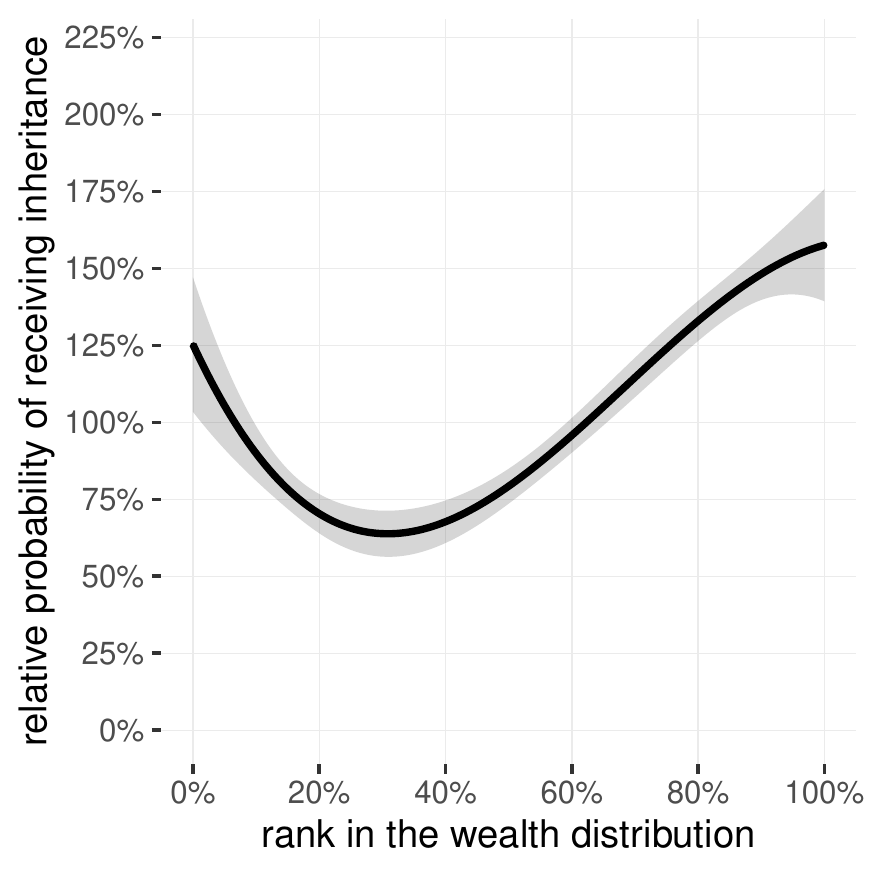}
        \caption{\centering Relative Probability of Inheritance \newline (Conditional on Age)}
        \label{fig:phi-inheritance-wealth}
    \end{subfigure}
    \begin{subfigure}[t]{0.49\textwidth}
        \includegraphics[width=\textwidth]{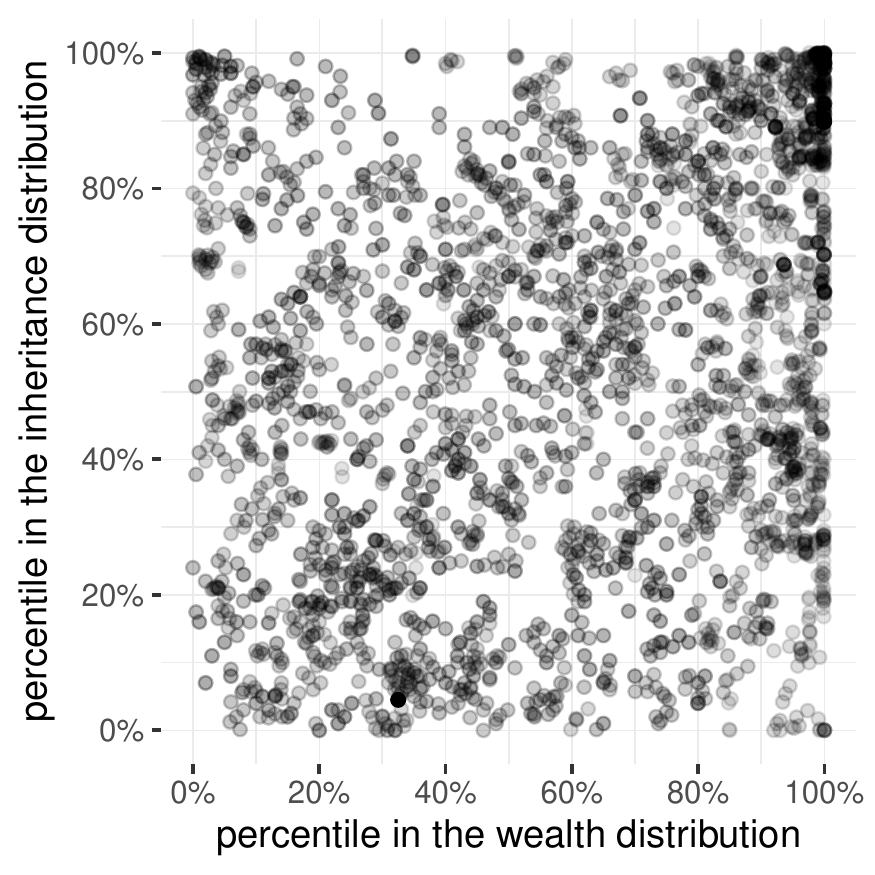}
        \caption{\centering Ranks in the Wealth and in the Inheritance Distribution (Conditional on Age)}
        \label{fig:inheritance-wealth-conditional}
    \end{subfigure}
    
    \vspace{1em}
    
    \begin{flushleft}
    \justify \footnotesize \textit{Source:} Own computations using the \glsfirst{scf} (1989--2019). Gray ribbons correspond to the 95\% confidence intervals. In figure~\ref{fig:inheritance-wealth-conditional}, opacity is proportional to the weight of observations.
    \end{flushleft}
    \caption{Modeling of Inheritance}
    \label{fig:model-inheritance}
\end{center}
\end{figure}

\paragraph{Extensive Margin}

Let $D_i=1$ if individual $i$ receives inheritance, and $D_i=0$ otherwise. Let $A_i$ be their age and $W_i$ their wealth. Assume that:
\begin{equation}\label{eqn:extensive-margin}
\prob\{D_i = 1|A_i = a, W_i = w\} = \prob\{D_i = 1|A_i = a\}\phi(F_{A_i = a}(w))
\end{equation}
where $F_{A_i = a}$ is the \gls{cdf} of wealth conditional on age, and $\int_0^1 \phi(r)\,\dif r = 1$. By construction, the expected value of the right-hand side of~(\ref{eqn:extensive-margin}) conditional on age is equal to $\prob\{D_i = 1|A_i = a\}$ so that the specification makes probabilistic sense.\footnote{It is the direct result of a change of variable $r = F_{A_i = a}(w)$ and using the fact that $\pd{}{w}F_{A_i = a}(w)=f_{A_i = a}(w)$, so that $\int_{-\infty}^{+\infty} \phi(F_{A_i = a}(w)) f_{A_i = a}(w)\,\dif w = \int_0^1 \phi(r)\,\dif r = 1$.} Note that $F_{A_i = a}(w)$ is the rank of $w$ in the wealth distribution (conditional on age), which is how we can make the formula~(\ref{eqn:extensive-margin}) consistent regardless of the shape of the wealth distribution.

The value of $\prob\{D_i = 1|A_i = a\}$ is determined by demography, so we only need to estimate $\phi$. I start by calculating a rank in the wealth distribution conditional on age by running a nonparametric quantile regression of wealth on age for every percentile (see figure~\ref{fig:inheritance-extensive-wealth-age}).\nocite{quantreg} I then regress the dummy $D_i$ for having received inheritance on that rank, multiplied by $\prob\{D_i = 1|A_i = a\}$. I use \gls{ols} and a cubic polynomial with coefficients constrained so that its integral over $[0,1]$ equals one (see figure~\ref{fig:phi-inheritance-wealth}). As we can see, even after partialling out the effect of age, wealthier people still experience a higher probability of receiving an inheritance. I use that polynomial as my estimate of $\phi$.

\paragraph{Intensive Margin}

I account for the intensive margin by modeling the joint distribution of the ranks in the wealth distribution and the inheritance distribution (i.e., the copula), conditional on age and on having received an inheritance. I take the subsample of inheritance receivers and calculate their rank in the wealth and inheritance distribution using a nonparametric quantile regression as I did for the extensive margin.

The dependence between the two ranks is weak but significant (see figure~\ref{fig:inheritance-wealth-conditional}): their Kendall's $\tau$ is equal to $8.3\%$. I represent this dependency parametrically using a bivariate copula. I select the most appropriate model out of a large family of 15 single-parameter copulas by finding the best fit according to the \gls{aic}, which is the Joe copula.\footnote{The list of copulas includes the Gaussian copula, Student's $t$ copula, the Clayton copula, the Gumbel copula, the Frank copula, the Joe copula, and rotated versions of these copulas.}\footnote{The Joe copula has the parametric form $C_{\theta}(u,v)={1-\left[(1-u)^\theta + (1-v)^\theta - (1-u)^\theta(1-v)^\theta \right]^{1/\theta}}$.} I estimate its parameter so as to match the empirical value for Kendall's $\tau$.\nocite{VineCopula}

\paragraph{Estate Tax}

\begin{figure}[ht]
\begin{center}
    \begin{subfigure}[t]{0.499\textwidth}
        \includegraphics[width=\textwidth]{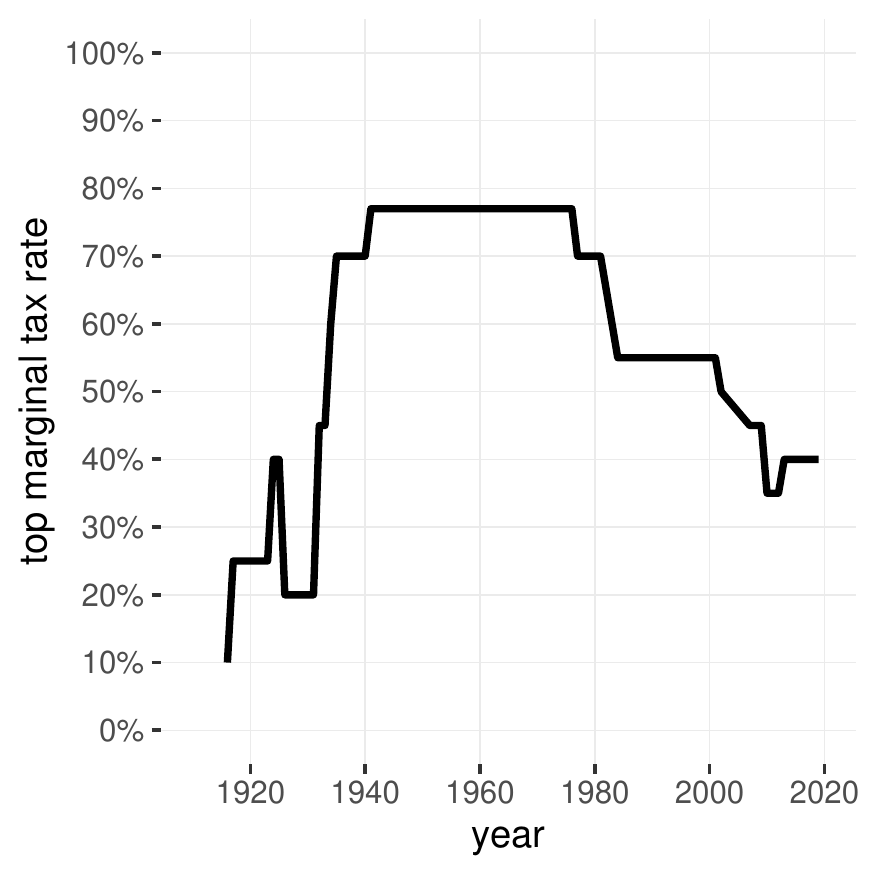}
        \caption{Top Marginal Estate Tax since 1916}
        \label{fig:marginal-estate-tax-rate}
    \end{subfigure}%
    \begin{subfigure}[t]{0.499\textwidth}
        \includegraphics[width=\textwidth]{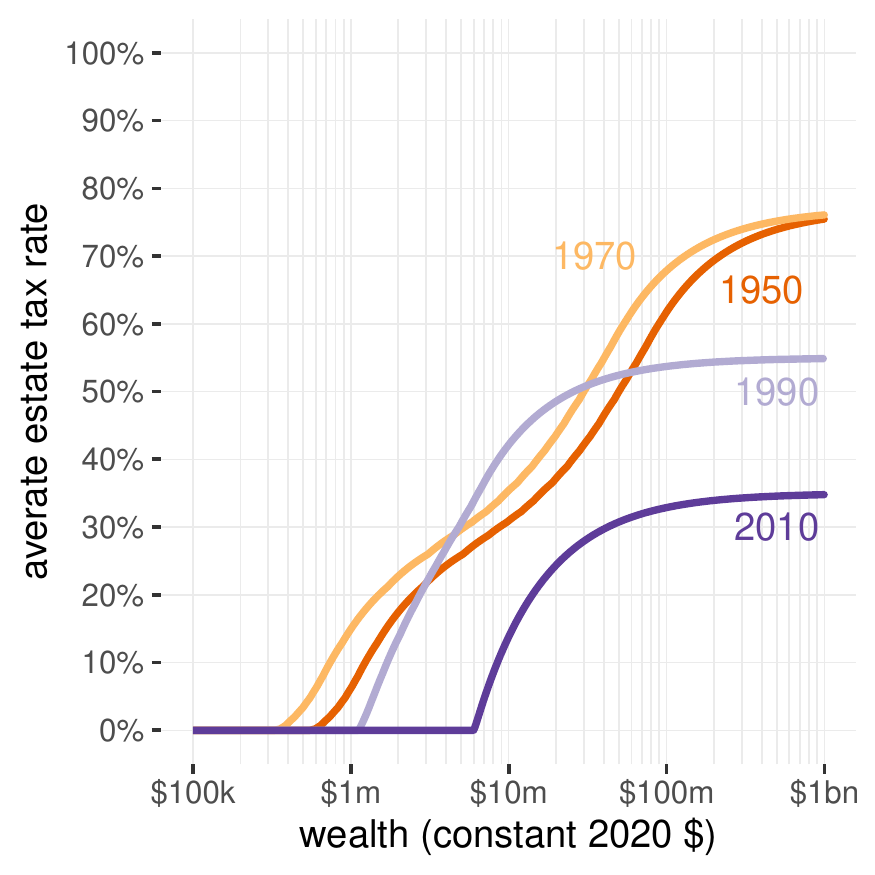}
        \caption{Average Estate Tax Rate, by Wealth}
        \label{fig:estate-tax-schedule}
    \end{subfigure}
    
    \vspace{0.5em}
    
    {\footnotesize \textit{Source:} Own computation using the statutory schedules of the federal estate tax.}
    \caption{Estate Tax}
    \label{fig:estate-tax}
\end{center}
\end{figure}

I account for the federal estate tax using the complete estate tax schedule and exemption amount for each year. The top marginal estate tax rate has followed a clear inverted U-shaped pattern over the 20th century (figure~\ref{fig:marginal-estate-tax-rate}), having been reduced by half since its mid-century peak. However, the changes to the overall progressivity of the estate tax are more ambiguous (figure~\ref{fig:estate-tax-schedule}). While the top marginal tax rate was very high in the 1950s, the top bracket did not kick in until extremely high levels of wealth. The 1980s reforms significantly reduced the top tax rate and increased the exemption amount so that by 1990, the very top and the upper middle of the wealth distribution were facing lower average tax rates. But \$10m estates were actually facing slightly higher average tax rates. By now, however, the estate has been lowered so much that its profile is unambiguously less progressive than in the 1950s.

\subsection{Marriages, Divorces and Assortative Mating}

\paragraph{Aggregate Marriages and Divorce Rates by Age}

I calculate population pyramids disaggregated by marital status using census microdata from the IPUMS~USA database \citep{ruggles_steven_ipums_2022}. I use this data to estimate the rate at which people get married and divorced by age and sex. This is possible assuming that mortality rates are the same regardless of marital status, and assuming that marriage rates are the same for people that are divorced are people that have never been married. Under these conditions, we have for each sex:
\begin{align*}
\text{(marriage rate)}_{a,t} &\textstyle = 1 - \frac{\text{(fraction never married)}_{a+1,t+1}}{\text{(fraction never married)}_{a,t}} \\[0.5em]
\text{(divorce rate)}_{a,t} &\textstyle= \frac{\text{(fraction divorced)}_{a+1,t+1} - [1 - \text{(marriage rate)}_{a,t}] \times \text{(fraction divorced)}_{a,t}}{\text{(fraction married)}_{a,t}}
\end{align*}
where $a$ is age, and $t$ is the year. This estimate being quite noisy, I winsorize the bottom 10\% and the top 10\% of values, and then apply a moving average with a ten-year age and year window. This gives the results of Figure~\ref{fig:marriage-rate}. For the simulation, we anchor these numbers to the aggregate data on the crude rate of marriage and divorce from the \gls{nvss} \citep{NVSS} (Figure~\ref{fig:marriage-divorce-rates}).

\paragraph{Assortative Mating}

\begin{figure}[p]
\begin{center}
    \begin{subfigure}[t]{\textwidth}
        \includegraphics[width=\textwidth]{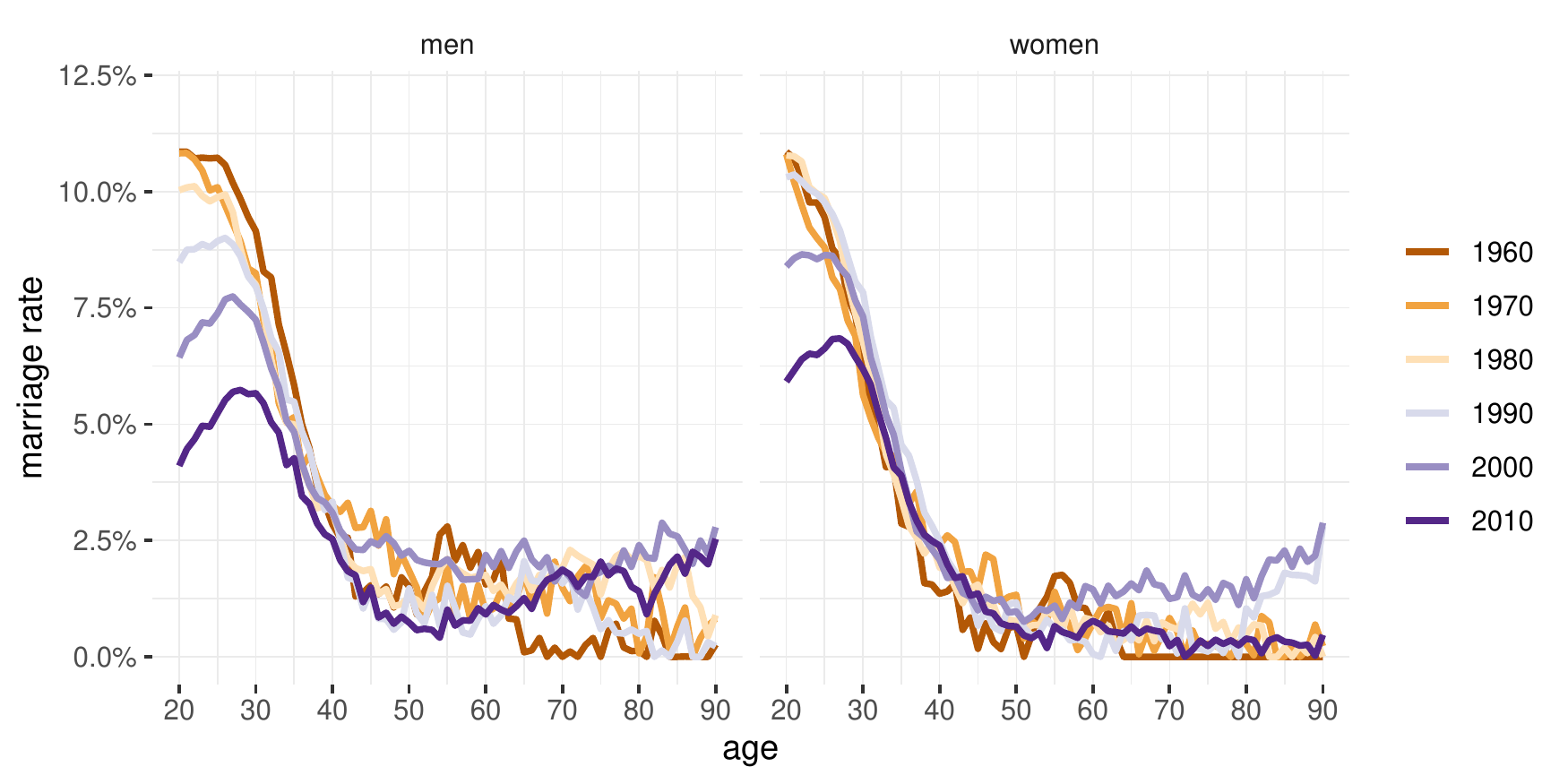}
        \caption{\centering Rates of Marriage by Age and Sex}
        \label{fig:marriage-rate}
    \end{subfigure}
    \vspace{1em}
    
    \begin{subfigure}[t]{0.49\textwidth}
        \includegraphics[width=\textwidth]{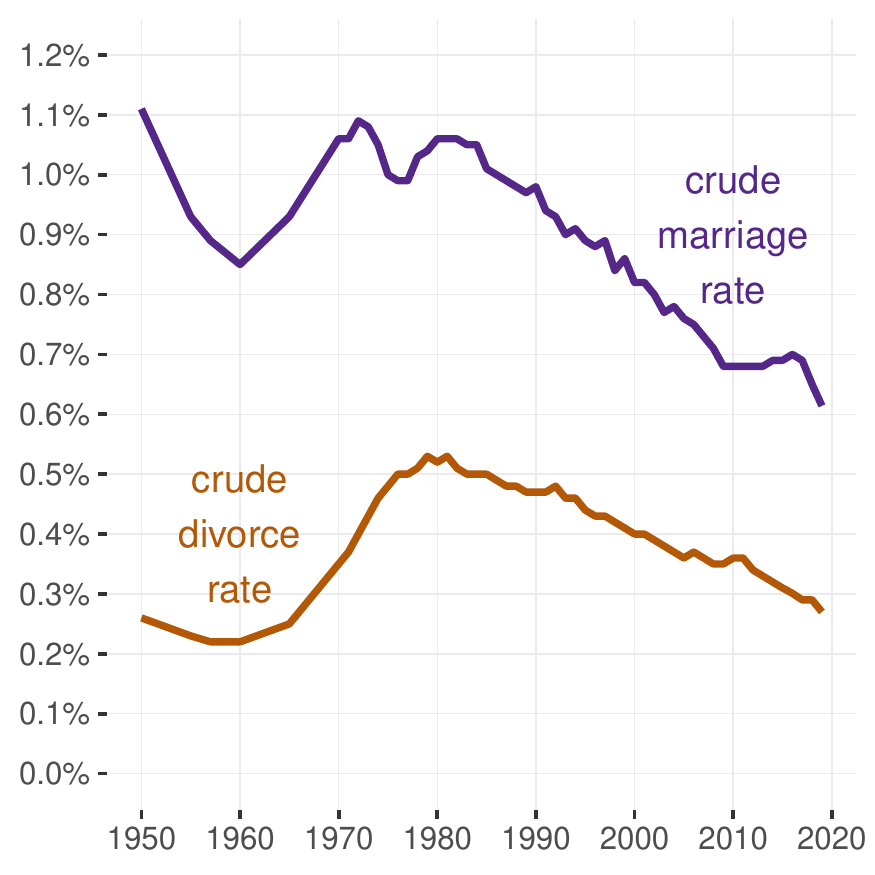}
        \caption{\centering Crude Aggregate Rates of Marriage and Divorce}
        \label{fig:marriage-divorce-rates}
    \end{subfigure}
    \begin{subfigure}[t]{0.49\textwidth}
        \includegraphics[width=\textwidth]{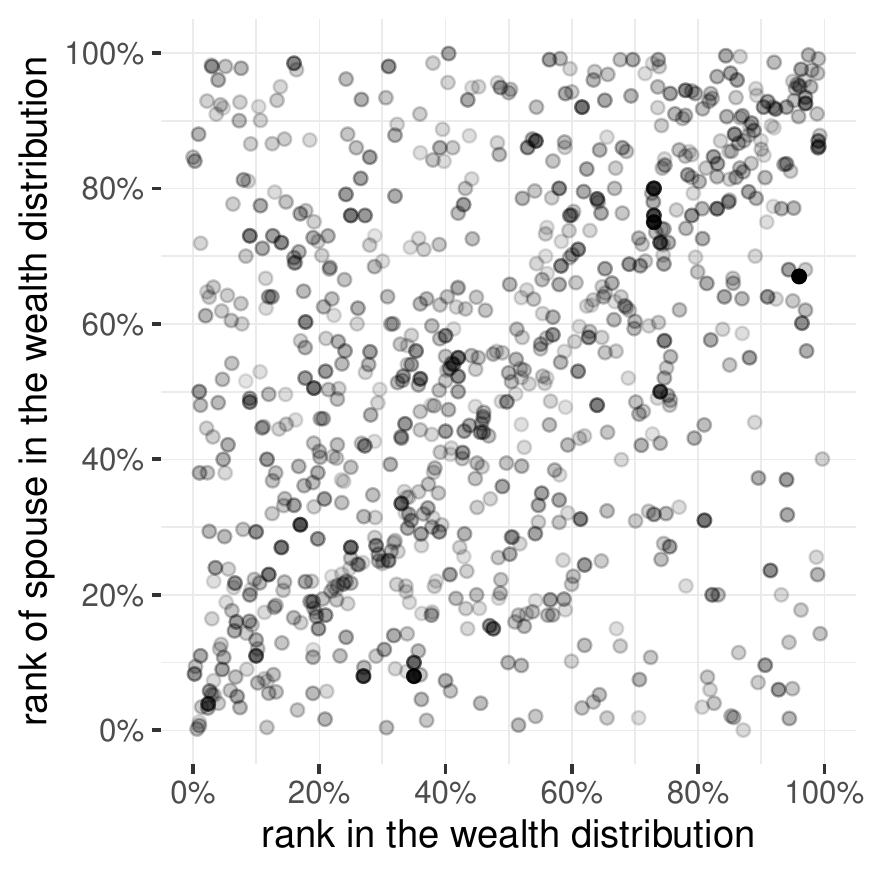}
        \caption{\centering Ranks in the Wealth Distribution of Both Spouses at Marriage}
        \label{fig:spouse-wealth-copula}
    \end{subfigure}
    \vspace{1em}
    
    \begin{flushleft}
    \justify \footnotesize \textit{Source:} Own computations using the census microdata from IPUMS \citep{ruggles_steven_ipums_2022}, \glsfirst{nvss} \citep{NVSS}, and \glsfirst{sipp} (2013--2016). In figure~\ref{fig:spouse-wealth-copula}, opacity is proportional to the weight of observations.
    \end{flushleft}
    \caption{Modeling of Marriage and Divorce}
    \label{fig:model-marriage-divorce}
\end{center}
\end{figure}

I calculate the extent of assortative mating by estimating the copula between the rank of each spouse in the wealth distribution, conditional on age, at the time they get married. To that end, I use data from the \gls{sipp}, which collected data on wealth between 2013 and 2016. I select all people who just got married (i.e., people single in year $t-1$ but married in year $t$) and fit a parametric copula to the joint rank of each spouse in the wealth distribution, conditional on age (Figure~\ref{fig:spouse-wealth-copula}). As in Section~\ref{sec:inheritance}, I select the best copula out of a large family of 15 single-parameter copulas according to the \gls{aic}, which is the Frank copula.\footnote{The formula of the Frank copula is $C(u,v)=-{\frac {1}{\theta }}\log \!\left[1+{\frac {(\exp(-\theta u)-1)(\exp(-\theta v)-1)}{\exp(-\theta )-1}}\right]$.} I estimate its parameter so as to match the empirical Kendall's $\tau$, which is equal to $28\%$.

I also estimate the effect of assortative mating at divorce by looking at how wealth is divided between both spouses right before they separate. I also use the \gls{sipp}, select people who are married in year $t$ but separated in year $t+1$, and extract the distribution of the share of the household's wealth that each spouse has.

\section{Detailed Estimations and Results}\label{sec:detailed-estimations}

Overall, the estimation procedure in this paper only involves fitting a series of lines. In practice, there are several methods for fitting lines, and this paper uses the approach that makes the most sense given the nature of the problem. This section describes the procedure used and discusses the sensitivity of estimates to the various hyperparameters.

\subsection{Estimation Procedure}\label{sec:deming-estimation-procedure}

Recall that the estimation procedure in this paper requires fitting a linear relationship of the form given by equation~(\ref{eq:kf-full-int}), which I will simplify as follows to emphasize the line-fitting aspect:
\begin{equation}\label{eq:fit-eq}
\forall t \in \{1, \dots T\} \qquad y_{it}=a + bx_{it}
\end{equation}
for each wealth bin $i \in \{1, \dots, n\}$. The data are $y_{it}$ and $x_{it}$. The parameters are $a_i$ and $b_i$. While fitting this equation using \gls{ols} is an option, it is not a natural choice here for two reasons. First, that this equation does not hold perfectly in practice results from error terms that can be attributed simultaneously to $y_{it}$ and $x_{it}$. In that sense, the setting is closer to an error-in-variable model, for which \gls{ols} are known to be biased. Second, and relatedly, there is an invariance in equation~(\ref{eq:fit-eq}) that \gls{ols} breaks. Unlike usual econometric regressions, there is no obvious distinction between the outcome variable and the explanatory variable. The equation would make as much sense if the roles of $x_{it}$ and $y_{it}$ were reversed (and the meaning of the parameters $a_i$ and $b_i$ changed accordingly). But with \gls{ols}, regressing $y$ on $x$ yields a different result than regressing $x$ on $y$.

The solution is to use a \citet{deming_statistical_1943} regression, a generalization of orthogonal regression, which is a standard way of estimating error-in-variable models. The \citet{deming_statistical_1943} estimator minimizes:
\begin{equation*}
\min_{a,b,x^*_1,\dots,x^*_T} \sum_{t=1}^T (y_t - a - bx^*_t)^2 + \delta(x_t-x^*_t)^2
\end{equation*}
where $\delta$ is an hyperparameter, defined separately, that captures the relative variance of the measurement error on $y_{it}$ and $x_{it}$. When $\delta=0$ the estimator collapses to \gls{ols}, and when $\delta=+\infty$ it collapses to \gls{ols} as well, but with the left-hand side and the right-hand side reversed. This estimator has an analytical solution, which is:
\begin{align*}
b_i &= \frac{s_{yy}-\delta s_{xx} + \sqrt{(s_{yy}-\delta s_{xx})^2 + 4\delta s_{xy}^2}}{2s_{xy}} & a_i&=\overline{y} - b\overline{x}
\end{align*}
where:
\begin{align*}
\overline{x} &= \tfrac{1}{T}\sum x_{it} &
\overline{y} &= \tfrac{1}{T}\sum y_{it} \\
s_{xx} &= \tfrac{1}{T}\sum_t (x_{it}-\overline{x})^2 &
s_{xy} &= \tfrac{1}{T}\sum_t (x_{it}-\overline{x})(y_{it}-\overline{y}) &
s_{yy} &= \tfrac{1}{T}\sum_t (y_{it}-\overline{y})^2
\end{align*}
I estimate a variance of the measurement error on $x_{it}$ and $y_{it}$, by taking a five-year moving average of these values (in each wealth bin), and calculate the variance of the difference between the values and their moving average. This allows me to estimate $\delta$. This is a rough estimate, so I check that results are robust to it in Section~\ref{sec:robustness-checks}. The choice of $\delta$ turns out to have limited impact, because the line actually fits the data closely: error terms all smalls, and therefore assuming that they are due to $x_{it}$ or $y_{it}$ makes little difference. 

\subsection{Estimation of Standard Errors}\label{sec:std-err}

This paper's setting requires a nonstandard procedure for estimating standard errors for two reasons. First, because I use a \citet{deming_statistical_1943} regression, which imply different standard errors than \gls{ols}. Second, because the error terms are not independent: there are correlated both across time and across wealth bins. To deal with these issues, I use a parametric bootstrap procedure that efficiently simulates error terms that reproduce the key features of the data. 

First, I estimate an error term for each observation. Note that in geometrical terms, the \citet{deming_statistical_1943} performs an oblique projection of observations onto the fitted line, according to a direction set by the hyperparameter $\delta$. This projection is associated with the following scalar product in $\mathds{R}^2$:
\begin{equation*}
\langle (x_1, y_1), (x_2, y_2) \rangle_\delta \mapsto \delta x_1 x_2 + y_1y_2
\end{equation*}
Given this behavior, the error terms that the \citet{deming_statistical_1943} regression estimates for $x_{it}$ and $y_{it}$ are perfectly correlated, so it would not make sense to model them separately. Instead, I define a single error term $e_{it}$ for every observation as the scalar product between (i) the unit vector orthogonal to the fitted line, and (ii) the residual vector $(x_{it} - x^*_{it}, y_{it} - y^*_{it})$. The vector that is normal to the fitted line is $(b, -\delta)/\sqrt{\delta(b^2 + \delta)}$, where $b$ is the slope of the fitted line, hence:
\begin{equation*}
e_{it} = \frac{\langle (b, -\delta), (x_{it} - x^*_{it}, y_{it} - y^*_{it}) \rangle_\delta}{\sqrt{\delta(b^2 + \delta)}}  = \sqrt{\frac{\delta}{b^2 + \delta}}[b (x_{it} - x^*_{it}) - (y_{it} - y^*_{it})]
\end{equation*}
Then, for each wealth bin $i$, I estimate a coefficient $\rho_{i}$ of autocorrelation across time as:
\begin{equation*}
\rho_i = \left(\tfrac{1}{T}\sum_t e_{it}^2\right)^{-1}\left(\tfrac{1}{T-1}\sum_t e_{it}e_{i,t-1}\right)
\end{equation*}
And I also estimate a correlation coefficient $r$ of autocorrelation across wealth bins:
\begin{equation*}
r = \tfrac{1}{T}\sum_t \left(\tfrac{1}{n}\sum_i e^2_{it}\right)^{-1} \left(\tfrac{1}{n-1}\sum_i e_{it}e_{i-1,t}\right)
\end{equation*}
Unlike $\rho_i$, I do not vary the coefficient $r$ across time. Indeed, there is no empirical evidence, and also no good \textit{a priori} reasons, for it to vary as such. Instead, I average the value across all periods, which gives a more robust estimate. I assume that autocorrelations across time and wealth bins follow each follow an $AR(1)$ process. For each wealth bin $i$, the correlation matrix across time is $\Omega^w_i = \mathrm{Toeplitz}[1, \rho_i, \rho_i^2, \dots, \rho_i^T]$. The correlation matrix across wealth bins is the same for all periods and is equal to $\Omega^t = \mathrm{Toeplitz}[1, r, r^2, \dots, r^n]$. I assume that the correlation matrix $\Omega$ between all error terms, across all periods and all wealth bins, has a two-way $AR(1)$ structure, which can be constructed as:
\begin{equation*}
\Omega = \begin{bmatrix}
\Omega^w_1 &          &        &          \\
         & \Omega^w_2 &        &          \\
         &          & \ddots &          \\
         &          &        & \Omega^w_n
\end{bmatrix}^{1/2}
\underbrace{\begin{bmatrix}
\Omega^t &          &        &          \\
         & \Omega^t &        &          \\
         &          & \ddots &          \\
         &          &        & \Omega^t
\end{bmatrix}}_{\text{$T$ times}}
\begin{bmatrix}
\Omega^w_1 &          &        &          \\
         & \Omega^w_2 &        &          \\
         &          & \ddots &          \\
         &          &        & \Omega^w_n
\end{bmatrix}^{1/2}
\end{equation*}
where $(\cdot) \mapsto (\cdot)^{1/2}$ is the matrix square root for symmetric positive definite matrices. I estimate a standard error $\sigma_i$ in every wealth bin $i$ as:
\begin{equation*}
\sigma_i = \sqrt{\tfrac{1}{T} \sum_t e_{it}^2}
\end{equation*}
and define the vector $\sigma = (\sigma_1, \dots, \sigma_n)$. In line with the evidence, I assume that the error term is heteroscedastic over wealth bins but homoscedastic over time. The final, complete covariance matrix $\Sigma$ for error terms is, therefore, equal to $\Sigma = A'\Omega A$, where
\begin{equation*}
A = \textrm{Diagonal}(\underbrace{\sigma, \dots, \sigma}_{\text{$T$ times}})
\end{equation*}
I simulate the error terms $\tilde{e}_{it}$ according to a multivariate normal distribution with mean zero and covariance $\Sigma$. The 2-dimensional residual is the product of this value with the unit vector normal to the fitted line, i.e. $(b, -\delta)/\sqrt{\delta(b^2 + \delta)}$. So I obtain bootstrap replications of the sample by adding the simulated error terms to the fitted values as follows:
\begin{align*}
\tilde{x}_{it} &= x^*_{it} + \tilde{e}_{it} b/\sqrt{\delta(b^2 + \delta)} \\
\tilde{y}_{it} &= y^*_{it} - \tilde{e}_{it} \delta/\sqrt{\delta(b^2 + \delta)}
\end{align*}
I get bootstrap replications of the parameter values by running the \citet{deming_statistical_1943} again regression on the simulated data.

\subsection{Transformation of Parameters Due to the Inverse Hyperbolic Sine Transform}\label{sec:transform-param-asinh}

To facilitate the manipulation of the data and the estimations, I transform wealth using the inverse hyperbolic sine function, which we can define as:
\begin{equation*}
\ash(x) \mapsto \log(x + \sqrt{x^2 + 1})
\end{equation*}
This function is useful because it is bijective, behaves logarithmically at large scales, but still tolerates zero or negative values, for which it behaves linearly. A helpful feature of continuous-time stochastic processes is that the dynamics of wealth and the dynamics of $\ash(\text{wealth})$ are directly related to one another through Itô's lemma. Concretely, assume that wealth follows a \gls{sde} with the following drift and diffusion:
\begin{align*}
\mu_t(w) &= z_t(w) - c_t(w) & \sigma_t(w) &= \left(\psi^2_t(w) + \gamma^2_t(w)\right)^{1/2}
\end{align*}
where:
\begin{align*}
z_t(w) &\equiv y_t(w) + (r_t(w) - g_t)w  & \psi^2_t(w) &\equiv \upsilon^2_t(w) + \phi^2_t(w)w^2
\end{align*}
are, respectively, the drift and the diffusion induced by income. Then $\ash(\text{wealth})$ follows a \gls{sde} as well with the following drift and diffusion:
\begin{align*}
\tilde{\mu}_t(w) &= \frac{\mu_t(w)}{\sqrt{1+w^2}} - \frac{w}{2\sqrt{1+w^2}} \left(\frac{\sigma_t(w)}{\sqrt{1+w^2}}\right)^2 \\
\tilde{\sigma}_t(w) &= \frac{\sigma_t(w)}{\sqrt{1+w^2}}
\end{align*}
To perform the estimation, I must first transform the income parameters to conform to the $\ash$ scale:
\begin{align*}
\tilde{z}_t(w) &= \frac{z_t(w)}{\sqrt{1+w^2}} - \frac{w}{2\sqrt{1+w^2}}\left(\frac{\psi_t(w)}{\sqrt{1+w^2}}\right)^2 &
\tilde{\psi}_t(w) &= \frac{\psi_t(w)}{\sqrt{1+w^2}}
\end{align*}
Then I estimate $\tilde{c}(w)$ and $\tilde{\gamma}(w)$ using $\tilde{z}_t(w)$ as the income-induced drift, and $\tilde{\psi}_t(w)$ as the income-induced diffusion. I transform these parameters back into the linear scale:
\begin{align*}
c(w) &= \tilde{c}_t(w)\sqrt{1+w^2} - \frac{1}{2}\tilde{\gamma}(w)w & \gamma(w) = \tilde{\gamma}(w)\sqrt{1+w^2}
\end{align*}
which gives estimates that can directly be interpreted as the mean and the variance of consumption.

\subsection{Robustness Checks}\label{sec:robustness-checks}

The estimation involves several hyperparameters: they include $\delta$ in the \citet{deming_statistical_1943} regression (see Section~\ref{sec:deming-estimation-procedure}), as well as a series of bandwidths that control the degree of smoothing involved when estimating specific densities and derivatives. In practice, the amount of noise in the data guides the choice of these hyperparameters. This section lists their default values and, more importantly, provides checks that confirm that the results are robust to variations of these parameters.

\begin{table}[ht]
\centering
\resizebox{\linewidth}{!}{
\begin{tabular}{@{}p{5cm}p{9cm}p{5cm}p{5cm}@{}}
\toprule
Hyperparameter &
  Role &
  Default value &
  Estimation method and kernel (if applicable) \\ \midrule
$\delta$ &
  Determines the ratio of the variance of the error terms between both sides of the \citet{deming_statistical_1943} regression. &
  Depends on the wealth bin (see Section~\ref{sec:deming-estimation-procedure}, as well as line ``Bandwidth for estimating measurement error variance'' in this table) &
   \\ \midrule
Bandwidth for mean income over time &
  The average amount of income that is received by each wealth bin can sensibly vary over time with the business cycle. This value is smoothed over time to focus on the long-run dynamics instead. &
  $12.5$ years &
  locally constant regression with rectangular kernel \\ \midrule
Bandwidth for income variance &
  The estimate of the variance of income conditional on wealth is smoothed across wealth, both to get a more stable estimate, and to be able to estimate the derivative with respect to wealth. &
  1 time average national income (asinh scale) &
  locally constant regression with rectangular kernel \\ \midrule
Bandwidth for the derivative of the log density of wealth &
  This bandwidth parameter is used to calculate the derivative of the log density of wealth based on the histogram estimate of the density. &
  $1.5$ times average national income (asinh scale) &
  locally linear regression with rectangular kernel \\ \midrule
Bandwidth for the inverse of the survival function of wealth &
  The inverse of the survival function of wealth ($F(w)/f(w)$) is computed as an intermediary step to obtain the left-hand side of equation~(\ref{eq:estimation-complete}). This estimate is smoothed over time using this bandwidth parameter. &
  2 years &
  locally constant regression with rectangular kernel \\ \midrule
Bandwidth for auxiliary effects &
  The impact of auxiliary effects (demography, etc.) on the CDF of wealth are microsimulated for every year, and then are smoothed over time using this bandwidth. &
  10 years &
  locally constant regression with rectangular kernel \\ \midrule
Bandwidth for estimating measurement error variance &
  To estimate the hyperparameter $\delta$ in the \citet{deming_statistical_1943} regression, I estimate the variance of the measurement error for both variables in the equation, using the variance of the residual between observed values and smoothed version using this bandwidth parameter. &
  5 years &
  locally constant regression with rectangular kernel \\ \midrule
Bandwidth for derivative of the diffusion &
  To adjust the value of the drift, I need to estimate the derivative of the diffusion with respect to wealth, using this bandwidth parameter. &
  $0.5$ time average national income (asinh scale) &
  locally linear regression with rectangular kernel \\ \bottomrule
\end{tabular}
}
\caption{List and Default Values for Hyperparameters}\label{tab:list-hyperparameters}
\end{table}

Table~\ref{tab:list-hyperparameters} describes the hyperparameters used in the estimation. Besides $\delta$, all of them are bandwidth parameters used for smoothing out noise or short-term variations in the data. I apply smoothing using locally constant regressions with a rectangular kernel (i.e., moving averages). When I need to estimate derivatives, I use the coefficient from a locally linear regression with a rectangular kernel as well.

\begin{figure}[ht]
\captionsetup[subfigure]{justification=centering}
\begin{center}
    \begin{subfigure}[t]{0.499\textwidth}
        \centering
        \includegraphics[width=\textwidth]{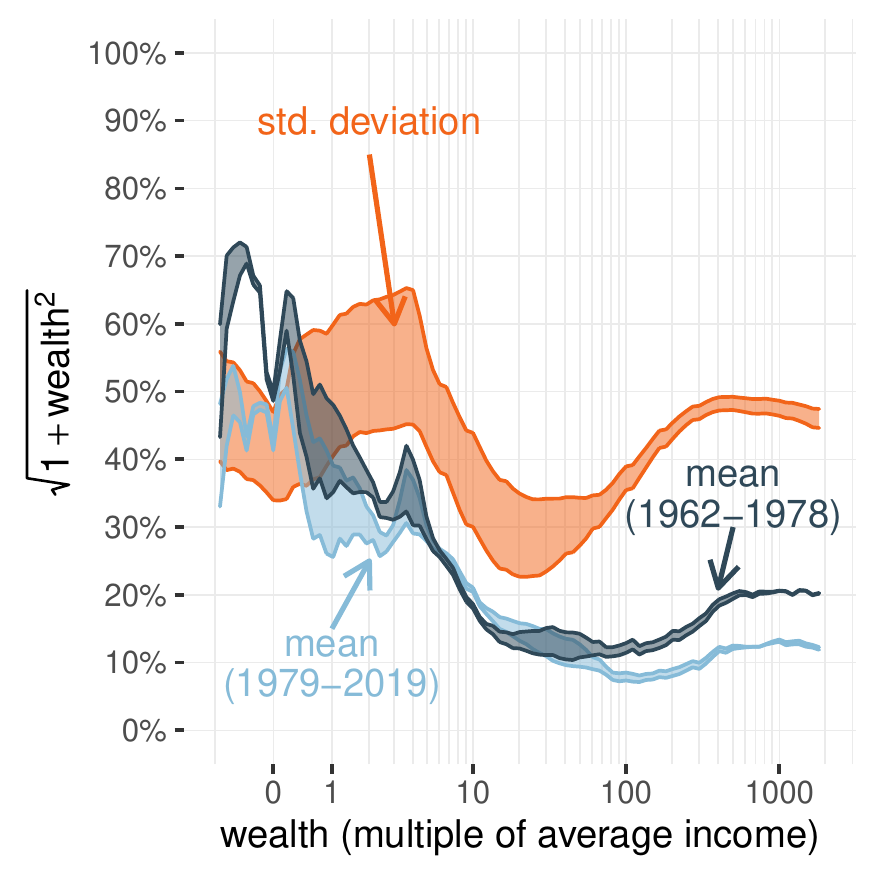}
        \caption{$\delta$ Hyperparameter in Deming Regression}
        \label{fig:robustness-delta}
        \begin{minipage}{0.9\linewidth}
            \footnotesize \textit{Note:} Full range of estimates obtained by varying the hyperparameter $\delta$ in the \citet{deming_statistical_1943} regression by factors $\alpha \in \{1/5, 1/2, 1, 2, 5\}$. See text for details.
        \end{minipage}
    \end{subfigure}%
    \begin{subfigure}[t]{0.499\textwidth}
        \centering
        \includegraphics[width=\textwidth]{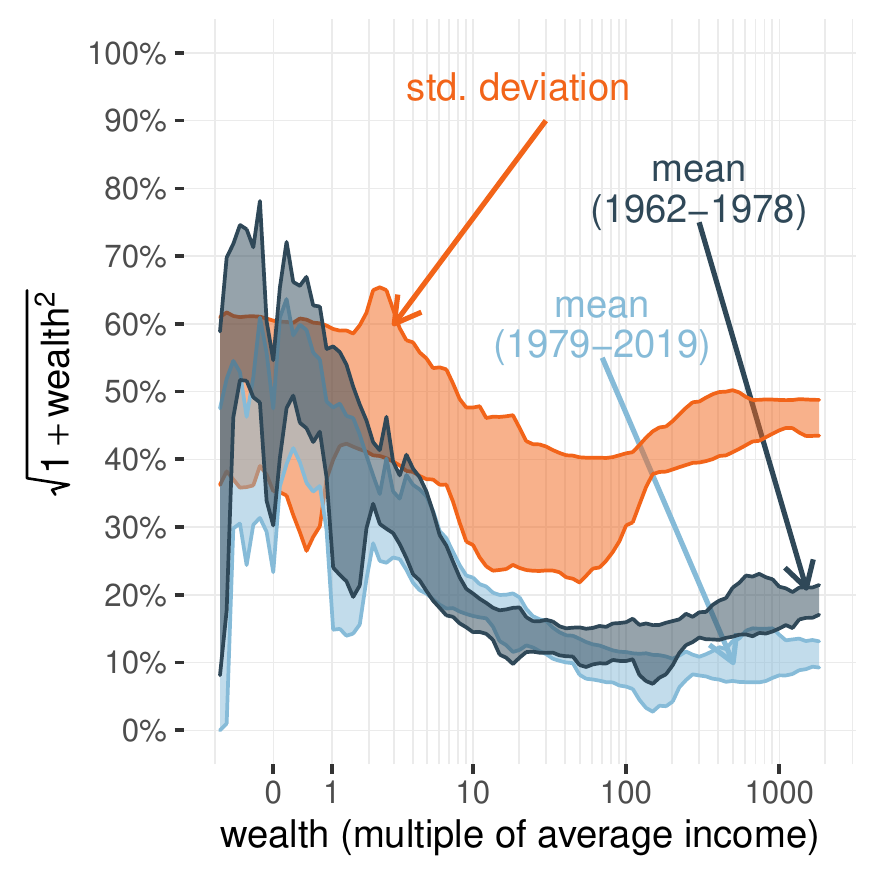}
        \caption{Bandwidth Hyperparameters}
        \label{fig:robustness-bw}
        \begin{minipage}{0.9\linewidth}
            \footnotesize \textit{Note:} 95\% interval range of the estimates obtained by varying the bandwidth parameters separately at random by factors $\alpha \in \{1/5, 1/4, 1/3, 1/2, 1, 2, 3, 4, 5\}$. See text for details.
        \end{minipage}
    \end{subfigure}
\caption{Robustness Checks}
\label{fig:robustness-propensity-consume}
\end{center}
\end{figure}

In Figure~\ref{fig:robustness-propensity-consume}, I show how the results change when I vary the hyperparameters: I find that the key findings are robust. Figure~\ref{fig:robustness-delta} focuses on $\delta$, the hyperparameter used in the \citet{deming_statistical_1943} regression. By default, this parameter is automatically determined using a rough estimate of the measurement error. But I can also modify this parameter directly and see how the results are affected. I adjust $\delta$ by a factor $\alpha \in \{1/5, 1/2, 1, 2, 5\}$, and then report the range of estimates we obtain using these different values. This range remains fairly narrow, especially for the drift, and at the top of the distribution. The limited impact of $\delta$ has a simple explanation. The parameter $\delta$ determines to what extent these residuals are attributed to the left-hand side or the right-hand side of equation~(\ref{eq:estimation-complete}). But the regression we run explains most of the data's variance, so the residuals remain small in all cases. As a result, where we attribute them has a limited impact.

Figure~\ref{fig:robustness-bw} does a similar exercise, this time focusing on the bandwidth parameters (with $\delta$ now being automatically calculated, using these bandwidth parameters). I perform 100 estimations of the parameters, where I separately vary each bandwidth parameter at random by a coefficient $\alpha \in \{1/5, 1/4, 1/3, 1/2, 1, 2, 3, 4, 5\}$.\footnote{I make an exception for the bandwidth parameter used to calculate the derivative of the diffusion. I only vary that parameter by a factor $\alpha \in \{1, 2, 3, 4, 5\}$, because undersmoothing in that part of the estimation leads to unrealistically noisy estimates.} Then, I show the range of final estimates where 95\% of estimates fall. These ranges are wider than the ones for the $\delta$ parameter alone, but the main results remain valid.

\subsection{Wealth Distribution with a Wealth Tax and a Lump-sum Rebate}\label{sec:lump-sum-rebate}

\begin{figure}[ht]
    \centering
    \begin{minipage}{0.7\linewidth}
    \includegraphics[width=\linewidth]{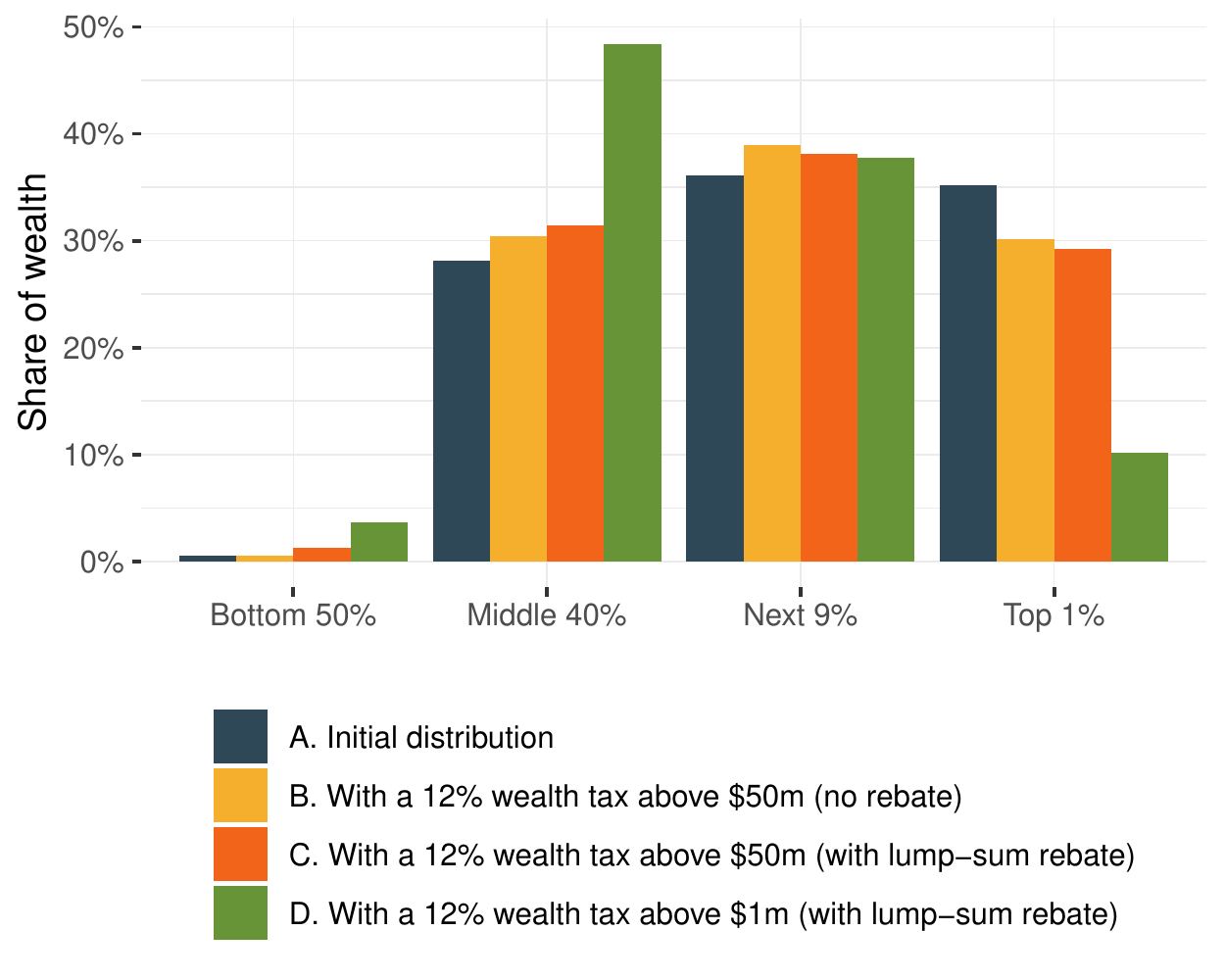}
    
    {\footnotesize \textit{Note:} This figure assumes benchmark parameters for behavioral responses (elasticity of tax avoidance $\epsilon = 1$ and elasticity of consumption $\eta = 1$.) For each bracket, the first bar shows the initial distribution of wealth, which corresponds to the year 2019 of the \gls{dina} data from \citet{saez_rise_2020}. The second bar corresponds to the long-run distribution, with a wealth tax of 12\% above \$50m (which maximizes revenue in the benchmark calibration.) The third bar shows the same distribution when the government rebates the tax revenue lump sum every year. The fourth show the same effect for a tax with a much larger base (all wealth above \$1m).}
    
    \caption{Long-run Changes of the Wealth Distribution with a Wealth Tax and and Lump-sum Rebate}
    \label{fig:distribution-wealth-tax-rebate}
    \end{minipage}
\end{figure}

This section provides some details on how a wealth tax would impact the distribution of wealth in the long run. For this exercise, I focus on a revenue-maximizing linear tax above \$50m, which in the benchmark calibration corresponds to a linear tax rate of 12\%. The tax government rebates the tax lump sum every year, so the dynamics of wealth are now:
\begin{equation*}
\dif w_{it} = (\mu(w_{it}) - \tau(w_{it}) + \bar{\tau})\dif t + \sigma_t(w_{it}) \dif B_{it}
\end{equation*}
where $\bar{\tau}$ is the average tax revenue. The steady-state wealth distribution is now characterized by:
\begin{equation*}
f^*(w) \propto \theta(w)[\lambda(w)]^{\bar{\tau}}f(w) \qquad \text{where} \qquad \lambda(w) = \exp\left\{2\int_{-\infty}^w \frac{1}{\sigma^2(s)}\,\dif s\right\}
\end{equation*}
and where $\theta(w)$ is defined similarly as in the no-rebate case (and, in the current example, accounts for behavioral responses.) The value $\bar{\tau}$ must satisfy the equation:
\begin{equation*}
\bar{\tau} - \frac{\int_{-\infty}^{+\infty}\tau(w)\theta(w)[\lambda(w)]^{\bar{\tau}}f(w)\,\dif w}{\int_{-\infty}^{+\infty}\theta(w)[\lambda(w)]^{\bar{\tau}}f(w)\,\dif w} = 0
\end{equation*}
which can be solved numerically. Figure~\ref{fig:distribution-wealth-tax-rebate} shows the results from this exercise. It reports the wealth shares of four brackets (the bottom 50\%, the middle 40\%, the next 9\%, and the top 1\%) and four scenarios (no wealth tax, wealth tax without rebate, wealth tax with rebate, broader wealth tax with rebate). The wealth distribution remains highly unequal in all cases: the bottom 50\% has virtually no wealth. Still, the wealth tax operates a significant amount of redistribution. The wealth tax alone primarily redistributes away from the top 1\% to the benefit of the middle 40\% and the next 9\%. But it does not affect the bottom 50\%. With the rebate, the bottom 50\% more than doubles its share, but from a very low baseline ($0.5\%$, to $1.3\%$). The tax with a broader base (all wealth above \$1m) increases it a bit more, to $3.6\%$.

\printbibliography[title={Appendix References}]

\end{document}